\renewenvironment{definition}{\begin{olddefinition}\rm}
                              {\end{olddefinition}}
\title{Solving Partial Dominating Set and Related Problems Using Twin-Width}
\author{Jakub Balab\' an}{Faculty of Informatics, Masaryk University, Brno, Czechia}{jakbal@mail.muni.cz}{https://orcid.org/0000-0002-2475-8938}{Brno Ph.D. Talent Scholarship Holder – Funded by the Brno City Municipality}
\author{Daniel Mock}{Dept.\ of Computer Science, RWTH Aachen
University, Aachen, Germany}{mock@cs.rwth-aachen.de}{https://orcid.org/0000-0002-0011-6754}{}
\author{Peter Rossmanith}{Dept.\ of Computer Science, RWTH Aachen
University, Aachen, Germany}{rossmani@cs.rwth-aachen.de}{https://orcid.org/0000-0003-0177-8028}{}
\authorrunning{Jakub Balab\' an, Daniel Mock, Peter Rossmanith} 
\keywords{Partial Dominating Set, Partial Vertex Cover, meta-algorithm, counting logic, twin-width} 
\newcommand{\sep}{\colon}
\newcommand{\seq}{\subseteq}
\newcommand{\bb}{\mathbb}
\newcommand{\ca}{\mathcal}
\newcommand{\N}{\mathbb{N}}
\DeclareMathOperator{\range}{range}
\newcommand{\decisionproblem}[4]{
\medskip
    \noindent\textsc{#1} \\
    \noindent\textit{Input:} #2 \\
    \noindent\textit{Question:} #3\\
    \noindent\textit{Parameter:} #4
\smallskip
}
\definecolor{Tolpurple}{HTML}{AA3377}
\begin{document}

\maketitle

\begin{abstract}
Partial vertex cover and partial dominating set are two well-investigated
optimization problems. While they are $\rm W[1]$-hard on general graphs,
they have been shown to be fixed-parameter tractable on many sparse graph
classes, including nowhere-dense classes. In this paper, we demonstrate
that these problems are also fixed-parameter tractable with respect to
the twin-width of a graph. Indeed, we establish a more general result:
every graph property that can be expressed by a logical formula of the form
$\phi\equiv\exists x_1\cdots \exists x_k \sum_{\alpha \in I} \#y\,\psi_\alpha(x_1,\ldots,x_k,y)\ge t$,
where $\psi_\alpha$ is a quantifier-free formula for each $\alpha \in I$,
$t$ is an arbitrary number,
and $\#y$ is a counting quantifier, can be evaluated in time $f(d,k)n$,
where $n$ is the number of vertices and $d$ is the width of a contraction
sequence that is part of the input. 
In addition to the 
aforementioned problems, this includes also connected partial dominating set and
 independent partial dominating set.
\end{abstract}

\section{Introduction}\label{sec:intro}

{\sc Vertex Cover}, {\sc Independent Set}, and {\sc Dominating
Set} are well-known graph problems that have driven significant
advancements in parameterized complexity theory. Despite being
NP-complete, these problems motivated the development of new
approaches, as the difficulty of finding solutions of size $k$ appears
to increase significantly from one problem to the
next~\cite{DowneyF1999}. On general graphs, {\sc Vertex Cover} has
been
found to be fixed-parameter tractable, while {\sc Independent Set} is
$\rm W[1]$-complete and {\sc Dominating Set} is $\rm W[2]$-complete.
Consequently, researchers have designed faster algorithms for solving
{\sc Vertex Cover} on general graphs~\cite{ChenKX2010,
BalasubramanianFR1998, ChandranG2005, ChenKJ2001, ChenLJ2000,
DowneyF1992, NiedermeierR1999, NiedermeierR2003, StegeRHH2002}, while
identifying more and more general graph classes that admit fixed-parameter
algorithms for the other two problems. This progression started with
planar graphs~\cite{AlberBFKN2002} and led to nowhere-dense graph classes
via intermediate results~\cite{DemaineFHT2005, EllisFF2004, PhilipRS2009,
DawarK2009}.  All three problems can be expressed in first-order logic,
and significant advances have been made in meta-algorithm design
to solve the first-order model-checking problem. This progression
begins with bounded-degree graphs, continues through bounded genus,
minor-closed, bounded expansion, nowhere-dense, and finally monadically
stable classes~\cite{Seese1996, FrickG2001, FlumG2001, DawarGK2007,
GroheKS2014, DreierEMMPT2024}.

Twin-width is a recently introduced width measure that generalizes
clique-width~\cite{BonnetKTW2020}.
Classes of bounded twin-width do not contain
bounded degree graph classes and are not necessarily monadically
stable. They are therefore incomparable to the sparsity hierarchy.
Bonnet et al. have developed a parameterized algorithm for first-order
model-checking on classes of bounded twin-width~\cite{BonnetKTW2020}.
Please note that, unlike tree-width, there is no efficient way known
to compute or even approximate twin-width. Moreover, most twin-width
based algorithms, including first-order model-checking, require as
part of the input not only the graph but also a contraction sequence.
The contraction sequence proves that the graph has small twin-width and
guides an algorithm. This guidance can be compared to that of dynamic
programming on tree decompositions. An optimal tree decomposition
can, however, be relatively efficiently computed and even faster
approximated. In the world of sparsity algorithms operating in classes
of bounded expansion or nowhere-dense classes, the situation is even
better: no underlying structure is needed; the algorithm works correctly
on every graph but becomes slow when applied outside its scope.

While this represents a significant body of work on {\sc Vertex
Cover}, {\sc Independent Set}, {\sc Dominating Set}, and many other
first-order expressible
problems, the story does not end here. In {\sc Vertex Cover}, you
ask for $k$
nodes that cover all edges, and in {\sc Dominating Set}, you seek $k$
nodes
that dominate all other nodes. Generalizations of these problems are the
partial versions: given $k$ and $t$, are there $k$ nodes that cover $t$
edges or are there $k$ nodes that dominate $t$ nodes? These problems
are known as {\sc Partial Vertex Cover} and {\sc Partial Dominating
Set}; note that in the parameterized setting, $k$ is the parameter. 
As generalizations, they turn out to be harder than their complete
counterparts. For example, while {\sc Vertex Cover} is fixed-parameter
tractable on general graphs, {\sc Partial Vertex Cover} has been
shown to
be $\rm W[1]$-hard~\cite{JiongNW2007}, and while {\sc Dominating Set}
is
fixed-parameter tractable on degenerate graphs, {\sc Partial
Dominating Set}
is $\rm W[1]$-hard~\cite{GolovachV2008}.
However, it is fixed-parameter tractable with
respect to~$t$ even on general graphs~\cite{KneisMR2007}.
Another example of how the partial variant can be harder are
$c$-closed graphs, where non-adjacent vertices share at most $c$
neighbors.  {\sc Dominating Set} is fixed-parameter tractable for
bounded $c$~\cite{DBLP:journals/siamdm/KoanaKS22}, while {\sc Partial Dominating Set} becomes a hard problem
even for~$c=2$~\cite{KaneshMRSS23}.

Another sequence of papers has demonstrated that \textsc{Partial Dominating
Set} can be solved in time $f(k)n^{O(1)}$ for larger and larger graph
classes. Amini, Fomin, and Saurabh developed an algorithm with a
running time of $O(f(k) \cdot t \cdot n^{C_H})$ for the class of
$H$-minor-free graphs, where $C_H$ is a constant dependent only
on~$H$. The graph classes with bounded expansion and nowhere-dense
graph classes introduced in the sparsity project by Nešetřil and
Ossona de Mendez~\cite{NesetrilM2012} contain all $H$-minor-free classes.
For bounded expansion, there exists a meta-algorithm that solves many
problems, including \textsc{Partial Dominating Set} in linear fixed-parameter
tractable time, i.e., in $f(k)n$ steps~\cite{DreierR2021}. Additionally,
for nowhere-dense classes, there is a more restricted meta-algorithm
that is general enough to solve \textsc{Partial Dominating Set} in time $O(f(k)
n^{1+\epsilon})$ for every $\epsilon > 0$~\cite{DreierMR2023}.

The ``simpler'' {\sc Partial Vertex Cover} has not seen as much
attention, but has still been investigated as well; see, for
example~\cite{MkrtchyanP2022,FominGIK2024,MkrtchyanPSW2024}.

\subparagraph*{Our contribution.}

We demonstrate that both {\sc Partial Dominating Set} and {\sc Partial
Vertex Cover} can be solved efficiently on graph classes with bounded
twin-width. To achieve this, we develop two dynamic programming algorithms
that work along a contraction sequence provided as part of the input.
Interestingly, both algorithms share similarities in their complexity,
while usually {\sc Partial Vertex Cover} is the much easier problem.
The algorithms are based on ideas developed
by Bonnet et\ al.\ in order to solve the (complete) {\sc Dominating Set}
problem on bounded twin-width~\cite{tww3}.

Having separate algorithms for related problems is not ideal and it
would be beneficial to have them emerge as special cases within a more
comprehensive framework. Again, the language of logic could facilitate
this. We present a meta-algorithm that solves the model-checking
problem for formulas of the form $\phi \equiv \exists x_1\cdots\exists x_k\# y\,
\psi(x_1,\ldots,x_k,y)\geq t$ where $\psi$ must be quantifier-free (this logic
was considered in~\cite{DreierMR2023}).
Such a formula is true in a graph $G=(V,E)$ if it contains nodes
$v_1,\ldots,v_k$ such that $\psi(v_1,\ldots,v_k,u)$ holds for at least $t$
many nodes $u\in V$. For instance, with
$\psi(x_1,\ldots,x_k,y)=\bigvee_{i=1}^k(E(x_i,y)\lor
x_i=y)$, we can solve {\sc Partial Dominating Set}.
With a quantifier-free
$\psi$, we can express numerous other intriguing problems, including
{\sc Connected Partial Dominating Set} and {\sc Independent Partial
Dominating Set}, 
as well as more exotic ones like ``Are there $k$ nodes that are all
part of triangles and dominate at least $t$ of all other nodes?'' or
``How many vertices do we have to delete in order to get a $2$-independent
set of size~$k$?'' (see \cite{HotaPP2001,AttiyaW2004} for applications).
The time needed to decide $G\models\phi$ is $d^{O(k^2d)}n$, which is
linear FPT time.  ``Linear'' means that the input has to be shorter
than the classical encoding of the underlying graph as an adjacency
list or adjacency matrix.  While graphs of bounded twin-width can have
a quadratic number of edges, they can still be encoded via their
much shorter contraction sequence \cite{BonnetKTW2020, DBLP:conf/icalp/GajarskyPPT22}.
Note that our single-purpose algorithm for {\sc Partial Dominating Set}
achieves a better running time of $(kd)^{\ca O(kd)}n$.

Our meta-algorithm can, in fact, model-check even more general formulas,
namely formulas of the form $\phi \equiv \exists x_1\cdots\exists x_k \sum_{\alpha \in I} \# y\,
\psi_\alpha(x_1,\ldots,x_k,y)\geq t$, where $\psi_\alpha$ is a quantifier-free formula
for each $\alpha \in I$.
Such a formula $\phi$ is true if the \emph{sum} of the numbers of vertices $y$
satisfying the formulas $\psi_\alpha$ is at least $t$.
Using such a formula, we can also express the \textsc{Partial Vertex Cover} problem,
see Example~\ref{example:pvc}.

Finally, let us remark that if we take a power of a graph,
then its twin-width remains small~\cite{BonnetKTW2020}.
This simple observation allows us to tackle the ``distance-$r$''
variants of {\sc Partial Dominating Set} as well by using transductions.

\section{Preliminaries}\label{sec:prelims}
For integers $i$ and $j$, we let $[i,j] := \{n \in \bb N \sep i \le n \le j\}$ and $[i] := [1, i]$.
We study finite, simple and undirected graphs with non-empty vertex set. 

\subparagraph*{Twin-width}

A \emph{trigraph} $G$ is a graph whose edge set is partitioned into a set of \emph{black} and \emph{red} edges. The set of red edges is denoted $R(G)$, and the set of black edges $B(G)$;
standard graphs are viewed as trigraphs with no red edges.
The \emph{red graph} $G^R$ of $G$ is the graph $(V(G), R(G))$
and the \emph{black graph} $G^B$ of $G$ is the graph $(V(G), B(G))$.
Any graph-theoretic notion with a color adjective (black or red) has its standard meaning applied in the corresponding graph.
For example, the \emph{red degree} of $u\in V(G)$ in $G$ is its degree in $G^R$.

Given a trigraph $G$, a \emph{contraction} of two distinct vertices $u,v\in V(G)$ is the operation which produces a new trigraph by (1) removing $u, v$ and adding a new vertex $w$, (2) adding a black edge $wx$ for each $x\in V(G)$ such that $xu$, $xv\in B(G)$, and (3) adding a red edge $wy$ for each $y\in V(G)$ such that $yu\in R(G)$, or $yv\in R(G)$, or $y$ contains only a single black edge to either $v$ or $u$.
A sequence $C = (G = G_1,\ldots,G_n)$ is a \emph{contraction sequence of $G$} if it is a sequence of trigraphs such that $|V(G_n)| = 1$ and for all $i\in [n-1]$, $G_{i+1}$ is obtained from $G_i$ by contracting two vertices.
The \emph{width} of a contraction sequence $C$ is the maximum red degree over all vertices in all trigraphs in $C$.
The \emph{twin-width} of $G$
is the minimum width of any contraction sequence of $G$.
An example of a contraction sequence is provided in Figure~\ref{fig:seq}.

\begin{figure}
\begin{tikzpicture}[line cap=round,line join=round,>=triangle 45,x=1.0cm,y=1.0cm]

\tikzset{
    vertex/.style = {draw, circle, fill=gray, minimum width=4pt, inner sep=0pt}}
    
\begin{scriptsize}
\node[vertex] (a) {};
\node[vertex] (b) [right of = a] {};
\node[vertex] (c) [above of = a] {};
\node[vertex] (d) [right of = c] {};
\node[vertex] (e) [above of = c] {};
\node[vertex] (f) [right of = e] {};
\draw (a)--(b)--(c)--(f)--(e)--(d)--(b);
\draw(a)--(c)--(e);

\node () at (a) [left=2pt] {$A$};
\node () at (c) [left=2pt] {$C$};
\node () at (e) [left=2pt] {$E$};
\node () at (b) [right=2pt] {$B$};
\node () at (d) [right=2pt] {$D$};
\node () at (f) [right=2pt] {$F$};

\node[vertex] (a2) [right = 40pt of b]{};
\node[vertex] (b2) [right of = a2] {};
\node[vertex] (c2) [above of = a2] {};
\node[vertex] (d2) [right of = c2] {};
\node[vertex] (ef) [above of = c2] {};
\draw (a2)--(b2)--(c2)--(ef)--(d2)--(b2);
\draw(a2)--(c2);
\draw[color=red, thick] (d2)--(ef);

\node () at (a2) [left=2pt] {$A$};
\node () at (c2) [left=2pt] {$C$};
\node () at (b2) [right=2pt] {$B$};
\node () at (d2) [right=2pt] {$D$};
\node () at (ef) [right=2pt] {$EF$};

\node[vertex] (ab) [right = 40pt of b2]{};
\node[vertex] (c3) [above of = ab] {};
\node[vertex] (d3) [right of = c3] {};
\node[vertex] (ef3) [above of = c3] {};
\draw (ab)--(c3)--(ef3);
\draw[color=red, thick] (ab)--(d3)--(ef3);

\node () at (ab) [right=2pt] {$AB$};
\node () at (c3) [left=2pt] {$C$};
\node () at (d3) [left=3pt] {$D$};
\node () at (ef3) [right=2pt] {$EF$};

\node[vertex] (cd) [right = 40pt of d3] {};
\node[vertex] (ab4) [below of= cd]{};
\node[vertex] (ef4) [above of = cd] {};
\draw[color=red, thick] (ab4)--(cd)--(ef4);
\node () at (ab4) [left=2pt] {$AB$};
\node () at (cd) [left=2pt] {$CD$};
\node () at (ef4) [left=2pt] {$EF$};

\node[vertex] (cdef) [right = 40pt of cd] {};
\node[vertex] (ab5) [below of= cdef]{};
\draw[color=red, thick] (ab5)--(cdef);
\node () at (ab5) [left=2pt] {$AB$};
\node () at (cdef) [above = 2pt] {$CDEF$};

\node[vertex] (fin) [right = 40pt of ab5] {};
\node () at (fin) [above = 2pt] {$ABCDEF$};

\end{scriptsize}
\end{tikzpicture}
\caption{A contraction sequence of width 2 for the leftmost graph, consisting of $6$ trigraphs.
\label{fig:seq}}
\end{figure}
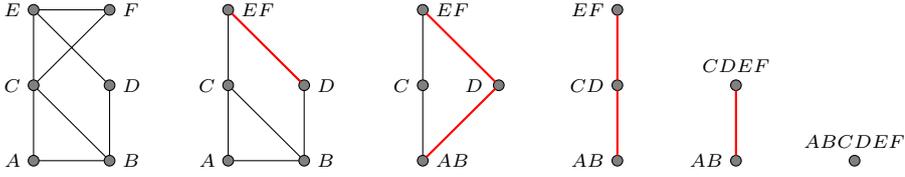

If a contraction sequence $C = (G_1,\ldots,G_n)$ of $G$ is clear from context, then for each $i \in [n]$ and $u \in V(G_i)$, we call the set of vertices of $G$ contracted to $u$ the \emph{bag} of $u$ and we denote it $\beta(u)$. Formally, $\beta(u) = \{u\}$ if $u \in V(G)$, and $\beta(u) = \beta(v) \cup \beta(w)$ if there is $i \in [2,n]$ such that $u \in V(G_i)$, $v, w \in V(G_{i-1})$, and $u$ is created by contracting $v$ and $w$.
Note that if a vertex appears in multiple trigraphs in $C$, then its bag is the same in all of them.
If $T \seq V(G_i)$ for some $i \in [n]$, then we use $\beta(T)$ to denote $\bigcup_{u \in T} \beta(T)$.

\subparagraph*{Basic profiles}

Now we define \emph{basic profiles}, inspired by the $k$-profiles in the \textsc{Dominating Set} algorithm from~\cite{tww3}.
Intuitively, a profile describes a ``type'' of a subsolution we compute during our dynamic algorithms.
For each profile, we will compute a subsolution that is, in some sense, best for the profile.
In the end, we can read the answer to our problem from the profiles associated with the
last trigraph $G_n$ in the contraction sequence.

In the following sections, we will expand the basic profiles with extra information in two different ways:
to \emph{extended profiles}, which we will use to solve \textsc{Partial Dominating Set}
and \textsc{Partial Vertex Cover} in Section~\ref{sec:problems}, and to \emph{virtual profiles},
which we will use for the model-checking in Section~\ref{sec:logic}.
Note that the $k$-profiles in~\cite{tww3} can also be viewed as extensions of our basic profiles.
After defining the basic profiles, we will prove some of their properties,
which we will later generalize to the extended and virtual profiles.

Let us fix a graph $G$, a contraction sequence $(G = G_1 ,\ldots,G_n)$ of width $d$, and an integer $k \in \bb N$. 

\begin{definition}
If $i \in [n]$, then a \emph{basic $i$-profile} is a pair $(T, D)$ such that $T \seq V(G_i)$, $|T| \le k(d+1)$, $T$ is connected in $G_i^R$, $D \seq T$, and $|D| \le k$.
\end{definition}

For example, in the \textsc{Partial Vertex Cover} algorithm, $k$ will be the number of covering vertices, $D$ will be the set such that the covering vertices are in $\beta(D)$, and $T$ is the ``context'' in which the cover is considered, i.e., we are interested in which edges of $G[\beta(T)]$ are covered (this intuition will be formalized in Definition~\ref{def:sol}).

Now we define how a basic $i$-profile can be decomposed into several basic $(i-1)$-profiles, again inspired by~\cite{tww3}.
Later, we will analogously define decompositions for extended and virtual profiles. 

Let $i \in [2, n]$, let $v \in V(G_i)$ be the new vertex in $G_i$, let $u_1, u_2 \in V(G_{i-1})$ be the two vertices contracted into $v$, let $\pi = (T, D)$ be a basic $i$-profile such that $v \in T$, and let $T' = (T\setminus v)\cup \{u_1,u_2\}$.
We say that a set $D' \seq T'$ is \emph{compatible} with $\pi$ if $D \setminus \{v\} = D' \setminus \{u_1, u_2\}$, $|D'| \le k$, and $v \in D$ if and only if $u_1 \in D'$ or $u_2 \in D'$.
Note that the condition $|D'| \le k$ is necessary only to prohibit the case $|D| = k$ and $u_1, u_2 \in D'$.

We say that a set $\{(T_1, D_1), \ldots, (T_m, D_m)\}$ of basic $(i-1)$-profiles is a \emph{$D'$-decomposition} of $\pi$, see Figure~\ref{fig:basic}, if:
\begin{enumerate}
\item The sets $T_1, \ldots, T_m$ are pairwise disjoint.
\item $T' = \bigcup_{j \in [m]} T_j$, $D' = \bigcup_{j \in [m]} D_j$, and $D'$ is compatible with $\pi$.\label{decomp-t}
\item If $x \in T_j$ and $y \in D_\ell$ for $j \ne \ell$, then $xy \notin R(G_{i-1})$.\label{decomp-red}
\end{enumerate}

\begin{figure}
    \hspace{50pt}
    \centering
    \begin{subfigure}[t]{0.25\textwidth}
        \centering
\begin{tikzpicture}[line cap=round,line join=round,>=triangle 45,x=1.0cm,y=1.0cm]
\begin{scope}[every node/.style={draw, circle, minimum width=8pt, inner sep=0pt}]
\node (aa)[fill=pink, line width= 2pt] {};
\node (ab) [right= of aa, fill=blue!60!white] {};
\node (ac) [right= of ab, fill=blue!60!white] {};
\node (ad) [right= of ac] {};

\node (ba) [below = of aa] {};
\node (bb) [right= of ba, fill=blue!60!white, line width= 2pt] {};
\node (bc) [right= of bb, fill=green] {};
\node (bd) [right= of bc, fill=green, line width= 2pt] {};

\node (ca) [below = of ba, fill=yellow, line width= 2pt] {};
\node (cb) [right= of ca, fill=yellow] {};
\node (cc) [right= of cb] {};
\node (cd) [right= of cc, fill=green, line width= 2pt] {};

\draw[color=red, thick] (cd)--(bd)--(ad)--(ac)--(ab)--(bb);
\draw[color=red, thick] (aa)--(ba)--(ca)--(cb);
\draw (ab)--(aa)--(cb)--(cc)--(bb)--(bc);
\draw (bc)--(cc)--(cd);
\draw (bd)--(ac);
\draw[color=red, thick] (bc)--(bd);
\end{scope}

\node[draw,rounded corners,dashed,fit=(bb) (cb)] {} ;
\end{tikzpicture}    \end{subfigure}
    \hfill
    \begin{subfigure}[t]{0.25\textwidth}
        \centering
\begin{tikzpicture}[line cap=round,line join=round,>=triangle 45,x=1.0cm,y=1.0cm]
\begin{scope}[every node/.style={draw, circle, minimum width=8pt, inner sep=0pt}]
\node (aa)[fill=gray!60!white, line width= 2pt] {};
\node (ab) [right= of aa, fill=gray!60!white] {};
\node (ac) [right= of ab, fill=gray!60!white] {};
\node (ad) [right= of ac] {};

\node (ba) [below = of aa] {};
\node (bb) [right= of ba, opacity=0] {};
\node (bc) [right= of bb, fill=gray!60!white] {};
\node (bd) [right= of bc, fill=gray!60!white, line width= 2pt] {};

\node (ca) [below = of ba, fill=gray!60!white, line width= 2pt] {};
\node (cb) [right= of ca, opacity=0] {};
\node (cc) [right= of cb] {};
\node (cd) [right= of cc, fill=gray!60!white, line width= 2pt] {};

\path (cb) -- (bb) coordinate[midway] (midpoint);
\node (new) [minimum width=12pt, line width= 2pt, fill=gray!60!white] at (midpoint) {};

\draw[color=red, thick] (cd)--(bd)--(ad)--(ac)--(ab)--(new);
\draw[color=red, thick] (aa)--(ba)--(ca)--(new);
\draw (ab)--(aa);

\draw (cc)--(new);
\draw (bc)--(cc)--(cd);
\draw (bd)--(ac);
\draw[color=red, thick] (aa)--(new)--(bc)--(bd);
\end{scope}
\node[draw,rounded corners,dashed,fit=(bb) (cb), opacity= 0] {} ;
\end{tikzpicture}    \end{subfigure}
    \hspace{50pt}
    \caption{\textbf{Right:} A basic $i$-profile $\pi = (T, D)$. Vertices of $T$ are colored in gray, and vertices of $D$ have thick boundary. The new vertex of $G_i$ is represented by the bigger disc. \textbf{Left:} A $D'$-decomposition $\{(T_1, D_1), \ldots, (T_4, D_4)\}$ of $\pi$. The sets $T_1, \ldots, T_4$ are represented by the colors blue, green, yellow, and pink. Vertices of $D'$ have thick boundary, and the two vertices to be contracted are inside a dashed rectangle.}
    \label{fig:basic}
\end{figure}
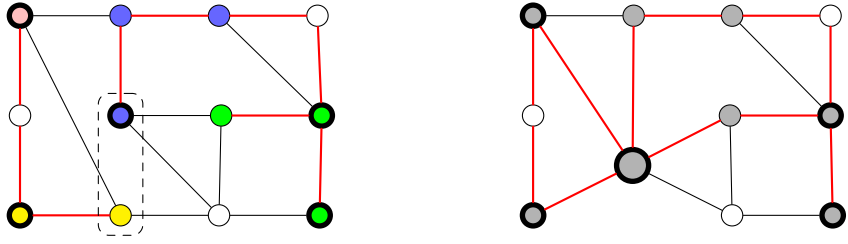

Let us give some intuition behind condition~\ref{decomp-red}.
If $y \in D_\ell$, then there is a dominating (resp. covering) vertex $y' \in \beta(y)$.
If $xy \in R(G_{i-1})$, then we would not know which vertices of $\beta(x)$ are dominated by $y'$ (resp. how many edges in $\{x'y' \sep x'\in\beta(x)\}$ are covered by $y'$).
Hence, the connections between profiles must be homogeneous (either a black edge or a non-edge).

Now let us prove that a small decomposition always exists (and can be found efficiently).

\begin{lemma}\label{lem:basic-decomposition}
If $i \in [2, n]$, $V(G_i) \setminus V(G_{i-1}) = \{v\}$, $V(G_{i-1}) \setminus V(G_i) = \{u_1, u_2\}$, $\pi = (T, D)$ is a basic $i$-profile such that $v \in T$, $T' = (T\setminus v)\cup \{u_1,u_2\}$, and $D' \seq T'$ is compatible with $\pi$, then there is a $D'$-decomposition $\ca D$ of $\pi$ containing at most $d+2$ basic $(i-1)$-profiles.
Moreover, $\ca D$ can be found in time $\ca O(kd^2)$.
\end{lemma}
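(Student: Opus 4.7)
The plan is to take as the decomposition the red-connected components of $G_{i-1}^R[T']$, setting $T_j := C_j$ and $D_j := D' \cap C_j$; this works directly except in one corner case, which I would handle by refining the partition using a carefully chosen auxiliary vertex.

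First I would dispatch the easy conditions. With $T_j := C_j$ for the components $C_1,\ldots,C_p$ of $G_{i-1}^R[T']$, the sets $T_j$ are pairwise disjoint and cover $T'$, each is red-connected in $G_{i-1}^R$, $D' = \bigcup_j D_j$ with $|D_j| \le |D'| \le k$, and condition~(3) holds because there are no red edges between distinct components of $G_{i-1}^R[T']$. What remains is the component-count bound $p \le d+2$ and the size bound $|T_j| \le k(d+1)$.

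For the component count, I would fix a spanning tree $\tau$ of the red-connected graph $G_i^R[T]$, in which at most $d$ edges meet $v$. Inverting the contraction, every edge of $\tau$ not incident to $v$ is still a red edge in $G_{i-1}^R[T']$; each edge $vy$ of $\tau$ becomes one of $u_1y$, $u_2y$ in $G_{i-1}^R$ if one of those is red, and otherwise vanishes (the asymmetric-black case); finally, the edge $u_1u_2$ is either red in $G_{i-1}$ or not. The resulting subgraph of $G_{i-1}^R[T']$ is a spanning forest with $|T|-1-L+\epsilon$ edges, where $L\le d$ counts vanished edges and $\epsilon\in\{0,1\}$ records $u_1u_2$, hence with $2+L-\epsilon \le d+2$ components; since $G_{i-1}^R[T']$ can only be coarser, $p \le d+2$.

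The main obstacle is enforcing $|T_j| \le k(d+1)$. If $C_j$ omits at least one of $u_1,u_2$, then $(C_j\setminus\{u_1,u_2\})\cup\{v\}$ is a red-connected subset of $T$ of size $|C_j|$, so $|C_j| \le |T| \le k(d+1)$. If $C_j$ contains both $u_1,u_2$, the same set has size $|C_j|-1$, giving only $|C_j| \le |T|+1$, which is problematic exactly when $|T|=k(d+1)$ and $C_j=T'$ (so $p=1$). In this single corner case, the closed red-neighborhood of $D'$ in $T'$ has size at most $|D'|(d+1) \le k(d+1) < |T'|$, so some $x \in T'$ has no red neighbor in $D'$ and is not itself in $D'$; I would then discard the trivial one-component partition and take instead $(\{x\},\emptyset)$ together with $(A_\ell, D'\cap A_\ell)$ for the red-components $A_1,\ldots,A_q$ of $G_{i-1}^R[T'\setminus\{x\}]$. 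Since $q \le \deg_R(x) \le d$, this refinement uses at most $d+1$ profiles of size at most $k(d+1)$; condition~(3) is preserved because $x$ has no red neighbor in $D'$ and the $A_\ell$ are components of $G_{i-1}^R[T'\setminus\{x\}]$.

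Finally, the time bound $\ca O(kd^2)$ follows because $G_{i-1}^R[T']$ has $\ca O(kd)$ vertices and $\ca O(kd^2)$ red edges, on which BFS computes the components in $\ca O(kd^2)$ time, while the auxiliary vertex $x$ is located by marking the red-neighbors of all $D'$-vertices and then scanning $T'$ in $\ca O(kd)$ time.
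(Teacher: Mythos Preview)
Your proof is correct and follows essentially the same approach as the paper: take the red-connected components of $G_{i-1}^R[T']$, and in the single overflow case ($|T|=k(d+1)$, $p=1$) split off a vertex $x\in T'\setminus D'$ with no red neighbour in $D'$ and use the red-components of $T'\setminus\{x\}$ together with $(\{x\},\emptyset)$; the running-time analysis is also the same.

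Two minor remarks. First, your spanning-tree edge-count for $p\le d+2$ works but is more elaborate than needed; the paper simply observes that, since $T$ is red-connected in $G_i^R$, every red-component of $T'$ must contain $u_1$, $u_2$, or a red-neighbour of $v$ in $G_i$, which gives the bound in one line. Second, when $C_j$ contains \emph{neither} $u_1$ nor $u_2$, the set $(C_j\setminus\{u_1,u_2\})\cup\{v\}=C_j\cup\{v\}$ has size $|C_j|+1$, not $|C_j|$; this does not affect your conclusion (you still get $|C_j|+1\le|T|$), but the size claim as stated is off by one in that sub-case.
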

\begin{proof}
Let $\{T_1, \ldots, T_m\}$ be the connected components of $T'$ in $G_{i-1}^R$ and let $D_j = T_j \cap D'$ for each $j \in [m]$.
Observe that $(T_j, D_j)$ is a basic $(i-1)$-profile if and only if $|T_j| \le k(d+1)$.
Hence, if $|T_j| \le k(d+1)$ for each $j \in [m]$, then $\{(T_1, D_1), \ldots, (T_m, D_m)\}$ is a $D'$-decomposition of $\pi$; condition~\ref{decomp-red} is satisfied because there are no red edges between distinct $T_j$ and $T_\ell$.
Let $j \in [m]$ and observe that since $T$ is connected in $G_i^R$, there is a vertex $u \in T_j$ such that either $uv \in R(G_i)$ or $u \in \{u_1, u_2\}$.
Since $v$ has red degree at most $d$ in $G_i$, we have $m \le d+2$ as required.

Now suppose that $|T_j| > k(d+1)$ for some $j \in [m]$.
Since $|T| \le k(d+1)$ and $T_j \seq T'$, we deduce that $|T_j| = k(d+1) +1$, $m = j = 1$, and $(T_j, D_j) = (T', D')$.
Since each vertex in $D'$ has red degree at most $d$ in $G_{i-1}$ and $|D'| \le k$, there is a vertex $w \in T' \setminus D'$ such that $wx \notin R(G_{i-1})$ for each $x \in D'$.
Let $\{T_1', \ldots, T_p'\}$ be the connected components of $T' \setminus \{w\}$ in $G_{i-1}^R$ and let $D_j' = T_j' \cap D'$ for each $j \in [p]$.
Now observe that $\{(T_1', D_1'), \ldots, (T_p', D_p')\} \cup \{(\{w\}, \emptyset)\}$ is a $D'$-decomposition of $\pi$.
Since $T_j'$ contains a vertex adjacent to $w$ in $G_{i-1}^R$ for each $j \in [p]$, we have $p+1 \le d+2$ as required.

Observe that the proof naturally translates into an algorithm finding the desired decomposition and that all the operations can be performed in time linear in $|T'| + |R(G_{i-1}[T'])|$.
Since $|T'| \le k(d+1) + 1$ and each vertex of $T'$ has red degree at most $d$ in $G_{i-1}$, there are $\ca O(kd^2)$ red edges in $G_{i-1}[T']$, and so the stated running time is achieved.
\end{proof}

Now we prove that the number of basic $i$-profiles containing any fixed vertex of $G_i$ is small. 

\begin{lemma}\label{lem:number-of-basic}
If $i \in [n]$ and $v \in V(G_i)$, then the number of basic $i$-profiles $(T, D)$ such that $v \in T$ is in $d^{\ca O(kd)}$.
Moreover, such profiles can be enumerated in time $d^{\ca O(kd)}$.
\end{lemma}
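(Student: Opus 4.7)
The plan is to decompose the counting into two independent choices: first the set $T$ and then the set $D \subseteq T$. Fix a vertex $v \in V(G_i)$. Since a basic $i$-profile $(T,D)$ containing $v$ requires $T$ to be connected in $G_i^R$, contain $v$, and have size at most $k(d+1)$, the main task is to bound the number of such candidate sets $T$. Here I would invoke the width bound: every vertex of $G_i$ has red degree at most $d$, so $G_i^R$ has maximum degree at most $d$. I would then apply the standard fact that in a graph of maximum degree $\Delta$, the number of connected vertex subsets of size exactly $s$ containing a prescribed vertex is at most $(e\Delta)^s$ (a classical bound, provable e.g.\ by a DFS/branching argument or by Bollob\'as–Knuth-style counting of rooted subtrees). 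Summing $s$ from $1$ to $k(d+1)$ gives at most $d^{\ca O(kd)}$ choices for $T$.

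For the second coordinate, given a fixed $T$ with $|T|\le k(d+1)$, the number of $D\seq T$ with $|D|\le k$ is at most $2^{|T|}\le 2^{k(d+1)}$, which is absorbed by $d^{\ca O(kd)}$ for $d\ge 2$; the boundary case $d\le 1$ can be handled separately since then $G_i^R$ has only trivial components. Multiplying the two bounds gives the desired $d^{\ca O(kd)}$.

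For the enumeration part, I would describe a concrete procedure whose total running time matches the count (up to polynomial factors that are absorbed). The $T$'s can be generated by a branching search rooted at $v$: maintain a partial set $S$ containing $v$ and a boundary $B$ of red-neighbors of $S$ not yet decided, and branch on whether each boundary vertex is included, pruning whenever $|S|>k(d+1)$. Since the red degree is at most $d$, the search tree has size $d^{\ca O(kd)}$, and each connected candidate $T$ is produced in polynomial time per node. For each emitted $T$, the subsets $D\seq T$ with $|D|\le k$ are enumerated by straightforward iteration over all size-at-most-$k$ subsets of $T$, costing $d^{\ca O(kd)}$ in total across all $T$.

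The only place that requires care is the connected-subgraph bound; everything else is bookkeeping. I expect the main obstacle is stating the connected-subgraph counting lemma crisply enough that the reader sees why the polynomial-delay enumeration follows immediately, but since this is a folklore result over bounded-degree graphs, a brief reference or one-line DFS sketch should suffice.
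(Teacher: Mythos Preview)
Your proposal is correct and follows essentially the same approach as the paper: bound the number of red-connected sets $T$ of size at most $k(d+1)$ containing $v$ using the bounded red degree, then multiply by the at most $2^{k(d+1)}$ choices of $D\subseteq T$. The only difference is cosmetic: the paper invokes a ready-made bound from~\cite{tww3} (their Corollary~3.2, giving at most $d^{2\ell-2}+1$ such sets of size $\le\ell$) rather than the folklore $(e\Delta)^s$ estimate, and it defers the enumeration claim to that same reference instead of sketching the branching search explicitly as you do.
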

\begin{proof}
Let $(T, D)$ be a basic $i$-profile such that $v \in T$.
By Corollary 3.2. from \cite{tww3}, there are at most $d^{2\ell-2}+1$ red-connected subsets of $V(G_i)$ of size at most $\ell$ containing $v$. 
Since $|T| \le k(d+1)$, there are $d^{2kd+2k-2}+1$ possible choices of $T$. Since $D \seq T$, there are $2^{k(d+1)}$ possible choices of $D$. Hence, there are $d^{\ca O(kd)}$ basic $i$-profiles containing $v$.
The claim about enumeration follows from Corollary 3.2. as well.
\end{proof}

\section{Partial Dominating Set and Vertex Cover}\label{sec:problems}

In this section, we show that the following parameterized problems are fixed-parameter tractable.
When a graph $G$ is clear from context, we denote the closed neighborhood of $S \seq V(G)$ in $G$ by $N[S]$, i.e., $N[S] = S \cup \{u \in V(G) \sep u$ has a neighbor in $S\}$.

\decisionproblem{Partial Dominating Set using twin-width}{A graph $G$, a contraction sequence for $G$ of width $d$, integers $k$ and $t$}{Is there a set $S \seq V(G)$ of size $k$ such that $|N[S]| \ge t$?}{$k + d$}

\decisionproblem{Partial Vertex Cover using twin-width}{A graph $G$, a contraction sequence for $G$ of width $d$, integers $k$ and $t$}{Is there a set $S \seq V(G)$ of size $k$ such that
at least $t$ edges have an endpoint in $S$?}{$k + d$}

Henceforth, we refer to these problems simply as \textsc{Partial Dominating Set} and \textsc{Partial Vertex Cover}.

\subsection{Extended Profiles}\label{sub:extended-profiles}

For both of these problems, we will need a notion of a profile, obtained by extending basic profiles introduced in Section~\ref{sec:prelims}.
Let us fix a graph $G$, a contraction sequence $(G = G_1 ,\ldots,G_n)$ of width $d$, and an integer $k$.

\begin{definition}\label{def:extended-profile}
Let $i \in [n]$ and let $\pi = (T, D)$ be basic $i$-profile.
An \emph{extended $i$-profile} is a tuple $\pi' = (T, D, M, f)$ such that $M \seq T$ and $f\colon T \rightarrow [0, k]$ is a function such that $\sum_{u\in T} f(u) \le k$ and for each $u \in T$,  $f(u) \le |\beta(u)|$, and $D = \{u \in T \sep f(u) \ne 0\}$.
\end{definition}

Informally, the function $f$ specifies the distribution of the dominating (resp. covering) vertices across
$D$, and the set $M$ specifies which bags we want to dominate vertices in.
Note that for \textsc{Partial Vertex Cover}, $M$ will not be used, i.e., $M = \emptyset$ will always hold (this allows us to use the same definition of a profile for both problems).
The following definition formalizes this idea.

\begin{definition}\label{def:sol}
Let $i \in [n]$ and let $\pi = (T, D, M, f)$ be an extended $i$-profile. A set $S \seq \beta(T)$ is a \emph{solution} of $\pi$ if for every $u \in T$, $f(u) = |S \cap \beta(u)|$.
We define the \emph{dominating-set-value} of $S$ in $\pi$ as $|N[S] \cap \beta(M)|$
and the \emph{vertex-cover-value} of $S$ in $\pi$ as $|\{uv \in E(G) \sep u \in S, v \in \beta(T)\}|$.
\end{definition}

Note that each set $S \seq V(G)$ can be a solution of more than one extended $i$-profile for each $i \in [n]$.
Indeed, $S$ determines only $D$ and $f$, whereas there are more choices of $T$ and $M$ ($T$ can be any small red-connected set containing $D$).
However, it will always be clear from context which profile a solution belongs to, and so we will often simply say, e.g., \emph{the dominating-set-value} of $S$ (omitting ``in $\pi$'').

Let us give a quick informal overview of how the profiles will be used to solve \textsc{Partial Dominating Set}.
During the algorithm, we proceed through each time $i$ ranging from $1$ to $n$ and compute for each extended $i$-profile $\pi$ its optimal value; that is, the maximum value $t$ such that a solution of $\pi$ dominates $t$ vertices from $\beta(M)$.
To compute the optimal value, we need to decompose the profile: each decomposition is a small set of extended $(i-1)$-profiles, see Section~\ref{subsec:extended-decompositions} for a formal definition.
Crucially, each solution of $\pi$ is described by some decomposition, and the best solution described by a decomposition $\ca D = \{\pi_1, \ldots, \pi_m\}$ can be computed from the optimal solutions of the profiles in $\ca D$ (which were computed in the previous iteration).
Hence, the optimal solution of $\pi$ can be computed by trying all possible decompositions.
More precisely, we do not try \emph{all} decompositions but only one decomposition of each ``type''.
Finally, to determine how many vertices can be dominated by a set of size $k$ in $G$, it suffices to examine the optimal value of one particular $n$-profile $\pi$ (note that $G_n$ consists of a single vertex, and so it is easy to define $\pi$ explicitly).  
We get this value at the end of the run of Algorithm \ref{alg:dom-set}.

Let us now show that the number of extended $i$-profiles containing any fixed vertex of $G_i$ is bounded.

\begin{lemma}\label{lem:number-of-extended}
If $i \in [n]$ and $v \in V(G_i)$, then the number of extended $i$-profiles $(T, D, M, f)$ such that $v \in T$ is in $2^{\ca O(kd\log (kd))}$.
Moreover, such profiles can be enumerated in time $2^{\ca O(kd\log (kd))}$.
\end{lemma}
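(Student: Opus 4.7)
The plan is to build the bound on the number of extended $i$-profiles by combining Lemma~\ref{lem:number-of-basic} with simple counting bounds on the additional components $M$ and $f$. Recall that an extended $i$-profile $(T,D,M,f)$ is obtained from a basic $i$-profile $(T,D)$ by adding a subset $M \seq T$ and a function $f\sep T \to [0,k]$, with the constraint $D = \{u \in T \sep f(u) \ne 0\}$. In particular, $D$ is fully determined by $f$, so we only need to count choices of $T$, $M$, and $f$.

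First, I would enumerate the possible sets $T$ containing $v$. Since $T$ must be red-connected in $G_i$ and $|T| \le k(d+1)$, Lemma~\ref{lem:number-of-basic} (through the appeal to Corollary 3.2 of~\cite{tww3}) gives that the number of such sets is $d^{\ca O(kd)}$, and they can be enumerated within the same time bound. Next, for each fixed $T$, the set $M \seq T$ ranges over at most $2^{|T|} \le 2^{k(d+1)}$ values, all of which are easily listed. Finally, the function $f\sep T \to [0,k]$ has at most $(k+1)^{|T|} \le (k+1)^{k(d+1)}$ possible values, and from $f$ the set $D$ is automatically recovered. Any generated tuple for which $\sum_{u \in T} f(u) > k$ or $f(u) > |\beta(u)|$ for some $u$ is discarded; these checks take polynomial time in $|T|$ per candidate.

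Multiplying the three bounds gives a total of
\[
 d^{\ca O(kd)} \cdot 2^{k(d+1)} \cdot (k+1)^{k(d+1)}
\]
extended $i$-profiles containing $v$. Taking logarithms, this is
\[
 \ca O(kd \log d) + \ca O(kd) + \ca O(kd \log k) \;=\; \ca O(kd \log(kd)),
\]
so the count is $2^{\ca O(kd \log(kd))}$, matching the claimed bound. The same procedure produces the enumeration in time $2^{\ca O(kd \log(kd))}$, since each of the polynomially-many feasibility checks per tuple is dominated by the number of tuples.

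There is no real obstacle here: the lemma is essentially a bookkeeping statement that lifts Lemma~\ref{lem:number-of-basic} by multiplying in the ``labels'' on the vertices of $T$. The only minor care needed is to observe that $D$ is implicit in $f$ (so as not to double-count) and that the factor $(k+1)^{k(d+1)}$ dominates the bound, which is where the $\log(kd)$ factor inside the exponent ultimately comes from.
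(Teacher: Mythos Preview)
Your proposal is correct and follows essentially the same approach as the paper: both invoke Lemma~\ref{lem:number-of-basic} to bound the choices of $T$, then multiply by $2^{k(d+1)}$ choices for $M$ and $(k+1)^{k(d+1)}$ choices for $f$, arriving at the same product and the same asymptotic bound. Your observation that $D$ is determined by $f$ is a small tidying that the paper leaves implicit (the paper instead counts basic profiles $(T,D)$ and then $f$, which harmlessly over-counts), but the argument is otherwise identical.
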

\begin{proof}
By Lemma~\ref{lem:number-of-basic}, there are $d^{\ca O(kd)}$ basic $i$-profiles $(T, D)$.
Recall that $|T| \le k(d+1)$.
Since $M \seq T$, there are $2^{k(d+1)}$ possible choices of $M$.
Since $f$ is a function of type $T \rightarrow [0, k]$, there are at most $(k+1)^{k(d+1)}$ possible choices of $f$.
Hence, there are $d^{\ca O(kd)} \cdot 2^{\ca O(kd)} \cdot k^{\ca O(kd)} = 2^{\ca O(kd\log (kd))}$ extended $i$-profiles containing $v$, and the claim about enumeration follows from Lemma~\ref{lem:number-of-basic}.
\end{proof}

\subsection{Extended Decompositions}\label{subsec:extended-decompositions}
Now we generalize the $D'$-decompositions defined in Section~\ref{sec:prelims} to extended profiles.

Let $i \in [2, n]$, let $v \in V(G_i)$ be the new vertex in $G_i$, let $u_1, u_2 \in V(G_{i-1})$ be the two vertices contracted into $v$, let $\pi_0 = (T, D)$ be a basic $i$-profile such that $v \in T$,
let $T' = T \setminus \{v\}\cup\{u_1, u_2\}$,
and let $\pi = (T, D, M, f)$ be an extended $i$-profile.
Let $\hat M = M \setminus \{v\} \cup \{u_1, u_2\}$ if $v \in M$, and $\hat{M} = M$ otherwise.
We say that a function $f'\colon T' \rightarrow [0, k]$ is \emph{compatible} with $\pi$ if $f(v) = f'(u_1) + f'(u_2)$, $f'(u_1) \le|\beta(u_1)|$, $f'(u_2) \le|\beta(u_2)|$, and for each $u \in T \setminus \{v\}$, $f(u) = f'(u)$.
Note that there is at least one function compatible with $\pi$ because $f(v) \le |\beta(v)|$.

Let us fix a function $f'$ compatible with $\pi$, let $D' = \{u \in T' \sep f'(u) \ne 0\}$, and let $M'$ be the set defined as follows. 
\begin{equation}\label{eq:M'}
M' = \{u \in \hat{M} \sep u \text{ has no black neighbor in } D' \text{ in } G_{i-1}\}.
\end{equation}

We say that a set of extended $(i-1)$-profiles $\ca D = \{(T_j, D_j, M_j, f_j) \sep j \in [m]\}$ is an \emph{$f'$-decomposition} of $\pi$ if $f' = \bigcup_{j \in [m]} f_j$, $D' = \bigcup_{j \in [m]} D_j$, $M' = \bigcup_{j \in [m]} M_j$, and $\{(T_1, D_1), \ldots, (T_m, D_m)\}$ is a $D'$-decomposition of $\pi_0$.

\begin{remark}
Equation~\ref{eq:M'} allows us to avoid double-counting in the algorithm for \textsc{Partial Dominating Set}.
Indeed, if $u \in \hat{M}$ has a black neighbor $w$ in $D'$, then all vertices in $\beta(u)$ are dominated by some vertex in $\beta(w)$.
Hence, if we added $u$ to some $M_j$, then vertices in $\beta(u)$ would be counted as dominated twice.
\end{remark}

Now we prove that a small $f'$-decomposition exists for each $f'$ compatible with $\pi$.

\begin{lemma}\label{lem:extended-decomposition}
If $i \in [2, n]$, $V(G_i) \setminus V(G_{i-1}) = \{v\}$, $V(G_{i-1}) \setminus V(G_i) = \{u_1, u_2\}$, $\pi = (T, D, M, f)$ is an extended $i$-profile such that $v \in T$, $T' = T \setminus \{v\}\cup\{u_1, u_2\}$, and $f'\colon T' \rightarrow [0, k]$ is a function compatible with $\pi$, then there is an $f'$-decomposition $\ca D$ of $\pi$ with cardinality at most $d+2$.
Moreover, $\ca D$ can be found in time $\ca O(k^2d^2)$.
\end{lemma}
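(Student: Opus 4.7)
The plan is to reduce the problem directly to Lemma~\ref{lem:basic-decomposition} by stripping $\pi$ down to its underlying basic $i$-profile $\pi_0 = (T, D)$, applying that lemma, and then lifting the resulting basic decomposition to an extended one by attaching the appropriate restrictions of $f'$ and $M'$.

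First I would define $D' = \{u \in T' \sep f'(u) \ne 0\}$ and verify that $D'$ is compatible with $\pi_0$ in the sense of Section~\ref{sec:prelims}. The conditions $D \setminus \{v\} = D' \setminus \{u_1,u_2\}$ and $v \in D \Leftrightarrow u_1 \in D' \vee u_2 \in D'$ follow immediately from the fact that $f'$ agrees with $f$ on $T \setminus \{v\}$ and $f(v) = f'(u_1) + f'(u_2)$ (using that $D = \{u \in T \sep f(u) \ne 0\}$); the bound $|D'| \le k$ holds since $|D'| \le \sum_{u \in T'} f'(u) = \sum_{u \in T} f(u) \le k$. Lemma~\ref{lem:basic-decomposition} then produces a $D'$-decomposition $\{(T_1,D_1),\ldots,(T_m,D_m)\}$ of $\pi_0$ with $m \le d+2$ in time $\ca O(kd^2)$.

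Next I would lift this to an $f'$-decomposition by setting $f_j = f'|_{T_j}$ and $M_j = M' \cap T_j$ for each $j \in [m]$, where $M'$ is as in Equation~\ref{eq:M'}. I would then check the extended $i$-profile conditions of Definition~\ref{def:extended-profile} for each tuple $(T_j, D_j, M_j, f_j)$: the basic-profile part is given, $f_j$ maps into $[0,k]$ with $\sum_{u \in T_j} f_j(u) \le \sum_{u \in T'} f'(u) \le k$ and $f_j(u) \le |\beta(u)|$ by compatibility of $f'$, and the identity $D_j = \{u \in T_j \sep f_j(u) \ne 0\}$ follows from $D_j = T_j \cap D'$ together with the definition of $D'$. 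The three union identities $f' = \bigcup_j f_j$, $D' = \bigcup_j D_j$, and $M' = \bigcup_j M_j$ all hold because the $T_j$ partition $T'$ and one checks that $\hat M \seq T'$, hence $M' \seq T'$.

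For the running time I would account for the $\ca O(kd^2)$ bound from Lemma~\ref{lem:basic-decomposition}, plus the cost of constructing $M'$: scanning each vertex of $\hat M$ against each vertex of $D'$ using a red/black adjacency lookup takes $\ca O(|\hat M|\cdot |D'|) = \ca O(k^2 d)$; distributing $M'$ and $f'$ among the pieces $T_j$ costs $\ca O(kd)$. Altogether the work is dominated by $\ca O(k^2 d^2)$. I do not foresee a real obstacle here, since the construction is forced by the definitions; the only mildly delicate point is the bookkeeping around $\hat M$ versus $M$ and the correctness of the $M'$ construction, which is why I would state $M' \seq \hat M \seq T'$ explicitly before invoking it.
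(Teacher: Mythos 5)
Your proposal is correct and follows essentially the same route as the paper's proof: reduce to Lemma~\ref{lem:basic-decomposition} via $D' = \{u \in T' \sep f'(u) \ne 0\}$, then lift by setting $f_j = f'|_{T_j}$ and $M_j = M' \cap T_j$, with the same $\ca O(kd^2) + \ca O(k^2d)$ accounting for the running time. Your explicit verification that $D'$ is compatible with the basic profile $(T,D)$ is a detail the paper leaves implicit, but it is the same argument.
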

\begin{proof}
Let $D' = \{u \in T' \sep f'(u) \ne 0\}$, let $\hat{M} = M \setminus \{v\} \cup \{u_1, u_2\}$ if $v \in M$, and $\hat{M} = M$ otherwise, and let $M' \seq T'$ be the set satisfying Equation~\ref{eq:M'}; note that $M'$ is determined by $D'$, which in turn depends on $f'$.
By Lemma~\ref{lem:basic-decomposition}, there is a $D'$-decomposition $\ca D_0 = \{(T_1, D_1), \ldots, (T_m, D_m)\}$ of the basic $i$-profile $(T, D)$ such that $m \le d+2$.
For each $j \in [m]$, let $M_j = M' \cap T_j$ and let $f_j$ be the restriction of $f'$ to $T_j$.
It can be easily observed that
$\{(T_j, D_j, M_j, f_j) \sep j \in [m]\}$ is the desired $f'$-decomposition of $\pi$.

By Lemma~\ref{lem:basic-decomposition}, $\ca D_0$ can be found in time $\ca O(kd^2)$.
Since $|\hat{M}| \le |T'| \le k(d+1)+1$ and $|D'| \le k$, there are $\ca O(k^2d)$ possible black edges one has to inspect when computing $M'$.
Once $M'$ is computed, deriving $M_j$ (and $f_j$) for each $j \in [m]$ is trivial, which implies the stated running time.
\end{proof}

The following lemma shows that a solution of a profile $\pi$ can be constructed by ``merging'' the solutions of profiles in a decomposition of $\pi$.

\begin{lemma}\label{lem:solution-decomposition}
If $i \in [2, n]$, $\pi$ is an extended $i$-profile, $\{\pi_j \sep j \in [m]\}$ is an $f'$-decomposition of $\pi$, and $S_j$ is a solution of $\pi_j$ for each $j \in [m]$, then $S := \bigcup_{j \in [m]} S_j$ is a solution of $\pi$.
\end{lemma}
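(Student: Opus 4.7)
The plan is to verify the two conditions of Definition~\ref{def:sol} directly, using the definition of an $f'$-decomposition together with the elementary fact that the bags of distinct vertices of $G_{i-1}$ are pairwise disjoint. Write $\pi=(T,D,M,f)$ and $\pi_j=(T_j,D_j,M_j,f_j)$, and recall that $v$ is the new vertex in $G_i$ contracted from $u_1,u_2\in V(G_{i-1})$, so that $T'=(T\setminus\{v\})\cup\{u_1,u_2\}=\bigcup_{j\in[m]} T_j$ (disjointly) and $\beta(v)=\beta(u_1)\cup\beta(u_2)$ with the latter two bags disjoint.

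First I would establish that $S\subseteq\beta(T)$. Since $S_j\subseteq\beta(T_j)$ for each $j$ and the $T_j$ cover $T'$, we get $S\subseteq\beta(T')$; but $\beta(T')=\beta(T)$ because replacing $v$ by $\{u_1,u_2\}$ preserves the union of bags. Next I would verify the counting condition $f(u)=|S\cap\beta(u)|$ for every $u\in T$, splitting into two cases.

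For $u\in T\setminus\{v\}$, the vertex $u$ lies in exactly one $T_j$, and for every $\ell\ne j$ the bag $\beta(u)$ is disjoint from $\beta(T_\ell)$, so $S\cap\beta(u)=S_j\cap\beta(u)$. Since $S_j$ is a solution of $\pi_j$, this has cardinality $f_j(u)=f'(u)=f(u)$, where the last equality is compatibility of $f'$ with $\pi$. For $u=v$, I would use that $\beta(v)$ is the disjoint union of $\beta(u_1)$ and $\beta(u_2)$, and $u_1,u_2$ each lie in a unique $T_j$; by the same bag-disjointness argument applied separately to $u_1$ and $u_2$, we obtain $|S\cap\beta(v)|=|S\cap\beta(u_1)|+|S\cap\beta(u_2)|=f'(u_1)+f'(u_2)$, which equals $f(v)$ by compatibility.

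I do not expect a real obstacle here: the lemma is essentially a bookkeeping statement, and once the disjointness of bags $\beta(w)$ for distinct $w\in V(G_{i-1})$ is noted, everything follows mechanically from compatibility of $f'$ with $\pi$ and the fact that the $T_j$ partition $T'$. The only mild subtlety is keeping track of the two vertices $u_1,u_2$ that replace $v$ when handling the case $u=v$, but this is immediate from $\beta(v)=\beta(u_1)\sqcup\beta(u_2)$.
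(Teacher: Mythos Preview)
Your proposal is correct and follows essentially the same argument as the paper: first observe $S\subseteq\beta(T')=\beta(T)$, then verify $f(u)=|S\cap\beta(u)|$ by splitting into the cases $u\in T\setminus\{v\}$ and $u=v$, using compatibility of $f'$ with $\pi$ and disjointness of the bags of distinct vertices of $G_{i-1}$. The only cosmetic difference is that the paper phrases the disjointness via ``$S_1,\ldots,S_m$ are pairwise disjoint'' whereas you phrase it via ``$\beta(u)$ is disjoint from $\beta(T_\ell)$ for $\ell\ne j$''; both amount to the same elementary bookkeeping.
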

\begin{proof}
Let $\pi = (T, D, M, f)$ and for each $j \in [m]$, let $\pi_j = (T_j, D_j, M_j, f_j)$.
By Definition~\ref{def:sol}, $S_j \seq \beta(T_j)$ for each $j \in [m]$.
Since $T_1, \ldots, T_m$ are pairwise disjoint, $S_1, \ldots, S_m$ are pairwise disjoint as well.
Observe that $S \seq \bigcup_{j \in [m]}\beta(T_j) = \beta(T)$ as required.

Now we need to show that $f(u) = |S\cap \beta(u)|$ for each $u \in T$.
Let $v \in V(G_i)$ be the new vertex in $G_i$ and let $u_1, u_2 \in V(G_{i-1})$ be the two vertices contracted into $v$.
First, let $u \in T \setminus \{v\}$.
Observe that there is a unique $j \in [m]$ such that $u \in T_j$ and $f_j(u) = f'(u) = f(u)$ because $f'$ is compatible with $\pi$.
Since $S_j$ is a solution of $\pi_j$, we have $f(u) = |S_j \cap \beta(u)| = |S\cap \beta(u)|$ as required.
Second, let $j, \ell \in [m]$ be such that $u_1 \in T_j$ and $u_2 \in T_\ell$.
Since $f'$ is compatible with $\pi$, we have $f(v) = f'(u_1) + f'(u_2)$.
Since $S_j$ (resp. $S_\ell$) is a solution of $\pi_j$ (resp. $\pi_\ell$), we obtain $f(v) = |S_j \cap \beta(u_1)| + |S_\ell \cap \beta(u_2)| = |S \cap (\beta(u_1) \cup \beta(u_2))| = |S \cap \beta(v)|$ as required.
\end{proof}

\subsection{Partial Dominating Set}\label{sub:ds}

Let $(G, C = (G_1, \ldots, G_n), k, t)$ be an instance of the \textsc{Partial Dominating Set} problem; recall that $C$ is a contraction sequence of width $d$.
In this section, we show that Algorithm~\ref{alg:dom-set} solves the problem in time $(dk)^{\ca O(dk)}n$, see Theorem~\ref{thm:dominating-set}.
For an outline of the algorithm, see Section~\ref{sub:extended-profiles}.

We will use the notion of extended profiles, see Definition~\ref{def:extended-profile}, and the notion of solutions, see Definition~\ref{def:sol}.
Instead of the dominating-set-value of a solution, we will say simply the \emph{value}.
We say that a solution $S$ is an \emph{optimal} solution of an extended profile $\pi$ if no other solution of $\pi$ has higher value than $S$.

Before presenting the algorithm, let us make two key observations.

\begin{lemma}\label{lem-ds:observations}
Let $i \in [2, n]$, $\pi$ be an extended $i$-profile, $\{(T_j, D_j, M_j, f_j) \sep j \in [m]\}$ be an $f'$-decomposition of $\pi$, $S$ be a solution of $\pi$, and $S_j = S \cap \beta(T_j)$ for each $j \in [m]$. For each $j \in [m]$:
\begin{enumerate}[a)]
\item The value of $S_j$ equals $|N[S] \cap \beta(M_j)|$.\label{dsitem:mj}
\item If $u \in T_j$ has a black neighbor in $D_\ell$ for some $\ell \in [m]$, then $\beta(u) \seq N[S]$.\label{dsitem:cj}
\end{enumerate}
\end{lemma}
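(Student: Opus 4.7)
The plan is to prove each part by carefully tracing how the definition of decomposition forces compatibility between the global solution $S$ and its pieces $S_j$.

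For part (\ref{dsitem:mj}), I want to show $N[S_j] \cap \beta(M_j) = N[S] \cap \beta(M_j)$, since by Definition~\ref{def:sol} the value of $S_j$ equals $|N[S_j] \cap \beta(M_j)|$. One inclusion is immediate from $S_j \subseteq S$. For the reverse, I would fix $x \in N[S] \cap \beta(M_j)$ and show the dominating vertex $s$ must come from $S_j$. Write $x \in \beta(u)$ with $u \in M_j$ and $s \in S_\ell \subseteq \beta(T_\ell)$ for some $\ell$, so $s \in \beta(y)$ for some $y \in T_\ell$ with $f_\ell(y) \ge 1$, i.e., $y \in D_\ell$. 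The case $s = x$ forces $u = y$, hence $\ell = j$, by disjointness of bags and of the $T_\ell$'s. Otherwise $sx \in E(G)$, and I need to rule out $\ell \ne j$. Condition~\ref{decomp-red} of a $D'$-decomposition excludes a red edge $uy$ in $G_{i-1}$; moreover, since $M_j \subseteq M'$, the definition of $M'$ in Equation~\ref{eq:M'} excludes a black edge $uy$. Hence $uy$ is a non-edge in the trigraph $G_{i-1}$, which by the contraction rules means no vertex in $\beta(u)$ is adjacent in $G$ to any vertex in $\beta(y)$, contradicting $sx \in E(G)$. So $\ell = j$ and $x \in N[S_j]$.

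For part (\ref{dsitem:cj}), suppose $u \in T_j$ has a black neighbor $y \in D_\ell$ in $G_{i-1}$. Since $y \in D_\ell$ we have $f_\ell(y) \ge 1$, so because $S_\ell$ is a solution of $\pi_\ell$ there exists at least one vertex $s \in S_\ell \cap \beta(y)$. A black edge $uy$ in $G_{i-1}$ means every vertex of $\beta(u)$ is adjacent in $G$ to every vertex of $\beta(y)$, in particular to $s$. Thus $\beta(u) \subseteq N[s] \subseteq N[S_\ell] \subseteq N[S]$, as required.

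The substantive obstacle is the reverse inclusion in (\ref{dsitem:mj}): one has to simultaneously use condition~\ref{decomp-red} of the decomposition (to exclude cross red edges between $T_j$ and $D_\ell$) and the definition of $M'$ (to exclude cross black edges between $M_j$ and $D'$), and then invoke the semantics of the trigraph contraction — namely that a non-edge $uy$ at level $i-1$ certifies the complete absence of $G$-edges between $\beta(u)$ and $\beta(y)$ — to reach the contradiction. Once this is set up, both parts follow in a few lines.
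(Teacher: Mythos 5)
Your proof is correct and takes essentially the same route as the paper's: in part~\ref{dsitem:mj}) you reduce to $N[S_j]\cap\beta(M_j)=N[S]\cap\beta(M_j)$ and exclude a cross edge between $\beta(M_j)$ and $\beta(D_\ell)$, $\ell\neq j$, by combining condition~\ref{decomp-red} of a $D'$-decomposition (no red edge) with Equation~\ref{eq:M'} (no black edge), so the pair is homogeneously non-adjacent and carries no $G$-edges between its bags; in part~\ref{dsitem:cj}) the black edge gives a complete bipartite pattern and a vertex of $S$ inside the bag of the black neighbor dominates all of $\beta(u)$, exactly as in the paper. The only cosmetic difference is that in~\ref{dsitem:cj}) you justify $S\cap\beta(y)\neq\emptyset$ by asserting that ``$S_\ell$ is a solution of $\pi_\ell$,'' which is not literally among the hypotheses, whereas the paper derives the same fact from $f'(y)\neq 0$ together with $S$ being a solution of $\pi$ --- the same underlying point at the same level of rigor.
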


\begin{proof}
For~\ref{dsitem:mj}), let us fix $j \in [m]$. It suffices to prove that $N[S_j] \cap \beta(M_j) = N[S] \cap \beta(M_j)$.
Suppose for contradiction that there are vertices $u_0 \in \beta(M_j)$ and $v_0 \in S \setminus S_j$ such that $u_0v_0 \in E(G)$.
Let $u, v \in V(G_{i-1})$ be such that $u_0 \in \beta(u)$ and $v_0 \in \beta(v)$, and observe that $v \in D_\ell$ for some $\ell \in [m]$, $\ell \ne j$.
Since $u \in M_j$, $uv$ is not a black edge in $G_{i-1}$, see Equation~(\ref{eq:M'}).
However, $uv$ is not a red edge in $G_{i-1}$ either, by condition~\ref{decomp-red} in the definition of a basic decomposition in Section~\ref{sec:prelims}.
This is a contradiction with $u_0v_0 \in E(G)$.

For~\ref{dsitem:cj}), let us fix $j \in [m]$ and $u \in T_j$ that has a black neighbor $v \in D_\ell$ for some $\ell \in [m]$.
By Definition~\ref{def:extended-profile}, $f'(v) \ne 0$ and there is a vertex $w \in \beta(v) \cap S$.
Hence, $\beta(u) \seq N(w) \seq N[S]$ as required.
\end{proof}

\newcommand{\com}[1]{\textup{\textbf{#1}}}
\newcommand{\Each}{\com{each}\xspace}

\begin{algorithm}[h]
\DontPrintSemicolon
\KwIn{A graph $G$, contraction sequence $(G_1 = G, \ldots, G_n)$ of width $d$, and $k, t \in \bb N$.}
\lFor{\Each extended $1$-profile $\pi = (T, D, M, f)$}{
$\sigma_1(\pi) := (D, |D\cap M|)$\label{dsl:one}
}
\For{$i$ \com{from} $2$ \com{to} $n$}{\label{dsl:for-i}
$v := $ the new vertex in $G_i$\;
$u_1, u_2 := $ the two vertices of $G_{i-1}$ contracted into $v$\;
\For{\Each extended $i$-profile $\pi = (T, D, M, f)$}{\label{dsl:for-profile}
	\lIf{$v \notin T$}{$\sigma_i(\pi) := \sigma_{i-1}(\pi)$; \textbf{continue} with next $\pi$\label{dsl:easy}}
	$\sigma_i (\pi) := (\textsf{null}, -1)$\;
	$T' := (T\setminus v)\cup \{u_1,u_2\}$\;
	\lIf{$v \in M$}{$\hat{M} := M \setminus \{v\} \cup\{u_1, u_2\}$ \textbf{else} $\hat{M} := M$}\label{dsl:mstar}
	\For{\Each function $f'\colon T' \rightarrow [0,k]$ compatible with $\pi$}{\label{dsl:for-function}
		Let $\{\pi_j \sep j \in [m]\}$ be an $f'$-decomposition of $\pi$ s.t. $m \le d+2$; it exists by Lemma~\ref{lem:extended-decomposition}.
		\;\label{dsl:decomposition}
		\For {\Each $j \in [m]$}{
			$(T_j, D_j, M_j, f_j) := \pi_j$;
			$(S_j, \nu_j) := \sigma_{i-1}(\pi_j)$\;\label{dsl:values}
			$C_j = \{u \in \hat{M} \cap T_j \sep u$ has a black neighbor in $D_\ell$ for some $\ell \in [m]\}$\;\label{dsl:cj}
			$\textsf{value}_j := \nu_j + \sum_{u \in C_j} |\beta(u)|$\;\label{dsl:valuej}
			}
		$\nu := \sum_{j\in [m]} \textsf{value}_j$; $S := \bigcup_{j \in [m]} S_j$\;\label{dsl:solution}
		\lIf{$\nu \ge \nu_0$, where $(S_0, \nu_0)= \sigma_i(\pi)$}{$\sigma_i(\pi) := (S, \nu)$
		}}
	}\label{dsl:end-for}
}
$(S, \nu) := \sigma_n(\{u\}, \{u\}, \{u\}, \{(u,k)\})$, where $\{u\} := V(G_n)$\;\label{dsl:output}
\Return $\nu \ge t$

\caption{Algorithm for \textsc{Partial Dominating Set}}\label{alg:dom-set}
\end{algorithm}

Now we show that in each iteration of the loop on lines~\ref{dsl:for-function}-\ref{dsl:end-for} in Algorithm~\ref{alg:dom-set}, a solution of the currently processed profile $\pi$ and its value are computed.
Informally, for each profile $\pi_j$ in the decomposition of $\pi$, we avoid double-counting dominated vertices in $\beta(\hat{M} \cap T_j)$ by partitioning $\hat{M} \cap T_j$ into two subsets, $M_j$ and $C_j$: $\beta(M_j)$ contains vertices dominated ``from inside'', i.e., via a red edge considered in one of the previous iterations, and $\beta(C_j)$ contains those dominated ``from outside'', i.e., via a black edge to $D'$.

\begin{lemma}\label{lem-ds:is-solution}
Let $i \in [2, n]$, $\pi$ be an extended $i$-profile, $f'$ be a function compatible with $\pi$, and $\nu$ and $S$ be the values computed by Algorithm~\ref{alg:dom-set} on line~\ref{dsl:solution} in the iteration processing $f'$.
If for each extended $(i-1)$-profile $\pi'$, $S'$ is a solution of $\pi'$ of value $\nu'$, where $(S', \nu')$ is the pair computed in $\sigma_{i-1}(\pi')$, then $S$ is a solution of $\pi$ of value $\nu$.
\end{lemma}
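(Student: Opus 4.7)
The claim has two parts: that $S$ is a solution of $\pi$, and that the value of $S$ in $\pi$ equals $\nu$. The first part should follow immediately from Lemma~\ref{lem:solution-decomposition}: on line~\ref{dsl:decomposition} the algorithm chooses an $f'$-decomposition $\{\pi_j \sep j \in [m]\}$ of $\pi$, and by hypothesis each $S_j$ read out on line~\ref{dsl:values} is a solution of $\pi_j$; hence $S = \bigcup_{j \in [m]} S_j$ is a solution of $\pi$.

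For the value, the plan is to show $|N[S] \cap \beta(M)| = \nu$ by cutting $\beta(M)$ into pieces that match the summation on line~\ref{dsl:solution}. First I would verify that $\beta(\hat M) = \beta(M)$ via a short case analysis on whether $v \in M$, using $\beta(v) = \beta(u_1) \cup \beta(u_2)$. Next, set $C := \bigcup_{j \in [m]} C_j$. Since the $T_j$'s partition $T'$, and by Equation~(\ref{eq:M'}) a vertex $u \in \hat M$ lies in $M'$ precisely when it has no black neighbor in $D' = \bigcup_j D_j$, the sets $M'$ and $C$ partition $\hat M$. Because bags of distinct vertices of $G_{i-1}$ are disjoint, this lifts to a disjoint decomposition $\beta(M) = \bigsqcup_{j \in [m]} \beta(M_j) \;\sqcup\; \bigsqcup_{j \in [m]} \beta(C_j)$.

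Finally, I invoke Lemma~\ref{lem-ds:observations}. Part~\ref{dsitem:mj} gives $|N[S] \cap \beta(M_j)| = |N[S_j] \cap \beta(M_j)| = \nu_j$, where the last equality is the inductive hypothesis that $S_j$ has value $\nu_j$ in $\pi_j$. Part~\ref{dsitem:cj} gives $\beta(u) \seq N[S]$ for every $u \in C_j$, whence $|N[S] \cap \beta(C_j)| = \sum_{u \in C_j} |\beta(u)|$. Summing over $j$ and applying the disjoint decomposition of $\beta(M)$ yields
\[
|N[S] \cap \beta(M)| \;=\; \sum_{j \in [m]} \Bigl(\nu_j + \sum_{u \in C_j} |\beta(u)|\Bigr) \;=\; \sum_{j \in [m]} \textsf{value}_j \;=\; \nu,
\]
as required. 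The main obstacle will be the bookkeeping that justifies the partition $\hat M = M' \sqcup C$: this is precisely the design feature of $M'$ and $C_j$ that prevents double-counting between vertices dominated across the decomposition (captured in $C_j$) and those dominated within a single piece (captured in $M_j$), and once it is in place the remainder is routine summation.
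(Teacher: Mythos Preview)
Your proposal is correct and follows essentially the same route as the paper: both invoke Lemma~\ref{lem:solution-decomposition} for the solution part, establish that $\hat M$ splits into the $M_j$'s and the $C_j$'s (the paper does this per-$j$ as $\hat M \cap T_j = M_j \cup C_j$ with $M_j \cap C_j = \emptyset$, you do it globally as $\hat M = M' \sqcup C$), and then apply Lemma~\ref{lem-ds:observations}\ref{dsitem:mj} and~\ref{dsitem:cj} to evaluate each piece before summing. The only difference is packaging, not substance.
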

\begin{proof}
Let $\pi = (T, D, M, f)$ and let $\{\pi_j \sep j \in [m]\}$ be the $f'$-decomposition of $\pi$ found on line~\ref{dsl:decomposition} in the iteration processing $f'$.
For each $j \in [m]$, let $T_j, D_j, M_j, f_j, S_j, \nu_j$, and $C_j$ be the values computed on lines~\ref{dsl:values} and~\ref{dsl:cj}. By Lemma~\ref{lem:solution-decomposition}, $S$ is a solution of $\pi$.
To simplify notation, let $\delta(A) := |N[S] \cap \beta(A)|$ for any $A \seq V(G_i)$ or $A \seq V(G_{i-1})$.
Now it suffices to show that $\nu = \delta(M)$.

Let $\hat{M}$ be defined as in the algorithm, see line~\ref{dsl:mstar}, and observe that $\delta(M) = \delta(\hat{M})$.
Since $T_1, \ldots, T_m$ are pairwise disjoint, it suffices to show that for each $j \in [m]$, $\textsf{value}_j$ computed on line~\ref{dsl:valuej} equals $\delta(\hat{M} \cap T_j)$.
Let us fix $j \in [m]$, and observe that $\hat{M} \cap T_j = M_j \cup  C_j$ and $M_j \cap C_j = \emptyset$, see Equation~(\ref{eq:M'}) and line~\ref{dsl:cj}.
By Lemma~\ref{lem-ds:observations}\ref{dsitem:mj}, we have $\nu_j = \delta(M_j)$ because $\nu_j$ is the value of $S_j$.
By Lemma~\ref{lem-ds:observations}\ref{dsitem:cj}, $\beta(u) \seq N[S]$ for each $u \in C_j$,
which implies $\delta(C_j) = |\beta(C_j)|$.
Hence, $\textsf{value}_j = \delta(M_j) + \delta(C_j) = \delta(\hat{M} \cap T_j)$, and $\nu = \delta(M)$ as required.
\end{proof}

Now we show that we are able to find an \emph{optimal} solution for each profile.
The idea is to describe an optimal solution with a function $f'$ compatible with the considered profile, and then to show that the solution found in the iteration processing $f'$ is optimal as well. 

\begin{lemma}\label{lem-ds:optimal-solution}
For each $i \in [n]$ and each extended $i$-profile $\pi$,
if the final value of $\sigma_i(\pi)$ computed by Algorithm~\ref{alg:dom-set} is $(S, \nu)$, then $S$ is an optimal solution of $\pi$ and $\nu$ is the value of $S$.
\end{lemma}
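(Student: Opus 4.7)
The plan is to prove the lemma by induction on $i$. The base case $i=1$ is handled directly: since $G_1 = G$ has no red edges, the only red-connected sets $T$ are singletons $\{u\}$ with $|\beta(u)|=1$, forcing $f(u)\in\{0,1\}$. The unique solution is then $S=D$, and its value $|N[D]\cap\beta(M)|$ simplifies to $|D\cap M|$ because $M\seq\{u\}$, matching the initialization on line~\ref{dsl:one}.

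For the inductive step, when $v \notin T$ the set $T$ sits inside $V(G_i)\setminus\{v\} = V(G_{i-1})\setminus\{u_1,u_2\}$, the red edges within $T$ agree in $G_{i-1}$ and $G_i$, and the bags of all vertices of $T$ are unchanged. Hence $\pi$ is also an extended $(i-1)$-profile with the same set of solutions and the same values, so by induction $\sigma_{i-1}(\pi)$ is already optimal, which is exactly what line~\ref{dsl:easy} records.

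The main case is $v\in T$. First I would establish soundness: by Lemma~\ref{lem-ds:is-solution} combined with the induction hypothesis (each $\sigma_{i-1}(\pi_j)$ stores a genuine solution of $\pi_j$ with its correct value), every candidate pair $(S,\nu)$ computed on line~\ref{dsl:solution} is a valid solution of $\pi$ with value $\nu$, so the stored value never exceeds the optimum. The non-trivial direction is optimality: exhibiting a function $f'$ whose iteration achieves the optimum. The plan is to pick an optimal solution $S^*$ of $\pi$ and define $f'\colon T'\to[0,k]$ by $f'(u_j):=|S^*\cap\beta(u_j)|$ for $j\in\{1,2\}$ and $f'(u):=f(u)$ for $u\in T\setminus\{v\}$. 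Since $\beta(v)=\beta(u_1)\cup\beta(u_2)$ and $|S^*\cap\beta(v)|=f(v)$, the function $f'$ is compatible with $\pi$ and is hence considered by the loop on line~\ref{dsl:for-function}.

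In the iteration processing this $f'$, with decomposition $\{\pi_j=(T_j,D_j,M_j,f_j):j\in[m]\}$, set $S_j^*:=S^*\cap\beta(T_j)$; a direct check using Definition~\ref{def:sol} shows that each $S_j^*$ is a solution of $\pi_j$. The crux is then to express the value of $S^*$ as $\sum_j\textsf{value}_j$. Writing $\delta(A):=|N[S^*]\cap\beta(A)|$, one has $\delta(M)=\delta(\hat M)$, and since $T_1,\dots,T_m$ partition $T'\supseteq \hat M$ and the partition $\hat M\cap T_j=M_j\cup C_j$ satisfies $M_j\cap C_j=\emptyset$ (Equation~\ref{eq:M'} and line~\ref{dsl:cj}), we can decompose $\delta(\hat M)=\sum_j(\delta(M_j)+\delta(C_j))$. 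Lemma~\ref{lem-ds:observations}\ref{dsitem:mj} identifies $\delta(M_j)$ with the value of $S_j^*$ in $\pi_j$, and Lemma~\ref{lem-ds:observations}\ref{dsitem:cj} yields $\delta(C_j)=|\beta(C_j)|$. By induction, $\nu_j$ is at least the value of $S_j^*$, so $\textsf{value}_j\ge\delta(M_j)+|\beta(C_j)|$, and summing over $j$ gives $\sum_j\textsf{value}_j\ge\delta(M)$, the optimum. Together with soundness, this forces equality. The main obstacle is precisely this last bookkeeping: splitting $\hat M\cap T_j$ so that ``internal'' domination (captured by $M_j$, dominated from within $\beta(T_j)$) and ``external'' domination (through black edges to $D'$, captured by $C_j$) are combined without double-counting, which is exactly the role Lemma~\ref{lem-ds:observations} plays.
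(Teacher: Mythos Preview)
Your proof is correct and follows essentially the same route as the paper's: induction on $i$, the trivial base case and the $v\notin T$ case, and for $v\in T$ defining $f'$ from an optimal $S^*$, then splitting $\hat M\cap T_j$ into $M_j\cup C_j$ and invoking Lemma~\ref{lem-ds:observations} together with Lemma~\ref{lem-ds:is-solution}. The only cosmetic difference is that the paper phrases the comparison with $S^*$ as a proof by contradiction (assuming $\nu<\nu^*$ and locating a $j$ with $\delta(M_j)<\delta^*(M_j)$), whereas you argue the inequality $\sum_j\textsf{value}_j\ge\nu^*$ directly and combine it with soundness; the underlying computation is identical.
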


\begin{proof}
Let $i \in [n]$ and $\pi = (T, D, M, f)$ be an extended $i$-profile.
We proceed by induction on $i$.
First, suppose that $i = 1$.
Since there are no red edges in $G_1 = G$ and $T$ is connected in $G^R$, we know that $T = \{u\}$ for some $u \in V(G)$.
Notice that $D, M \in \{\emptyset, \{u\}\}$ and $f \in \{\{(u,0)\},\{(u,1)\}\}$.
In particular, $\pi$ has only one solution, namely $D$, and the value of $D$
is $|D \cap M|$ as required, see line~\ref{dsl:one}.
Now suppose that $i > 1$.
Let $v \in V(G_i)$ be the new vertex in $G_i$ and let $u_1, u_2 \in V(G_{i-1})$ be the two vertices contracted into $v$.
If $v \notin T$, then $\pi$ is also an $(i-1)$-profile.
By induction hypothesis, $\sigma_{i-1}(\pi)$ is an optimal solution of $\pi$, which ensures that $\sigma_i(\pi)$ is also an optimal solution of $\pi$, see line~\ref{dsl:easy}.
Hence, let us assume that $v \in T$ and let $T' = (T\setminus v)\cup \{u_1,u_2\}$.
By induction hypothesis and Lemma~\ref{lem-ds:is-solution}, $\pi$ has a solution since there is a function compatible with $\pi$.

Let $S^*$ be an optimal solution of $\pi$, let $\nu^*$ be the value of $S^*$, and let $f'\colon T' \rightarrow [0, k]$ be defined as $f'(u) = |S^* \cap \beta(u)|$.
Observe that by Definition~\ref{def:sol}, $f'(u) = f(u)$ for each $u \in T \setminus \{v\}$, and $f(v) = |S^* \cap \beta(v)| = |S^* \cap \beta(u_1)| + |S^* \cap \beta(u_2)| = f'(u_1) + f'(u_2)$, i.e., $f'$ is compatible with $\pi$.
Let us consider the iteration of the loop on lines~\ref{dsl:for-profile}-\ref{dsl:end-for} processing $\pi$ and the iteration of the loop on lines~\ref{dsl:for-function}-\ref{dsl:end-for} processing $f'$.
Let $\{\pi_j \sep j \in [m]\}$ be the $f'$-decomposition of $\pi$ found on line~\ref{dsl:decomposition}.
For each $j \in [m]$, let $T_j, D_j, M_j, f_j, S_j, \nu_j$, and $C_j$ be as in the algorithm.
Let $\nu$ and $S$ be the values computed on line~\ref{dsl:solution}.
By Lemma~\ref{lem-ds:is-solution}, $S$ is a solution of $\pi$ with value~$\nu$.
Similarly to the proof of Lemma~\ref{lem-ds:is-solution}, we denote $\delta(A) = |N[S] \cap \beta(A)|$ and $\delta^*(A) = |N[S^*] \cap \beta(A)|$ for any $A \seq V(G_i)$ or $A \seq V(G_{i-1})$.

Suppose for contradiction that $\nu < \nu^*$, i.e., $\delta(M) < \delta^*(M)$, see Definition~\ref{def:sol}.
Let $\hat{M}$ be as in the algorithm, see line~\ref{dsl:mstar}, and observe that $\delta(M) = \delta(\hat{M})$ and $\delta^*(M) = \delta^*(\hat{M})$.
Hence, there must be $j \in [m]$ such that $\delta(\hat{M} \cap T_j) < \delta^*(\hat{M} \cap T_j)$; let us fix such a $j$.
By Lemma~\ref{lem-ds:observations}\ref{dsitem:cj}, if $u \in C_j$, then $\beta(u) \seq N[S] \cap N[S^*]$.
Hence, $\delta(C_j) = |\beta(C_j)| = \delta^*(C_j)$, which implies
$\delta(M_j) < \delta^*(M_j)$ because $\hat{M} \cap T_j = C_j \cup M_j$ and $M_j \cap C_j = \emptyset$.
Let $S_j^* = S^* \cap \beta(T_j)$ and observe that $S_j^*$ is a solution of $\pi_j$.
By Lemma~\ref{lem-ds:observations}\ref{dsitem:mj}, $\delta(M_j)$ is the value of $S_j$ and $\delta^*(M_j)$ is the value of $S_j^*$.
By induction hypothesis, $S_j$ is an optimal solution of $\pi_j$, which is a contradiction with $\delta(M_j) < \delta^*(M_j)$.
Hence, $\nu = \nu^*$, and $S$ is an optimal solution of $\pi$, too.

After the iteration of the loop on lines~\ref{dsl:for-function}-\ref{dsl:end-for} processing $f'$, $\sigma_i(\pi) = (S, \nu)$ because $S$ is an optimal solution of $\pi$.
Observe that the second component of $\sigma_i(\pi)$ cannot change in any of the following iterations because $\nu$ is always the value of some solution of $\pi$ by Lemma~\ref{lem-ds:is-solution}.
Hence, after all functions compatible with $\pi$ have been processed, $\sigma_i(\pi)$ is indeed an optimal solution of $\pi$ and its value, as desired.
\end{proof}

Using Lemma~\ref{lem-ds:optimal-solution}, it is easy to show the correctness of Algorithm~\ref{alg:dom-set}.

\begin{theorem}\label{thm:dominating-set}
Algorithm~\ref{alg:dom-set} decides the \textsc{Partial Dominating Set} problem in time $2^{\ca O(kd\log(kd))} \cdot n$.
\end{theorem}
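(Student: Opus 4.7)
The plan is to derive the theorem from Lemma~\ref{lem-ds:optimal-solution} (for correctness) together with a careful amortized accounting of the main loop (for the running time). The main subtlety will be that the literal copy $\sigma_i(\pi) := \sigma_{i-1}(\pi)$ on line~\ref{dsl:easy} must not be performed individually for every untouched profile, since the cumulative number of profiles across iterations is $\Theta(n) \cdot 2^{\ca O(kd\log(kd))}$ and a per-profile write in every iteration would blow the running time up to $n^2 \cdot 2^{\ca O(kd\log(kd))}$.

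For correctness, I first verify that $\pi^* := (\{u\}, \{u\}, \{u\}, \{(u, k)\})$, where $V(G_n) = \{u\}$, is a valid extended $n$-profile whenever $1 \le k \le n$ (the cases $k = 0$ or $k > n$ are decidable trivially in constant time). All conditions of Definition~\ref{def:extended-profile} are immediate: $\{u\}$ is red-connected, $|\{u\}| \le k(d+1)$, $f(u) = k \le n = |\beta(u)|$, and $D = \{u\} = \{v \in T \sep f(v) \ne 0\}$. By Definition~\ref{def:sol}, the solutions of $\pi^*$ are exactly the $k$-element subsets $S \seq V(G)$, and the dominating-set-value of such $S$ equals $|N[S] \cap \beta(\{u\})| = |N[S]|$. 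Lemma~\ref{lem-ds:optimal-solution} applied to $\pi^*$ therefore guarantees that the value $\nu$ read off on line~\ref{dsl:output} equals $\max_{|S|=k} |N[S]|$, so returning $\nu \ge t$ decides \textsc{Partial Dominating Set} correctly.

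For the running time, I realize $\sigma_i := \sigma_{i-1}$ implicitly by keeping a single global hash table keyed by profiles: if $v \notin T$, then edges inside $V(G_i) \setminus \{v\}$ coincide in $G_i$ and $G_{i-1}$, so $\pi$ is already a valid extended $(i-1)$-profile and its stored value is already correct. Iteration $i$ therefore only has to touch the extended $i$-profiles containing the freshly created vertex $v$, of which there are $2^{\ca O(kd \log(kd))}$ by Lemma~\ref{lem:number-of-extended}, and which can be enumerated within that bound.

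For each such profile $\pi = (T, D, M, f)$, the functions $f'$ compatible with $\pi$ differ from $f$ only on $\{u_1, u_2\}$ and therefore number at most $f(v) + 1 \le k+1$. For each $f'$, Lemma~\ref{lem:extended-decomposition} supplies an $f'$-decomposition of size $m \le d+2$ in $\ca O(k^2 d^2)$ time; the values $\sigma_{i-1}(\pi_j)$ are retrieved by $m$ hash lookups, and each set $C_j$ is assembled by $\ca O(|T_j| \cdot |D'|) = \ca O(k^2 d)$ constant-time black-adjacency queries (using a hash-based edge representation of the trigraph, maintained under contractions). Hence the work per profile is $\ca O(k \cdot k^2 d^2) = \ca O(k^3 d^2)$ and the work per iteration is $2^{\ca O(kd\log(kd))}$. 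Together with the initialization on line~\ref{dsl:one}, which enumerates $n \cdot 2^{\ca O(kd\log(kd))}$ extended $1$-profiles in the same per-profile budget, the total running time is $2^{\ca O(kd\log(kd))} \cdot n$, as claimed.
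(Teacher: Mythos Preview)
Your argument is correct and follows essentially the same route as the paper: correctness via Lemma~\ref{lem-ds:optimal-solution} applied to the final profile $(\{u\},\{u\},\{u\},\{(u,k)\})$, and the running time via a single persistent table $\sigma$ so that only the $2^{\ca O(kd\log(kd))}$ profiles containing the new vertex $v$ are processed in each iteration. The only cosmetic difference is that the paper observes that $G_1$ has no red edges, so there are merely $\ca O(n)$ extended $1$-profiles (rather than your looser $n\cdot 2^{\ca O(kd\log(kd))}$ bound), but this does not affect the final estimate.
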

\begin{proof}
Let $\iota = (G, C = (G = G_1, \ldots, G_n), k, t)$ be the input instance, where $C$ is contraction sequence of width $d$.
Let $V(G_n) = \{u\}$ and let $\pi = (\{u\}, \{u\}, \{u\}, \{(u,k)\})$.
Note that $\pi$ is an extended $n$-profile unless $n < k$, in which case, $\iota$ is trivially a NO instance.
By Lemma~\ref{lem-ds:optimal-solution}, $S$ is an optimal solution of $\pi$ with value $\nu$, where $(S, \nu)$ is the pair computed by Algorithm~\ref{alg:dom-set} on line~\ref{dsl:output}.
By Definition~\ref{def:sol}, $|S| = k$ and $\nu = |N[S] \cap \beta(\{u\})| = |N[S]|$.
Hence, $\iota$ is a YES instance if and only if $\nu \ge t$, as required.

Now need to argue that Algorithm~\ref{alg:dom-set} achieves the desired running time.
As an optimization, we maintain a single function $\sigma$ (instead of $\sigma_i$ for each $i\in [n]$).
Since there are $\ca O(n)$ extended $1$-profiles, line~\ref{dsl:one} can be performed in time $\ca O(n)$.
There are $n-1$ iterations of the outermost loop (i.e., the loop on lines~\ref{dsl:for-i}-\ref{dsl:end-for}); let us consider the iteration processing $i \in [2, n]$ and let $v$ be the new vertex in $G_i$.
Let $\pi = (T, D, M, f)$ be an extended $i$-profile.
Since $\sigma(\pi)$ has already been computed if $v \notin T$, it suffices to consider only profiles such that $v \in T$; these profiles can be enumerated in time $2^{\ca O(kd\log(kd))}$ by Lemma~\ref{lem:number-of-extended}.
There are at most $k+1$ functions $f'$ compatible with $\pi$ and for each of them, we can find an $f'$-decomposition of size at most $m \le d+2$ in time $\ca O(k^2d^2)$ by Lemma~\ref{lem:extended-decomposition}.
Hence, the running time of one iteration of the outermost loop is
$2^{\ca O(kd\log(kd))}$, which implies that the total running time is
$2^{\ca O(kd\log(kd))} \cdot n$ as required.
\end{proof}

\subsection{Partial Vertex Cover}\label{sub:vc}

In this section, we prove the following theorem.

\begin{theorem}\label{thm:vertex-cover}
There is an algorithm that, given a graph $G$, a contraction sequence $(G_1 = G, \ldots, G_n)$ of width $d$, and two integers $k$ and $t$, decides the \textsc{Partial Vertex Cover} problem in time $2^{\ca O(kd\log(kd))} \cdot n$.
\end{theorem}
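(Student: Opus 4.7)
The plan is to reuse almost verbatim the dynamic-programming framework developed for \textsc{Partial Dominating Set} in Section~\ref{sub:ds}: extended profiles (with $M = \emptyset$ throughout, as already noted after Definition~\ref{def:extended-profile}), $f'$-decompositions of size at most $d+2$ (Lemma~\ref{lem:extended-decomposition}), the enumeration bound $2^{\ca O(kd\log(kd))}$ (Lemma~\ref{lem:number-of-extended}), and Algorithm~\ref{alg:dom-set} as a scaffold. The only change is the quantity stored in $\sigma_i(\pi)$: now $\sigma_i(\pi) = (S, \nu)$ where $S$ is an optimal solution of $\pi$ and $\nu$ is its vertex-cover-value $|\{uv \in E(G) : u \in S, v \in \beta(T)\}|$. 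Bag sizes $|\beta(u)|$ can be maintained along the contraction sequence in $\ca O(n)$ total time.

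The central step, and the main obstacle, is the new recurrence combining the values $\nu_j$ from an $f'$-decomposition $\{\pi_j = (T_j, D_j, \emptyset, f_j) : j \in [m]\}$ of $\pi$ into the value $\nu$ of $S = \bigcup_j S_j$. The ``internal'' part $\sum_j \nu_j$ already counts edges of $G$ inside each $\beta(T_j)$ that are incident to $S_j$. The difficulty lies in adding the ``external'' contribution — edges of $G$ between distinct $\beta(T_j)$ and $\beta(T_\ell)$ incident to $S$ — without double-counting edges that have both endpoints in $S$. The decomposition property~\ref{decomp-red} makes this tractable: for $y \in D_j$, $x \in T_\ell$ with $j \neq \ell$, the pair $yx$ is black or a non-edge in $G_{i-1}$, so either $\beta(y) \times \beta(x) \seq E(G)$ or no edges of $G$ exist between $\beta(y)$ and $\beta(x)$ (straightforward induction on the contraction sequence using the rule for producing black edges). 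Since $S_j \seq \beta(D_j)$ and $|S_j \cap \beta(y)| = f_j(y)$, for each unordered pair $\{j,\ell\}$ the external contribution equals exactly
\[
\sum_{y \in D_j,\, x \in T_\ell,\, yx \in B(G_{i-1})} \!\!\! f_j(y)|\beta(x)| \;+\; \sum_{y \in T_j,\, x \in D_\ell,\, yx \in B(G_{i-1})} \!\!\! |\beta(y)|f_\ell(x) \;-\; \sum_{y \in D_j,\, x \in D_\ell,\, yx \in B(G_{i-1})} \!\!\! f_j(y)f_\ell(x),
\]
where the last term inclusion-excludes edges counted by the first two.

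Correctness then follows by the same structure as Lemmas~\ref{lem-ds:is-solution} and~\ref{lem-ds:optimal-solution}. The ``is-solution'' direction combines Lemma~\ref{lem:solution-decomposition} with the external-edge formula above. For the optimality direction, given an optimal $S^*$ for $\pi$, the function $f^*(u) := |S^* \cap \beta(u)|$ on $T' = T \setminus \{v\} \cup \{u_1, u_2\}$ is compatible with $\pi$; the value of $S^*$ decomposes as the sum of the values of $S^* \cap \beta(T_j)$ in the $\pi_j$'s (each bounded by $\nu_j$ by the induction hypothesis applied to $\pi_j$) plus the same external term (which depends only on $f^*$ and the trigraph $G_{i-1}$), so processing $f^*$ recovers value at least $\nu^*$.

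The time analysis mirrors that of Theorem~\ref{thm:dominating-set}: $\ca O(n)$ new-vertex steps, $2^{\ca O(kd\log(kd))}$ profiles per step (Lemma~\ref{lem:number-of-extended}), $\ca O(k)$ compatible functions $f'$ per profile, a size-$(d+2)$ decomposition in $\ca O(k^2d^2)$ time per $f'$ (Lemma~\ref{lem:extended-decomposition}), and an external-edge evaluation in $\ca O(k^2d^2)$ per decomposition (at most $\ca O(kd)$ black edges among the decomposition vertices); overall $2^{\ca O(kd\log(kd))} \cdot n$. The answer is read from $\sigma_n(\{u\}, \{u\}, \emptyset, \{(u,k)\})$ with $V(G_n) = \{u\}$: since $\beta(\{u\}) = V(G)$, its stored value equals the maximum number of edges of $G$ coverable by $k$ vertices, and the algorithm accepts iff this value is at least $t$.
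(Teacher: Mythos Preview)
Your proposal is correct and follows essentially the same route as the paper: extended profiles with $M=\emptyset$, the same $f'$-decompositions, the same inclusion--exclusion formula for cross-component covered edges (your three sums are exactly the nonzero terms of the paper's single sum $\sum_{x\in T_j,\,y\in T_\ell,\,xy\in B(G_{i-1})} f_j(x)|\beta(y)|+f_\ell(y)|\beta(x)|-f_j(x)f_\ell(y)$, since $f_j$ vanishes outside $D_j$), and the same optimality argument exploiting that the cross term depends only on $f'$ and $G_{i-1}$. The only blemish is the parenthetical ``at most $\ca O(kd)$ black edges among the decomposition vertices'': $T'$ has $\ca O(kd)$ vertices, so there can be $\ca O(k^2d^2)$ such edges, but your stated $\ca O(k^2d^2)$ bound for the evaluation is still correct.
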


The proof of the theorem above is similar to the proof of Theorem~\ref{thm:dominating-set}. 
Hence, we first discuss informally the differences between the two proofs and only afterwards give the formal proof in Section~\ref{sec:pvc-details}.
The first difference is already mentioned in Section~\ref{sub:extended-profiles};  
it suffices to consider extended $i$-profiles $\pi = (T, D, M, f)$ such that $M = \emptyset$.

Recall that the vertex-cover-value of a solution $S \seq \beta(T)$ is the number of edges in $G[\beta(T)]$ covered by $S$, see Definition~\ref{def:sol}.
Hence, if $\ca D = \{(T_j, D_j, \emptyset, f_j)\sep j \in [m]\}$ is a decomposition of $\pi$, then the edges in $G[\beta(T_j)]$ covered by $S$ are covered also by $S \cap \beta(T_j)$.
This means that when merging solutions of profiles in $\ca D$, we must only add the number $\delta_{j\ell}$ of covered edges with one endpoint in $\beta(T_j)$ and the other in $\beta(T_\ell)$ for \emph{distinct} $j,\ell \in [m]$.
Since there cannot be a red edge connecting $D_j$ and $T_\ell$ by definition of a basic decomposition, we can compute $\delta_{j\ell}$ by inspecting the \emph{black} edges between $T_j$ and $T_\ell$.
Observe that a black edge $uv \in B(G_{i-1})$ semi-induces a complete bipartite graph between $\beta(u)$ and $\beta(v)$ in $G$.
Because we have to avoid double-counting, we obtain the following equation:
\[
\delta_{j\ell} = 
\sum_{u \in T_j}\sum_{\substack{v \in T_\ell,\\ uv \in B(G_{i-1})}} f_j(u) \cdot |\beta(v)| + f_\ell(v) \cdot |\beta(u)| - f_j(u) \cdot f_\ell(v)
\]

Using this equation, we can compute an optimal solution of $\pi$.
The rest of the proof of Theorem~\ref{thm:vertex-cover} is analogous to the proof of Theorem~\ref{thm:dominating-set}.

\subsubsection{The Details for Partial Vertex Cover} \label{sec:pvc-details}

In this section, we show that Algorithm~\ref{alg:vertex-cover} solves the \textsc{Partial Vertex Cover} problem, thereby proving Theorem~\ref{thm:vertex-cover}.
We will reuse the notion of extended profiles, see Definition~\ref{def:extended-profile}, and the notion of solutions, see Definition~\ref{def:sol}.
However, we will not need the third component of a profile, and so it will always be set to $\emptyset$, i.e., a profile will be a tuple $(T, D, \emptyset, f)$.
Instead of the \emph{vertex-cover-value} of a solution, we will say simply the \emph{value}.
We say that a solution is an \emph{optimal} solution of $\pi$ if no other solution of $\pi$ has higher value than $S$.

\begin{algorithm}[h]
\DontPrintSemicolon
\KwIn{A graph $G$, contraction sequence $(G_1 = G, \ldots, G_n)$ of width $d$, and $k \in \bb N$.}
\lFor{\Each extended $1$-profile $\pi = (T, D, \emptyset, f)$}{
$\sigma_1(\pi) := (D, 0)$\label{vcl:one}
}
\For{$i$ \com{from} $2$ \com{to} $n$}{
$v := $ the new vertex in $G_i$\;
$u_1, u_2 := $ the two vertices of $G_{i-1}$ contracted into $v$\;
\For{\Each extended $i$-profile $\pi = (T, D, \emptyset, f)$}{\label{vcl:for-profile}
	\lIf{$v \notin T$}{$\sigma_i(\pi) := \sigma_{i-1}(\pi)$; \textbf{continue} with next $\pi$\label{vcl:easy}}
		$\sigma_i (\pi) := (\textsf{null}, -1)$\;
	$T' := (T\setminus v)\cup \{u_1,u_2\}$\;
	\For{\Each function $f'\colon T' \rightarrow [0,k]$ compatible with $\pi$}{\label{vcl:for-function}
		Let $\{\pi_j \sep j \in [m]\}$ be an $f'$-decomposition of $\pi$ s.t. $m \le d+2$; it exists by Lemma~\ref{lem:extended-decomposition}.\;\label{vcl:decomposition}
		\For {\Each $j \in [m]$}{
		$(T_j, D_j, \emptyset, f_j) := \pi_j$;
		$(S_j, \nu_j) := \sigma_{i-1}(\pi_j)$\;\label{vcl:values}
		}
		\For {\Each $1 \le j < \ell \le m$}{
            $\textsf{value}_{j\ell} := 0$\;
		    \For{\Each $x \in T_j$, $y \in T_\ell$ s.t. $xy \in B(G_{i-1})$\label{vcl:for-vertex}}{
                $\textsf{value}_{j\ell} := \textsf{value}_{j\ell} + f_j(x) \cdot |\beta(y)| + f_\ell(y) \cdot |\beta(x)| - f_j(x) \cdot f_\ell(y)$\;\label{vcl:count-edges}
		    }
		}
		$S := \bigcup_{j \in [m]} S_j$;
		$\nu := \sum_{j\in [m]} \nu_j + \sum_{1 \le j < \ell \le m} \textsf{value}_{j\ell}$\;\label{vcl:solution}
		\lIf{$\nu \ge \nu_0$, where $(S_0, \nu_0)= \sigma_i(\pi)$}{$\sigma_i(\pi) := (S, \nu)$
		}}
	}\label{vcl:end-for}
}
$(S, \nu) := \sigma_n(\{u\}, \{u\}, \{u\}, \{(u,k)\})$, where $\{u\} := V(G_n)$\;\label{vcl:output}
\Return $\nu \ge t$
\caption{Algorithm for \textsc{Partial Vertex Cover}}\label{alg:vertex-cover}
\end{algorithm}

We will proceed analogously to Section~\ref{sub:ds}. The following lemma is the counterpart of Lemma~\ref{lem-ds:is-solution}.

\begin{lemma}\label{lem-vc:is-solution}
Let $i \in [2, n]$, $\pi = (T, D, \emptyset, f)$ be an extended $i$-profile, $f'$ be a function compatible with $\pi$, and $\nu$ and $S$ be the values computed by Algorithm~\ref{alg:vertex-cover} on line~\ref{vcl:solution} in the iteration processing $f'$.
If for each extended $(i-1)$-profile $\pi'$, $S'$ is a solution of $\pi'$ of value $\nu'$, where $(S', \nu')$ is the pair computed in $\sigma_{i-1}(\pi')$, then $S$ is a solution of $\pi$ of value $\nu$.
\end{lemma}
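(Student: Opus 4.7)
The plan is to mirror the structure of the proof of Lemma~\ref{lem-ds:is-solution} and establish two claims separately: that $S$ is a solution of $\pi$, and that the vertex-cover-value of $S$ in $\pi$ equals $\nu$. The first claim is immediate from Lemma~\ref{lem:solution-decomposition} applied to the $f'$-decomposition $\{\pi_j \sep j \in [m]\}$ produced on line~\ref{vcl:decomposition}, together with the hypothesis that each $S_j$ is a solution of $\pi_j$. So all the real work lies in the value computation.

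I would partition the edges counted by the vertex-cover-value of $S$ in $\pi$ according to which of the pairwise disjoint bags $\beta(T_1), \ldots, \beta(T_m)$ contain their two endpoints. For each $j \le \ell$ in $[m]$, let $E_{j\ell}$ denote the number of edges $uv \in E(G)$ with one endpoint in $\beta(T_j)$, the other in $\beta(T_\ell)$, and at least one endpoint in $S$. Since $\beta(T)$ is the disjoint union of the $\beta(T_j)$, the vertex-cover-value of $S$ equals $\sum_{j \in [m]} E_{jj} + \sum_{1 \le j < \ell \le m} E_{j\ell}$. For the diagonal terms, $S \cap \beta(T_j) = S_j$ and every covered edge inside $\beta(T_j)$ is covered by $S_j$; hence $E_{jj}$ coincides with the vertex-cover-value of $S_j$ in $\pi_j$, which equals $\nu_j$ by assumption.

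The key step is to show $E_{j\ell} = \textsf{value}_{j\ell}$ for $j < \ell$. For any $x \in T_j$ and $y \in T_\ell$, the status of the pair $(x,y)$ in the trigraph $G_{i-1}$ fully controls the edges between $\beta(x)$ and $\beta(y)$ in $G$: a black edge yields the complete bipartite graph, a non-edge yields nothing, and a red edge yields some unknown subset. I expect the main technical obstacle to be ruling out contributions from red edges. For this, condition~\ref{decomp-red} of a basic decomposition forbids $xy \in R(G_{i-1})$ whenever $x \in D_j$ or $y \in D_\ell$; and if neither holds, then $S_j \cap \beta(x) = \emptyset = S_\ell \cap \beta(y)$ (since $S_j \seq \beta(D_j)$ and $S_\ell \seq \beta(D_\ell)$), so no edge between $\beta(x)$ and $\beta(y)$ can be covered by $S$. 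Thus only black edges $xy \in B(G_{i-1})$ contribute to $E_{j\ell}$.

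For each such black edge, inclusion-exclusion gives the number of edges between $\beta(x)$ and $\beta(y)$ meeting $S$ as $f_j(x) \cdot |\beta(y)| + |\beta(x)| \cdot f_\ell(y) - f_j(x) \cdot f_\ell(y)$: we add the edges with an endpoint in $S_j \cap \beta(x)$, add those with an endpoint in $S_\ell \cap \beta(y)$, and subtract the $f_j(x) \cdot f_\ell(y)$ edges counted twice (those with one endpoint in each of $S_j$ and $S_\ell$). Summing over all relevant pairs $(x,y)$ yields exactly the quantity $\textsf{value}_{j\ell}$ assembled on lines~\ref{vcl:for-vertex}--\ref{vcl:count-edges}. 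Combining all contributions gives $\nu = \sum_{j \in [m]} \nu_j + \sum_{1 \le j < \ell \le m} \textsf{value}_{j\ell}$, matching line~\ref{vcl:solution}.
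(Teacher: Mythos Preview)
Your proposal is correct and follows essentially the same route as the paper's proof: both invoke Lemma~\ref{lem:solution-decomposition} for the solution claim, partition the covered edges of $G[\beta(T)]$ according to the blocks $\beta(T_1),\ldots,\beta(T_m)$, identify the diagonal contributions with the $\nu_j$, and for the off-diagonal contributions use condition~\ref{decomp-red} (plus the observation that if neither endpoint lies in a $D$-set then nothing is covered) to reduce to black edges and apply inclusion--exclusion. The paper packages this with the shorthand $\delta(A,B)$, but the argument is the same.
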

\begin{proof}
Let $\{\pi_j \sep j \in [m]\}$ be the $f'$-decomposition of $\pi$ found on line~\ref{vcl:decomposition} in the iteration processing $f'$, and for each $j \in [m]$, let $T_j, D_j, f_j, S_j$, and $\nu_j$ be the values computed on line~\ref{vcl:values}. By Lemma~\ref{lem:solution-decomposition}, $S$ is a solution of $\pi$.
To simplify notation, we denote $\delta(A, B) = |\{uv \in E(G) \sep u \in \beta(A), v \in \beta(B), \{u, v\} \cap S \ne \emptyset\}|$ for any $A, B \seq V(G_i)$ or $A, B \seq V(G_{i-1})$.
Hence, by Definition~\ref{def:sol}, we need to show that $\nu = \delta(T, T)$.

Observe that $\delta(T, T) =  \sum_{1 \le j \le \ell \le m} \delta(T_j, T_\ell)$.
Since $\nu_j$ is the value of $S_j$ and $S \cap T_j = S_j$, $\nu_j = \delta(T_j, T_j)$ for any $j \in [m]$.
Let us fix $j,\ell \in [m]$ such that $j < \ell$.
Now it suffices to show that $\textsf{value}_{j\ell} = \delta(T_j, T_\ell)$ is true on line~\ref{vcl:solution}.
Observe that $\delta(T_j, T_\ell) = \sum_{u \in T_j, v \in T_\ell} \delta(\{u\}, \{v\})$.
Let us fix $u \in T_j$ and $v \in T_\ell$, and let $\delta(u,v) := \delta(\{u\}, \{v\})$.
Suppose that $\delta(u, v) > 0$, which means that $u \in D_j$ or $v \in D_\ell$ because $S \cap \beta(\{u, v\}) \ne \emptyset$, see Definition~\ref{def:extended-profile}.
Observe that $uv \notin R(G_{i-1})$ by condition~\ref{decomp-red} in the definition of a basic decomposition, see Section~\ref{sec:prelims}.
Since $\delta(u, v) > 0$, we obtain $uv \in B(G_{i-1})$.
Finally, observe that in the iteration of the loop on lines~\ref{vcl:for-vertex}-\ref{vcl:count-edges}, $\delta(u, v)$ is added to $\textsf{value}_{j\ell}$.
Indeed, there are all possible edges between $\beta(u)$ and $\beta(v)$ in $G$, $f_j(u) = |S \cap \beta(u)|$ and $f_\ell(v) = |S \cap \beta(v)|$ by Definition~\ref{def:sol}, and the last term avoids double-counting the edges with both endpoints in $S$.
We have shown that $\nu = \delta(T, T)$ as desired.
\end{proof}

The following lemma is the counterpart of
Lemma~\ref{lem-ds:optimal-solution} in the partial dominating set
algorithm.

\begin{lemma}\label{lem-vc:optimal-solution}
For each $i \in [n]$ and each extended $i$-profile $\pi$,
if the final value of $\sigma_i(\pi)$ computed by Algorithm~\ref{alg:vertex-cover} is $(S, \nu)$, then $S$ is an optimal solution of $\pi$ and $\nu$ is the value of $S$.
\end{lemma}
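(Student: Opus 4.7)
The plan is to proceed by induction on $i$, closely mirroring the proof of Lemma~\ref{lem-ds:optimal-solution} but accounting for the different notion of value. For the base case $i = 1$, any extended $1$-profile has the form $(\{u\}, D, \emptyset, f)$ for a single vertex $u \in V(G)$, and since $G$ has no self-loops the number of edges of $G$ with an endpoint in $S$ and the other in $\beta(\{u\}) = \{u\}$ is $0$ for every solution $S$, matching the assignment on line~\ref{vcl:one}. For $i > 1$ with $v \notin T$, the profile $\pi$ is simultaneously an extended $(i-1)$-profile, so line~\ref{vcl:easy} combined with the induction hypothesis settles the case.

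For the main case $v \in T$, I would fix an optimal solution $S^*$ of $\pi$ of value $\nu^*$ and define $f'\colon T' \to [0,k]$ by $f'(u) = |S^* \cap \beta(u)|$. Exactly as in the proof of Lemma~\ref{lem-ds:optimal-solution}, $f'$ is compatible with $\pi$, so consider the iteration processing $f'$, let $\{\pi_j \sep j \in [m]\}$ be the $f'$-decomposition found on line~\ref{vcl:decomposition}, and put $S_j^* := S^* \cap \beta(T_j)$. Since $f_j$ is the restriction of $f'$ to $T_j$, each $S_j^*$ is a solution of $\pi_j$; by the induction hypothesis, the stored pair $(S_j, \nu_j)$ is an optimal solution of $\pi_j$, so $\nu_j$ is at least the value of $S_j^*$. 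The goal is then to deduce $\nu \ge \nu^*$, which combined with Lemma~\ref{lem-vc:is-solution} forces $\nu = \nu^*$ and makes $S$ an optimal solution of $\pi$.

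The crux is to show that each cross-term $\textsf{value}_{j\ell}$ computed on line~\ref{vcl:count-edges} equals the number of edges of $G$ between $\beta(T_j)$ and $\beta(T_\ell)$ with at least one endpoint in $S^*$. By condition~\ref{decomp-red} of a basic decomposition there are no red edges between $D_j$ and $T_\ell$ nor between $T_j$ and $D_\ell$, so every such covered edge arises from a black edge $xy \in B(G_{i-1})$ with $x \in T_j$ and $y \in T_\ell$. Each such black edge semi-induces a complete bipartite graph between $\beta(x)$ and $\beta(y)$ in $G$, so by inclusion--exclusion the number of its edges covered by $S^*$ is exactly $f'(x) \cdot |\beta(y)| + f'(y) \cdot |\beta(x)| - f'(x) \cdot f'(y)$, which is precisely what line~\ref{vcl:count-edges} accumulates. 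Summing, $\nu^* = \sum_j (\text{value of } S_j^*) + \sum_{j < \ell} \textsf{value}_{j\ell} \le \sum_j \nu_j + \sum_{j < \ell} \textsf{value}_{j\ell} = \nu$. The main obstacle is exactly this cross-term invariance: the right-hand side of line~\ref{vcl:count-edges} depends only on $f'$ and the bag sizes, not on which specific vertices of the bags are chosen, which is what allows the constructed $S$ to inherit the optimal cross-contributions of $S^*$. The concluding argument---that $\sigma_i(\pi)$ cannot degrade in later iterations of the $f'$-loop because by Lemma~\ref{lem-vc:is-solution} any stored value is the value of a bona fide solution of $\pi$ and hence at most $\nu^*$---carries over verbatim from the proof of Lemma~\ref{lem-ds:optimal-solution}.
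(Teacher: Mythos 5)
Your proof is correct and follows essentially the same route as the paper: induction on $i$, defining $f'$ from an optimal solution $S^*$, restricting $S^*$ to the parts of the $f'$-decomposition, and using the fact that the cross-terms on line~\ref{vcl:count-edges} depend only on $f'$ and bag sizes. The only cosmetic difference is that you conclude via the direct inequality $\nu^* \le \nu$ together with Lemma~\ref{lem-vc:is-solution}, whereas the paper phrases the same comparison as a proof by contradiction.
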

\begin{proof}
Let $i \in [n]$ and $\pi = (T, D, \emptyset, f)$ be an extended $i$-profile.
First, suppose that $i = 1$.
Since $T = \{u\}$ for some $u \in V(G)$, there are no edges in $G[\beta(T)] = G[T]$, and the value of the only solution $D$ of $\pi$ is clearly 0, see line~\ref{vcl:one}.
Now suppose that $i > 1$.
Let $v \in V(G_i)$ be the new vertex in $G_i$ and let $u_1, u_2 \in V(G_{i-1})$ be the two vertices contracted into $v$.
If $v \notin T$, then $\pi$ is also an $(i-1)$-profile and, by induction hypothesis, $\sigma_i(\pi) = \sigma_{i-1}(\pi)$ is an optimal solution of $\pi$, see line~\ref{vcl:easy}.
Hence, let us assume that $v \in T$ and let $T' = (T\setminus v)\cup \{u_1,u_2\}$.
By induction hypothesis and Lemma~\ref{lem-vc:is-solution}, $\pi$ has a solution since there is a function compatible with $\pi$.

Let $S^*$ be an optimal solution of $\pi$, let $\nu^*$ be the value of $S^*$, and let $f'\colon T' \rightarrow [0, k]$ be defined as $f'(u) = |S^* \cap \beta(u)|$.
As in the proof of Lemma~\ref{lem-ds:optimal-solution}, $f'$ is compatible with $\pi$.
Let us consider the iteration of the loop on lines~\ref{vcl:for-profile}-\ref{vcl:end-for} processing $\pi$ and the iteration of the loop on lines~\ref{vcl:for-function}-\ref{vcl:end-for} processing $f'$.
Let $\{\pi_j \sep j \in [m]\}$ be the $f'$-decomposition of $\pi$ found on line~\ref{vcl:decomposition}.
For each $j \in [m]$, let $T_j, D_j, f_j, S_j$ and $\nu_j$ be as in the algorithm.
Let $\nu$ and $S$ be the values computed on line~\ref{vcl:solution}.
By Lemma~\ref{lem-vc:is-solution}, $S$ is a solution of $\pi$ with value~$\nu$.
Similarly to the proof of Lemma~\ref{lem-vc:is-solution}, we denote $\delta(A, B) = |\{uv \in E(G) \sep u \in \beta(A), v \in \beta(B), \{u, v\} \cap S \ne \emptyset\}|$ and $\delta^*(A) =|\{uv \in E(G) \sep u \in \beta(A), v \in \beta(B), \{u, v\} \cap S^* \ne \emptyset\}|$ for any $A \seq V(G_i)$ or $A \seq V(G_{i-1})$.

Suppose for contradiction that $\nu < \nu^*$, i.e., $\delta(T, T) < \delta^*(T, T)$.
Observe that $\delta(T, T) = \sum_{1 \le j \le \ell \le m} \delta(T_j, T_\ell)$ and $\delta^*(T, T) = \sum_{1 \le j \le \ell \le m} \delta^*(T_j, T_\ell)$.
Hence, there are $j,\ell \in [m]$ such that $\delta(T_j, T_\ell) < \delta^*(T_j, T_\ell)$.
Observe that for each $j, \ell \in [m]$, $j \ne \ell$, we have \[\delta(T_j, T_\ell) = \sum_{\substack{ u \in T_j, v \in T_\ell,\\ uv \in B(G_{i-1})}} f_j(u) \cdot |\beta(v)| + f_\ell(v) \cdot |\beta(u)| - f_j(u) \cdot f_\ell(v) = \delta^*(T_j, T_\ell).\]
Indeed, if $u \in T_j, v \in T_\ell$, and $uv \in R(G_{i-1})$, then $f_j(u) = 0 = f_\ell(v)$ by condition~\ref{decomp-red} in the definition of a basic decomposition, see Section~\ref{sec:prelims}.
Hence, there is $j \in [m]$ such that $\delta(T_j, T_j) < \delta^*(T_j, T_j)$.
However, $\delta(T_j, T_j)$ is the value of $S_j$ and $\delta^*(T_j, T_j)$ is the value of $S^* \cap \beta(T_j)$, which is a solution of $\pi_j$.
This is a contradiction since $S_j$ is an optimal solution of $\pi_j$ by induction hypothesis.
Hence, $\nu = \nu^*$, and $S$ is an optimal solution of $\pi$, too.

After the iteration of the loop on lines~\ref{vcl:for-function}-\ref{vcl:end-for} processing $f'$, $\sigma_i(\pi) = (S, \nu)$.
Observe that the second component of $\sigma_i(\pi)$ cannot change in any of the following iterations because $\nu$ is always the value of some solution of $\pi$ by Lemma~\ref{lem-vc:is-solution}.
Hence, after all functions compatible with $\pi$ have been processed, $\sigma_i(\pi)$ is indeed an optimal solution of $\pi$ and its value, as desired.
\end{proof}

We are now ready to prove the main theorem of this section. 
\begin{proof}[Proof of Theorem~\ref{thm:vertex-cover}]
Let $\iota = (G, C = (G = G_1, \ldots, G_n), k, t)$ be the input instance, where $C$ is contraction sequence of width $d$.
Let $V(G_n) = \{u\}$ and let $\pi = (\{u\}, \{u\}, \emptyset, \{(u,k)\})$.
Note that $\pi$ is an extended $n$-profile unless $n < k$, in which case, $(G, k, t)$ is a NO instance.
By Lemma~\ref{lem-vc:optimal-solution}, $S$ is an optimal solution of $\pi$ with value $\nu$, where $(S, \nu)$ is the pair computed by Algorithm~\ref{alg:vertex-cover}.
By Definition~\ref{def:sol}, $|S| = k$ and $\nu = |\{uv \in E(G) \sep u \in S, v \in \beta(\{u\})\}| = |\{uv \in E(G) \sep u \in S\}|$.
Hence, $(G, k, t)$ is a YES instance if and only if $\nu \ge t$.

The fact that Algorithm~\ref{alg:vertex-cover} achieves the desired running time can be proven analogously as for Algorithm~\ref{alg:dom-set}, see the proof of Theorem~\ref{thm:dominating-set}.
\end{proof}

\newcommand{\logic}{($\exists^*\sum\#$QF)}

\section{\logic Model Checking}\label{sec:logic}

We now turn our attention to the framework mentioned in the introduction.
For a background in (counting) logic, we refer the readers to~\cite{KuskeS2017}.
We consider a generalization of a fragment of the counting logic FO(>0), namely problems expressible by formulas of the form $\exists x_1 \cdots \exists x_k \sum_{\alpha \in I} \# y \, \psi_\alpha(x_1, \dots, x_k, y) \geq t$ for some quantifier-free formulas $\psi_\alpha$, $\alpha \in I$, and $t \in \N$.
Such a formula is true in a graph $G$ if there are vertices $v_1, \ldots, v_k \in V(G)$ such that $\sum_{\alpha \in I} \# y \, \psi_\alpha(v_1, \dots v_k, y) \geq t$, where $\# y \, \psi_\alpha(v_1, \dots, v_k, y)$ is the number of vertices $w \in V(G)$ such that $\psi_\alpha(v_1, \dots, v_k, w)$ holds in $G$.

We are interested in giving an FPT algorithm on graph classes of bounded twin-width to solve the model-checking problem for this logic. 

\decisionproblem{\logic Model Checking using twin-width}
{A graph $G$, a contraction sequence for $G$ of width $d$, a finite set $I$, a quantifier-free formula $\psi_\alpha$  with $k+1$ free variables for each $\alpha \in I$, and $t \in \N$}
{Are there vertices $v_1,\dots,v_k \in V(G)$ such that $G \models \sum_{\alpha \in I} \# y \, \psi_\alpha(v_1 \dots v_k,y) \geq t$?}
{$k + d$\footnote{Note that we assume that each formula $\psi_\alpha$ is normalized, i.e., there is no shorter formula equivalent to $\psi_\alpha$. Without this assumption, the parameter would have to be $k + d + \sum_{\alpha \in I}|\psi_\alpha|$.}}

It is not hard to express \textsc{Partial Dominating Set} in this logic, see Section~\ref{sec:intro}.
Let us illustrate how \textsc{Partial Vertex Cover} can be expressed.

\begin{example}\label{example:pvc}
We can express \textsc{Partial Vertex Cover} with $I = [k]$ and the following formulas.
For $\alpha \in I$, let $\psi_\alpha(x_1,\dots,x_k,y) \equiv E(x_\alpha, y) \land \bigwedge_{\gamma \in [\alpha-1]} y \ne x_\gamma$.
Intuitively, $\psi_\alpha$ counts the number of edges covered by $x_\alpha$.
To avoid double-counting, the potential edge $x_{\alpha}x_\gamma$ is counted by $\psi_\alpha$ only if $\alpha < \gamma$ (and otherwise it is counted by $\psi_\gamma$).
\end{example}

\subsection{Virtual Profiles}
In Section~\ref{sec:problems}, we used extended profiles, which were obtained by ``extending'' basic profiles introduced in Section~\ref{sec:prelims}.
In our model checking algorithm, we will again ``extend'' the basic profiles in a certain way.
This time, however, these profiles will need to contain even more information than the extended profiles.
Intuitively, the reason behind this change is that the $k$ existential variables in the considered formula are not interchangeable, whereas the $k$ vertices in a partial dominating set (or vertex cover) are.
Hence, a part of a virtual profile is a function $f$, which describes where each existential variable should be realized. 

However, we need to be able to capture the fact that some variables may be realized in a part of $G$ not covered by the profile; this is done by adding a bounded number of ``virtual'' vertices and edges ($\ca V$ and $\ca E$).  Note that the virtual vertices are ``new objects''; e.g., $V(G) \cap\ca V= \emptyset$.
Moreover, we need to remember which atomic formulas ($x_a = x_b$, $E(x_a, x_b)$, and negations of these two) are satisfied by which existential variables: this information is encoded in a vertex-labeled graph $H$.
Note that if a virtual profile has a solution, then $H$ must be compatible with $f$ and $\ca E$.

\newcommand{\sol}{\mathbf s}

\begin{definition}\label{def:virtual-profiles}
Let $i \in [n]$ and let $(T, D)$ be a basic $i$-profile.
A \emph{virtual $i$-profile} is a tuple $\pi = (T, D, \ca V, \ca E, f, H)$, where:
\begin{itemize}
\item $\ca V$ is the set of at most $k$ \emph{virtual} vertices.
\item $\ca E \seq \{uv \sep u \in \ca V \cup T, v \in \ca V\}$ is the set of \emph{virtual} edges.
\item $f\colon [k] \rightarrow D \cup \ca V$ is a surjective function such that $|f^{-1}(u)| \le |\beta(u)|$ for each $u \in D$.
\item $H$ is a graph with at most $k$ vertices with vertex set labeled as $\{h_1,\ldots,h_k\}$.
\end{itemize}
The \emph{expanded graph} of $\pi$, denoted $G_\pi$, is the graph with vertex set $\beta(T) \cup \ca V$ and edge set $E(G[\beta(T)]) \cup \{uv \in \ca E \sep u, v \in \ca V\} \cup \{u_0v \sep uv \in \ca E, v \in\ca V, u \in T, u_0 \in \beta(u)\}$.
A tuple $\sol = (s_1, \ldots, s_k) \in V(G_\pi)^k$ is a \emph{solution} of $\pi$ if:
\begin{itemize}
\item for each $a \in [k]$, $s_a = f(a)$ if $f(a) \in \ca V$, and $s_a \in \beta(f(a))$ if $f(a) \in D$;
\item the function $s_a \mapsto h_a$ is an isomorphism from $G_\pi[\{s_1, \ldots, s_k\}]$ to $H$.
\end{itemize} 
The \emph{value} of $\sol$ is defined as $\sum_{\alpha \in I} |\{u \in \beta(T) \sep G_\pi \models \psi_\alpha(s_1, \ldots, s_k, u)\}|$.
A solution $\sol$ of $\pi$ is \emph{optimal} if no other solution of $\pi$ has higher value than $\sol$.
\end{definition}

\smallskip

The expanded graph $G_\pi$ illustrates the concept of virtual profiles: 
the subgraph of $G$ induced by the bags from $T$ is expanded with the virtual vertices.
The virtual vertices are uniformly connected to vertices from each bag in $T$ as dictated by $\ca E$. 
A solution $\sol$ of $\pi$ must then adhere to $\pi$ in the following way:
each $s_a$ in $\sol$ has to be the correct virtual vertex or a vertex from the correct bag as dictated by $f$;
and $\sol$ has to induce the desired subgraph $H$ in the expanded graph $G_\pi$.
When an optimal solution has been computed for each virtual profile, we can find the answer to the  problem by inspecting an
$n$-profile $\pi^*$ such that $\ca V$ and $\ca E$ are empty, $T = V(G_n)$ and $\beta(T) = V(G)$, and $f$ is the unique function from $[k]$ to $V(G_n)$.
In contrast to the case of \textsc{Partial Dominating Set}, the profile $\pi^*$ is not unique; one has to try all possible choices of $H$.

Now we show that the number of virtual profiles is bounded.

\begin{lemma}\label{lem:number-of-virtual}
If $i \in [n]$ and $v \in V(G_i)$, then the number of virtual $i$-profiles $(T, D, \ca V, \ca E, f, H)$ such that $v \in T$ is in $2^{\ca O(k^2d\log(d))}$.
Moreover, such profiles can be enumerated in time $2^{\ca O(k^2d\log(d))}$.
\end{lemma}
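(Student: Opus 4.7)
The plan is to bound the number of choices for each component of a virtual $i$-profile $(T, D, \ca V, \ca E, f, H)$ independently, and then multiply the bounds, exactly in the spirit of the proof of Lemma~\ref{lem:number-of-extended}. First I would invoke Lemma~\ref{lem:number-of-basic} to say that the number of underlying basic $i$-profiles $(T, D)$ with $v \in T$ is in $d^{\ca O(kd)} = 2^{\ca O(kd \log d)}$, and in particular $|T| \le k(d+1)$ and $|D| \le k$ are fixed throughout the rest of the count.

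Next I would count the remaining components. Since virtual vertices are fresh objects, we can assume they are drawn from a canonical pool $\{v_1^\star, \ldots, v_k^\star\}$, so $\ca V$ is determined by its size and there are at most $k+1$ choices. Given $T$ and $\ca V$, the set $\ca E$ is a subset of a ground set of size at most $|\ca V|\,(|\ca V|+|T|) \le k \cdot (k + k(d+1)) = \ca O(k^2 d)$, giving $2^{\ca O(k^2 d)}$ possible $\ca E$. The function $f\colon [k] \to D \cup \ca V$ has codomain of size at most $2k$, so there are at most $(2k)^k = 2^{\ca O(k \log k)}$ choices. Finally, $H$ is a graph on a labeled vertex set of size at most $k$, so there are $2^{\ca O(k^2)}$ choices.

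Multiplying these four bounds yields
\[
2^{\ca O(kd \log d)} \cdot (k+1) \cdot 2^{\ca O(k^2 d)} \cdot 2^{\ca O(k\log k)} \cdot 2^{\ca O(k^2)} = 2^{\ca O(k^2 d \log d)},
\]
since each exponent is absorbed by $\ca O(k^2 d \log d)$. For enumeration, I would simply combine the enumeration of basic profiles from Lemma~\ref{lem:number-of-basic} with the straightforward enumeration of $\ca V$, all subsets $\ca E$ of the polynomially-sized ground set, all functions $f$, and all labeled graphs $H$; the work spent on each virtual profile during enumeration is polynomial in $k$ and $d$, which is absorbed into the exponential bound.

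I do not expect any serious obstacle here: the statement is purely combinatorial and every component has an obvious bound. The only mild subtlety is making sure the virtual vertex set $\ca V$ is counted correctly — one must note that the identities of the virtual vertices are immaterial up to renaming, so fixing a canonical pool of $k$ labels is harmless — and checking that the $\ca O(k^2 d)$ bound on the number of potential virtual edges is indeed dominated by $\ca O(k^2 d \log d)$ in the exponent.
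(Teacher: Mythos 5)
Your proposal is correct and follows essentially the same route as the paper: bound the basic profiles via Lemma~\ref{lem:number-of-basic}, then multiply by the counts for $\ca V$, $\ca E$, $f$, and $H$. The only (harmless) difference is that you bound the codomain of $f$ by $|D \cup \ca V| \le 2k$, which is slightly tighter than the paper's $(k + k(d+1))^k$ bound, and both are absorbed into $2^{\ca O(k^2 d \log d)}$.
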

\begin{proof}
By Lemma~\ref{lem:number-of-basic}, there is $d^{\ca O(kd)}$ basic $i$-profiles $(T, D)$.
Let us bound the number of possible graphs $H$.
There are at most $k\cdot 2^{k^2}$ graphs on at most $k$ vertices and for each of them,
there are at most $k^k$ vertex-labelings with labels $\{h_1, \ldots, h_k\}$.
Hence, there are $2^{\ca O(k^2)}$ choices of $H$.
There are $k+1$ choices of $\ca V$ (it suffices to choose the size of $\ca V$)
and at most $2^{k \cdot (k+k(d+1))} \in 2^{\ca O(k^2d)}$ choices of $\ca E$.
Finally, there are at most $(k+k(d+1))^k \in 2^{\ca O(k \log(kd))}$ choices of $f$.
Hence, there are $2^{\ca O(k^2d\log(d))}$ virtual $i$-profiles containing $v$, and the claim about enumeration follows from Lemma~\ref{lem:number-of-basic}.
\end{proof}

\subsection{Virtual Decompositions}

To achieve our goal of finding optimal solutions of virtual profiles, we continue similarly as in the previous section about \textsc{Partial Dominating Set} and \textsc{Partial Vertex Cover}. 
We will again need a notion of a decomposition for our virtual profiles.
However, we must ensure that the profiles in the decomposition are ``compatible'' with each other with respect to the virtual vertices.
Informally,
the virtual edges must correspond to the black edges between the profiles in $G_{i-1}$.

Let $i \in [2, n]$, let $v \in V(G_i)$ be the new vertex in $G_i$, let $u_1, u_2 \in V(G_{i-1})$ be the two vertices contracted into $v$, let $\pi_0 = (T, D)$ be a basic $i$-profile such that $v \in T$, let $\pi = (T, D, \ca V, \ca E, f, H)$ be a virtual $i$-profile, let $T' = T \setminus \{v\}\cup\{u_1, u_2\}$, and let $\ca E' = \{uw \in \ca E \sep u,w \in T' \cup \ca V\} \cup \{u_1w, u_2w \sep vw \in \ca E\}$.
We say that a function $f'\colon [k] \rightarrow T' \cup \ca V$ is \emph{compatible} with $\pi$ if $f'(a) \in \{u_1, u_2\}$ when $f(a) = v$, and $f'(a) = f(a)$ otherwise.

\begin{figure}
    \centering
    \begin{subfigure}[t]{0.16\textwidth}
    \centering
\begin{tikzpicture}[line cap=round,line join=round,>=triangle 45,x=1.0cm,y=1.0cm]
\begin{scope}[every node/.style={draw, circle, minimum width=12pt, minimum height = 12pt, inner sep=1pt}]
\node (c) {$1$};
\node (d)[above of = c] {$3$};
\node (e)[above of = d] {$56$};
\node (f)[right of = d] {$4$};
\node[left of = d] (a) {$2$};
\end{scope}
\draw (a)--(c)--(f)--(e)--(d)--(c);
\end{tikzpicture}    \end{subfigure}
    \hfill
    \kern-8pt 
    \begin{subfigure}[t]{0.16\textwidth}
        \centering
\begin{tikzpicture}[line cap=round,line join=round,>=triangle 45,x=1.0cm,y=1.0cm]
\begin{scope}[every node/.style={draw, circle, minimum width=12pt, minimum height = 12pt, inner sep=1pt}]
\node (a)[star, star points = 6, fill = violet!30!white] {$1$};
\node (bd)[above of = a, minimum width = 15pt] {$234$};

\node (x) [above of = bd, opacity=0] {};

\node (c)[left of = x] {};
\node (e)[right of = x] {};
\node (f)[above of = x] {$56$};

\draw[color = violet, thick] (c)--(a)--(bd);
\draw (c)--(f);
\draw[color=red, thick] (c)--(bd)--(f)--(e)--(bd);
\end{scope}
\end{tikzpicture}    \end{subfigure}
    \hfill 
    \begin{subfigure}[t]{0.16\textwidth}
        \centering
\begin{tikzpicture}[line cap=round,line join=round,>=triangle 45,x=1.0cm,y=1.0cm]
\begin{scope}[every node/.style={draw, circle, minimum width=12pt, minimum height = 12pt, inner sep=1pt}]
\node (a)[star, star points = 6, fill = violet!30!white] {$1$};

\node(x) [above of=a, opacity = 0]{};
\node(y) [above of=x, opacity = 0]{};

\node (b)[left of = x, fill=green!30] {$2$};
\node (d)[right of = x, fill=orange!20] {$34$};
\node (c)[above of = b, fill=green!30] {};
\node (e)[above of = d, fill=orange!20] {};
\node (f)[above of = y, fill=orange!20] {$56$};

\draw (d)--(f)--(c);
\draw (b)--(e);
\draw[color = violet, thick] (b)--(a) --(c);
\draw[color = violet, thick] (a)--(d);
\draw[color=red, thick] (b)--(c);
\draw[color=red, thick] (d)--(e)--(f);
\end{scope}
\end{tikzpicture}\end{subfigure}
    \hfill
    \begin{subfigure}[t]{0.16\textwidth}
        \centering
\begin{tikzpicture}[line cap=round,line join=round,>=triangle 45,x=1.0cm,y=1.0cm]
\begin{scope}[every node/.style={draw, circle, minimum width=12pt, minimum height = 12pt, inner sep=1pt, fill = green!30}]
\node (a) {$2$};
\node (b)[above of = a] {};
\end{scope}

\begin{scope}[every node/.style={draw, star, star points = 6, minimum width=12pt, fill= violet!30!white, minimum height = 12pt, inner sep=0pt}]
\node (c)[right of = a] {$1$};
\node (d)[above of = c] {$3$};
\node (e)[above of = d] {$56$};
\node (f)[right of = d] {$4$};
\end{scope}

\draw [color=red, thick] (a)--(b);
\draw [color=violet, thick] (a)--(c)--(b)--(e)--(d)--(c)--(f)--(e);
\end{tikzpicture}\end{subfigure}
    \hfill
    \begin{subfigure}[t]{0.16\textwidth}
        \centering
        \begin{tikzpicture}[line cap=round,line join=round,>=triangle 45,x=1.0cm,y=1.0cm]

\begin{scope}[every node/.style={draw, circle, minimum width=12pt, minimum height = 12pt, inner sep=1pt, fill=orange!20}]
\node (a) {$34$};
\node (b)[above of = a] {};
\node (c)[left of = b] {$56$};
\end{scope}

\begin{scope}[every node/.style={draw, star, star points = 6, minimum width=12pt, fill= violet!30!white, minimum height = 12pt, inner sep=0pt}]
\node (d)[right of = a] {$1$};
\node (e)[above of = d] {$2$};
\end{scope}

\draw (a)--(c);
\draw [color=red, thick] (a)--(b)--(c);
\draw [color=violet, thick] (a)--(d)--(e)--(b);
\end{tikzpicture}
    \end{subfigure}
    
    \caption{\textbf{Leftmost:} A graph $H$ with vertex set $\{h_1, \ldots, h_6\}$. The circle representing $h_a$ contains the digit $a$. \textbf{Middle left:} A virtual $i$-profile $\pi = (T, D, \ca V, \ca E, f, H)$ for $k = 6$. Virtual vertices are represented by violet stars and virtual edges by violet lines. $T$ is the set of all non-virtual vertices. For each $a \in [6]$, the vertex $f(a)$ contains the digit $a$. $D$ is the set $f(\{2, 5\})$. \textbf{Middle:} The circles, and black and red edges represent $G_{i-1}[T']$, the violet star is the vertex in $\ca V$, and the violet lines represent $\ca E'$. The digits represent the function $f'$. The set $T'$ has two red-connected components: $T_1$ colored in green and $T_2$ colored in orange. \textbf{Middle right:} The profile $(T_1, D_1, \ca V_1, \ca E_1, f_1, H)$. Note that the facts that $f_1(5) = f_1(6)$ and $f_1(3)f_1(4) \notin \ca E_1$ follow from $H$; they cannot be deduced from $T$ or $T'$. \textbf{Rightmost:} The profile $(T_2, D_2, \ca V_2, \ca E_2, f_2, H)$.}
    \label{fig:virtual}
\end{figure}
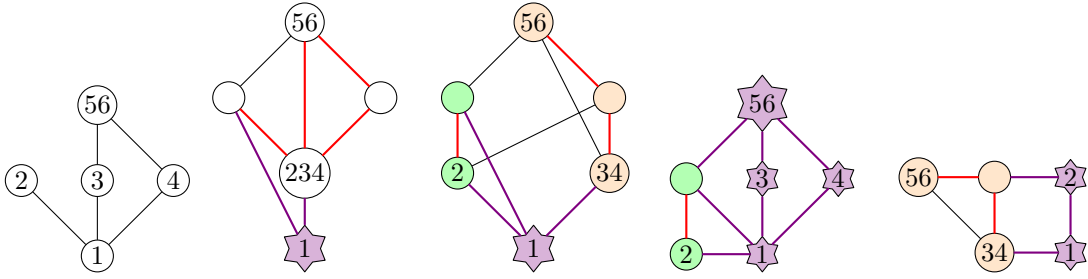

If a function $f'$ is compatible with $\pi$, then a set $\{\pi_j = (T_j, D_j, \ca V_j, \ca E_j, f_j, H) \sep j \in [m]\}$ of virtual $(i-1)$-profiles is an $f'$-\emph{decomposition} if $D' = \range(f') \cap T'$,
$\{(T_1,D_1),\ldots, (T_m, D_m)\}$ is a $D'$-decomposition of $\pi_0$, and for each $j \in [m]$:
\begin{itemize}
\item $\ca V_j = \ca V \cup \{h_a \sep f'(a) \in D' \setminus D_j\}$.
\item $f_j(a) = f'(a)$ if $f'(a) \in D_j \cup \ca V$, and $f_j(a) = h_a$ if $f'(a) \in D' \setminus D_j$.
\item $\ca E_j$ is the union of the following sets.
\begin{itemize}
\item $\{uw \in \ca E' \sep u,w \in \ca V \cup T_j\}$.
\item $\{uh_a \sep u \in \ca V \cup T_j, f'(a) = w \in D' \setminus D_j, uw \in \ca E' \cup B(G_{i-1})\}$.
\item $\{h_ah_b \in E(H) \sep f'(a), f'(b) \in D'\setminus D_j\}$.
\end{itemize}
\end{itemize}

The intuition behind this construction of $\pi_j$ is that for each vertex in $D' \setminus D_j$, which are the vertices with some variable assigned to them, we add a new virtual vertex to $\ca V_j$, and $\ca E_j$ contains $\ca E'$ ``restricted'' to $T_j$ (the first term), edges between a new virtual vertex $h_a$ and an ``old'' virtual vertex $u \in \ca V$ or a vertex $u \in T_j$ (the second term), and edges between two new virtual vertices (the third term).
Note that we used the vertices of $H$ to label the new virtual vertices, which means that two new virtual vertices $h_a$ and $h_b$ may be equal even when $a \ne b$; this ensures the possibility of an isomorphism from a solution of $\pi_j$ to $H$.
See Figure~\ref{fig:virtual} for an illustration of a virtual decomposition.

Again, we prove that a small $f'$-decomposition exists for each $f'$ compatible with $\pi$.

\begin{lemma}\label{lem:virtual-decomposition}
If $i \in [2, n]$, $V(G_i) \setminus V(G_{i-1}) = \{v\}$, $V(G_{i-1}) \setminus V(G_i) = \{u_1, u_2\}$, $\pi = (T, D, \ca V,\ca E, f, H)$ is a virtual $i$-profile such that $v \in T$, $T' = T \setminus \{v\}\cup\{u_1, u_2\}$, and $f'\colon [k] \rightarrow T' \cup \ca V$ is a function compatible with $\pi$, then there is an $f'$-decomposition $\ca D$ of $\pi$ with cardinality at most $d+2$.
Moreover, $\ca D$ can be found in time $\ca O(k^2d^2)$.
\end{lemma}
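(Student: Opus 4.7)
The plan is to mimic the proof of Lemma~\ref{lem:extended-decomposition}: first produce a small basic decomposition, then lift it to a virtual one by attaching the prescribed virtual data ($\ca V_j$, $f_j$, $\ca E_j$) to each part while reusing the labeled graph $H$ unchanged.

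I would start by setting $D' = \range(f') \cap T'$. Since $f'$ is compatible with $\pi$ and $f$ is surjective onto $D \cup \ca V$, a short bookkeeping check shows that $D'$ is compatible with the basic $i$-profile $(T, D)$ in the sense of Section~\ref{sec:prelims}: the non-$v$ values of $f$ are preserved by $f'$, and $v \in D$ iff some $a$ has $f(a) = v$ iff some $a$ has $f'(a) \in \{u_1, u_2\}$. Thus Lemma~\ref{lem:basic-decomposition} yields a $D'$-decomposition $\{(T_1, D_1), \ldots, (T_m, D_m)\}$ of $(T, D)$ with $m \le d+2$, computable in time $\ca O(kd^2)$. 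For each $j \in [m]$, I then define $\ca V_j$, $f_j$, $\ca E_j$ exactly as dictated in the definition preceding the lemma.

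The main verification is that each tuple $\pi_j = (T_j, D_j, \ca V_j, \ca E_j, f_j, H)$ really is a virtual $(i-1)$-profile. The bound $|\ca V_j| \le k$ follows because $\ca V$ and $D'$ are disjoint subsets of $\range(f')$, so $|\ca V| + |D' \setminus D_j| \le k$. Surjectivity of $f_j$ onto $D_j \cup \ca V_j$ is immediate: each $u \in \ca V$ is still hit through the $f(a) = f'(a) \in \ca V$ clause, every new label $h_a$ is hit by $a$ itself, and every $u \in D_j$ is hit by any index $a$ with $f'(a) = u$. The cardinality condition $|f_j^{-1}(u)| \le |\beta(u)|$ for $u \in D_j$ reduces to the same condition for $f'$, which must be taken as (implicitly) part of the compatibility assumption, exactly analogous to the $f'(u_\ell) \le |\beta(u_\ell)|$ requirement in the extended-profile case. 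Finally, $\ca E_j$ is defined so that every virtual edge has at least one endpoint in $\ca V_j$, completing the check.

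For the running time, Lemma~\ref{lem:basic-decomposition} costs $\ca O(kd^2)$. Assembling $\ca E'$ from $\ca E$ takes time $\ca O(|\ca E|) = \ca O(k^2 d)$. Then, for each of the $m \le d+2$ parts, building $\ca V_j$ and $f_j$ is linear in $k$, and constructing $\ca E_j$ consists of restricting $\ca E'$ to $\ca V \cup T_j$, scanning the $\ca O(k \cdot |T_j|) = \ca O(k^2 d)$ candidate pairs between $\ca V \cup T_j$ and $D' \setminus D_j$ for black edges of $G_{i-1}$, and copying the at most $k^2$ edges of $H$ incident to newly added labels. Summing over $j$ yields the advertised $\ca O(k^2 d^2)$ bound. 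The main subtlety is to notice that black edges in $G_{i-1}[T']$ only need to be inspected when they are incident to $D' \setminus D_j$; this is what keeps us from paying for the potentially quadratic number of black edges inside $G_{i-1}[T']$.
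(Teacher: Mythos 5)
Your proof takes essentially the same route as the paper's: set $D' = \range(f') \cap T'$, invoke Lemma~\ref{lem:basic-decomposition} to obtain a basic $D'$-decomposition of size at most $d+2$, attach $\ca V_j$, $\ca E_j$, $f_j$ exactly as prescribed by the definition of an $f'$-decomposition, and charge the cost of building the $\ca E_j$ to the pairs in $T' \times (D' \setminus D_j)$; the paper merely declares the verification ``easily seen'' where you spell it out, and your reading of the capacity condition $|f'^{-1}(u_\ell)| \le |\beta(u_\ell)|$ as implicit in compatibility is indeed the intended analogue of the extended-profile definition. One small correction to your verification: to get $|\ca V_j| \le k$ you should count the indices $a$ with $f'(a) \in D' \setminus D_j$ (disjoint from the preimage indices of $\ca V$) rather than $|D' \setminus D_j|$ itself, since several indices may map to the same bag while their labels $h_a$ are distinct vertices of $H$.
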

\begin{proof}
Let $D' = \range(f') \cap T'$.
By Lemma~\ref{lem:basic-decomposition}, a $D'$-decomposition $\ca D_0 = \{(T_j, D_j)\sep j\in[m]\}$ of $(T, D)$ such that $m \le d+2$ can be found in time $\ca O(kd^2)$.
For each $j \in [m]$, define $\ca V_j$, $\ca E_j$, and $f_j$ as in the definition of an $f'$-decomposition above.
Now it can be easily seen that $\{(T_j, D_j, \ca V_j, \ca E_j, f_j, H) \sep j \in [m]\}$ is the desired $f'$-decomposition of $\pi$.

Let $j \in [m]$ and observe that computing $\ca V_j$ and $f_j$ is straightforward.
To compute $\ca E_j$, we have to inspect all pairs $u \in T_j$ and $v \in D' \setminus D_j$.
The number of all such pairs, for all $j \in [m]$ in total, is bounded by $|T'| \cdot |D'| \in \ca O(k^2d)$.
After adding the time required to find $\ca D_0$, we obtain the stated running time.
\end{proof}

\subsection{Combining Solutions}

In Section~\ref{sec:problems}, we obtained a solution for an extended $i$-profile $\pi$ by simply taking the union of solutions of profiles in a decomposition of $\pi$.
In this section, combining solutions is more complicated because each solution is a tuple, which may contain some virtual vertices.

\begin{definition}\label{def:oplus}
Let $i \in [n]$, let $\pi_1 = (T_1, D_1, \ca V_1, \ca E_1, f_1, H)$ and $\pi_2 = (T_2, D_2, \ca V_2, \ca E_2, f_2, H)$ be two virtual $i$-profiles, and let $\sol^1 = (s_1^1,\ldots, s_k^1)$ and $\sol_2 = (s_1^2,\ldots, s_k^2)$ be solutions of $\pi_1$ and $\pi_2$, respectively.
By $\sol^1 \oplus \sol^2$, we denote the tuple $(s_1, \ldots, s_k) \in (\beta(D_1) \cup V(G_{\pi_2}))^k$ such that for each $a \in [k]$, $s_a = s_a^1$ if $f_1(a) \in D_1$, and $s_a = s_a^2$ otherwise.
\end{definition}

Note that in general, the operation $\oplus$ is neither associative nor commutative.
However, we will use it only to combine solutions of profiles in a decomposition (and when restricted to such inputs, $\oplus$ is both associative and commutative, which will be implicit in the proof of the following lemma).
Let $\sol = \bigoplus_{j \in [m]} \sol^j = \sol^1 \oplus (\sol^2 \oplus \ldots)$. Formally, $\sol = \sol^1$ if $m = 1$ and $\sol = \sol^1 \oplus \bigoplus_{j \in [2,m]} \sol^j$ otherwise.

\begin{lemma}\label{lem:oplus}
If $i \in [2,n]$, $\pi$ is a virtual $i$-profile, $\{\pi_j \sep j \in [m]\}$ is an $f'$-decomposition of $\pi$, and $\sol^j$ is a solution of $\pi_j$ for each $j \in [m]$, then $\sol = \bigoplus_{j \in [m]} \sol^j$ is a solution of $\pi$.
Moreover, if $\nu_j$ is the value of $\sol_j$, then the value of $\sol$ is $\sum_{j \in [m]} \nu_j$.
\end{lemma}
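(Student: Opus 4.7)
The plan is to verify that $\sol$ satisfies the two defining conditions of a solution (Definition~\ref{def:virtual-profiles}) and then compute its value by partitioning $\beta(T)$ according to the decomposition. Write $\sol^j = (s_1^j, \ldots, s_k^j)$ and $\sol = (s_1, \ldots, s_k)$; unfolding $\oplus$ iteratively, $s_a = s_a^j$ for the unique $j$ with $f'(a) \in D_j$, while $s_a = f(a) = s_a^j$ for every $j$ whenever $f(a) \in \ca V$ (so the iterated $\oplus$ is well-defined regardless of the bracketing).

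The ``location'' condition is direct. If $f(a) \in \ca V$ then $s_a = f(a)$. If $f(a) = v$, then by $f'$-compatibility $f'(a) \in \{u_1, u_2\}$, and the unique $\pi_j$ with $f'(a) \in D_j$ contributes $s_a \in \beta(f'(a)) \seq \beta(v) = \beta(f(a))$. If $f(a) \in D \setminus \{v\}$, then $f'(a) = f(a)$ and $s_a \in \beta(f(a))$.

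The main step---and the step I expect to be the hardest---is showing that $s_a \mapsto h_a$ is an isomorphism from $G_\pi[\{s_1, \ldots, s_k\}]$ to $H$, via a case analysis on how $a,b \in [k]$ land across the $D_j$'s and $\ca V$. If both $f'(a), f'(b) \in D_j \cup \ca V$ for the same $j$, the edge $s_as_b$ in $G_\pi$ matches $s_a^j s_b^j$ in $G_{\pi_j}$, since edges of $G[\beta(T_j)]$ and the $\ca E$-edges restricted to $\ca V \cup T_j$ are preserved by the first clause of the definition of $\ca E_j$. If $f'(a) \in D_j$ and $f'(b) \in D_\ell$ with $j \ne \ell$, then $\pi_j$ records the adjacency $s_a^j h_b$ via the second clause of $\ca E_j$, which is dictated by whether $f'(a)f'(b) \in B(G_{i-1})$; since condition~(\ref{decomp-red}) of a basic decomposition forbids a red edge between $T_j$ and $D_\ell$, this edge/non-edge is homogeneous and thus agrees with $s_as_b \in E(G)$. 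Finally, if $f'(a), f'(b) \in D' \setminus D_j$ for some $j$, then $\ca E_j$ records $h_a h_b$ directly by its third clause from $E(H)$, and consistency with $G_\pi$ is obtained by applying the previous case inside a $\pi_\ell$ that owns $a$ (or $b$). In every case $s_as_b \in E(G_\pi) \iff h_a h_b \in E(H)$; the equality relation is handled analogously but more simply. Combined with the hypothesis that each $\sol^j$ induces $H$ in $G_{\pi_j}$, this gives the required isomorphism.

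For the value, note that $\beta(T) = \bigsqcup_{j \in [m]} \beta(T_j)$, since $\{T_j\}$ partitions $T'$ and $\beta(v) = \beta(u_1) \cup \beta(u_2)$. For each $u \in \beta(T_j)$ and each $\alpha \in I$, the quantifier-freeness of $\psi_\alpha$ reduces $\psi_\alpha(s_1,\ldots,s_k,u)$ to atoms about pairs $(s_a, u)$; the same case analysis as above (with $u$ in $\beta(T_j)$ playing the role of one endpoint) shows $G_\pi \models \psi_\alpha(s_1,\ldots,s_k,u)$ iff $G_{\pi_j} \models \psi_\alpha(s_1^j,\ldots,s_k^j,u)$. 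Summing over $u$ and $\alpha$ recovers $\nu_j$ for each $j$, and summing over $j$ yields $\sum_{j \in [m]} \nu_j$, as required. The chief obstacle, as foreshadowed, is that this edge-by-edge case analysis must dovetail with all three clauses of the definition of $\ca E_j$ so that adjacency is transported consistently between $G_\pi$ and each $G_{\pi_j}$.
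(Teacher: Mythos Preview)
Your proposal is correct and follows essentially the same approach as the paper's proof: verify the location condition directly, establish the isomorphism to $H$ by a case analysis on where the indices $a,b$ land across the $D_j$'s and $\ca V$ (using condition~\ref{decomp-red} to rule out red edges in the cross-profile case), and then reuse the same case analysis with $u \in \beta(T_j)$ to transport the truth of each atom of $\psi_\alpha$ between $G_\pi$ and $G_{\pi_j}$. The paper organizes the isomorphism cases by the location of $s_a,s_b$ rather than of $f'(a),f'(b)$, but the content is the same.
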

\begin{proof}
Let $\pi = (T, D, \ca V,\ca E, f, H)$ and $\sol = (s_1, \ldots, s_k)$.
For each $j \in [m]$, let $\pi_j = (T_j, D_j, \ca V_j, \ca E_j, f_j, H)$ and let $\sol_j = (s_1^j, \ldots, s_k^j)$.
For each $j \in [m]$, let $G^j := G_{\pi_j}$.

First, let $a \in [k]$ be such that $f(a) \in\ca V$.
Observe that for each $j \in [m]$, $f(a) = f'(a) = f_j(a) = s_a^j$, which means that $s_a = f(a)$ as desired (because $s_a = s_a^j$ for some $j \in [m]$ by Definition~\ref{def:oplus}).
Second, let $a \in [k]$ be such that $f(a) \in D$ and let $j \in [m]$ be the unique index such that $u := f'(a) = f_j(a) \in D_j$.
By Definition~\ref{def:oplus}, $s_a = s_a^j \in \beta(u)$ because $f_\ell(a) \notin D_\ell$ for each $\ell \in [m] \setminus \{j\}$.
By definition of $f'$, if $u \in V(G_{i-1}) \cap V(G_i)$, then $f(a) = u$.
Otherwise, $f(a)$ is the new vertex in $G_i$, and $s_a \in \beta(u) \seq \beta(f(a))$.
In both cases, $s_a \in \beta(f(a))$ as desired.

Now we need to show that the function $s_a \mapsto h_a$ is an isomorphism from $G_\pi[\{s_1, \ldots, s_k\}]$ to $H$.
Let us fix $a,b \in [k]$.
Since $\sol_j$ is a solution of $\pi_j$ for each $j \in [m]$, it suffices to show that $s_as_b \in E(G_\pi) \leftrightarrow s_a^js_b^j \in E(G^j)$ and $s_a = s_b \leftrightarrow s_a^j = s_b^j$ for some $j \in [m]$.
First, if $s_a, s_b \in \ca V$ or $s_a, s_b \in \beta(T_j)$ for some $j \in [m]$, then $s_a = s_a^j$ and $s_b = s_b^j$, and $s_as_b \in E(G_\pi) \leftrightarrow s_as_b \in E(G) \cup \ca E \leftrightarrow s_a^js_b^j \in E(G^j)$ as required.
Moreover, we clearly have $s_a = s_b \leftrightarrow s_a^j = s_b^j$; note that in all other cases, $s_a \ne s_b$ and $s_a^j \ne s_b^j$.
Second, if $s_a \in \ca V$ and $s_b \in \beta(u) \seq \beta(u')$ for some $u \in T_j$, $j \in [m]$, and $u' \in T$, then $s_as_b \in E(G_\pi) \leftrightarrow s_au' \in \ca E \leftrightarrow s_au \in \ca E_j \leftrightarrow s_as_b \in E(G^j)$.
Finally, assume that $j,\ell \in [m]$, $j \ne \ell$, and $s_a \in \beta(T_j), s_b \in \beta(T_\ell)$.
Let $u=f'(a)= f_j(a)$ and $v = f'(b) = f_\ell(b)$.
By definition of a solution, we have $s_a \in \beta(u)$ and $s_b \in \beta(v)$.
Recall that $D' = \bigcup_{p\in [m]}D_p$, that $u, v \in D'$, and that $\{(T_p, D_p)\sep p \in [m]\}$ is a $D'$-decomposition of the basic $i$-profile $(T, D)$.
Hence, by condition~\ref{decomp-red} in the definition of a $D'$-decomposition, see Section~\ref{sec:prelims}, $uv \notin R(G_{i-1})$.
Now $s_as_b \in E(G_\pi) \leftrightarrow s_as_b \in E(G) \leftrightarrow uv \in B(G_{i-1}) \leftrightarrow uh_b \in \ca E_j \leftrightarrow s_a^js_b^j \in E(G^j)$ as required.
Note that after the third equivalence, we could have used $h_av \in \ca E_\ell$ and continued analogously.
We have shown that $\sol$ is a solution of $\pi$.

What remains to be shown is that the value of $\sol$ equals $\sum_{j \in [m]} \nu_j$.
Since $\beta(T) = \beta(T')$, where $T' = \bigcup_{j \in [m]} T_j$, it suffices to show that for each $j \in [m]$, $u \in \beta(T_j)$ and $\alpha \in I$, $G_\pi \models \psi_\alpha(s_1, \ldots, s_k, u)$ if and only if $G^j \models \psi_\alpha(s_1^j, \ldots, s_k^j, u)$.
Using the isomorphisms to $H$, we obtain that $s_a\mapsto s_a^j$ is an isomorphism from $G_\pi[\{s_1, \ldots, s_k\}]$ to $G^j[\{s_1^j, \ldots, s_k^j\}]$.
Hence, it suffices to show that for each $a \in [k]$, (1) $u = s_a \leftrightarrow u = s_a^j$, and (2) $us_a \in E(G_\pi) \leftrightarrow us_a^j \in E(G^j)$.
If $s_a = s_a^j$, then (1) clearly holds, and if $s_a \ne s_a^j$, then $s_a \notin \beta(T_j)$ and $s_a^j \in \ca V_j$, which implies $u \notin \{s_a, s_a^j\}$.
Now let us focus on (2).
Let $u_j \in T_j$ and $u' \in T$ be such that $u \in \beta(u_j) \seq \beta(u')$.
If $s_a = s_a^j \in \beta(T_j)$, then $us_a \in E(G_\pi) \leftrightarrow us_a \in E(G) \leftrightarrow us_a \in E(G^j)$ as required.
If $s_a = s_a^j \notin \beta(T_j)$, then $s_a \in \ca V$, and $us_a \in E(G_\pi) \leftrightarrow u's_a \in \ca E \leftrightarrow u_js_a \in \ca E_j \leftrightarrow us_a \in E(G^j)$ as required.
Finally, suppose $s_a \ne s_a^j$, i.e., $s_a \in \beta(T) \setminus \beta(T_j)$, and $s_a^j = h_a \in \ca V_j$.
Let $w = f'(a)$ and observe that, by definition of a solution, $s_a \in \beta(w)$.
Since $w \in D'$, we have $u_jw \notin R(G_{i-1})$, see condition~\ref{decomp-red} in the definition of a $D'$-decomposition.
Hence, we have $us_a \in E(G_\pi) \leftrightarrow us_a \in E(G) \leftrightarrow u_jw \in B(G_{i-1}) \leftrightarrow u_jh_a \in \ca E_j \leftrightarrow uh_a = us_a^j \in E(G^j)$ as required.
We have proven that the value of $\sol$ is $\sum_{j \in [m]} \nu_j$ as desired.
\end{proof}

\subsection{Model Checking Algorithm}

\begin{algorithm}[h]
\DontPrintSemicolon
\KwIn{A graph $G$, a contraction sequence for $G$ of width $d$, a finite set $I$, a quantifier-free formula $\psi_\alpha$ with $k+1$ free variables for each $\alpha \in I$, and $t \in \N$}

\For{\Each virtual $1$-profile $\pi = (T, D, \ca V, \ca E, f, H)$}{\label{mcl:for-1-profile}
$\sol := (f(1), \ldots, f(k))$\;
\lIf{$\sol$ is a solution of $\pi$}{$\sigma_1(\pi) := (\sol, \sum_{\alpha \in I}|\{y \in T \sep G_\pi \models \psi_\alpha(\sol, y)\}|)$}\label{mcl:one}
\lElse{$\sigma_1(\pi) := (\textsf{null}, 0)$}
}\label{mcl:end-first-for}
\For{$i$ \com{from} $2$ \com{to} $n$}{\label{mcl:for-i}
$v := $ the new vertex in $G_i$;
$u_1, u_2 := $ the two vertices of $G_{i-1}$ contracted into $v$\;
\For{\Each virtual $i$-profile $\pi = (T, D, \ca V, \ca E, f, H)$}{\label{mcl:for-profile}
	\lIf{$v \notin T$}{$\sigma_i(\pi) := \sigma_{i-1}(\pi)$; \textbf{continue} with next $\pi$\label{mcl:easy}}
	$\sigma_i (\pi) := (\textsf{null}, 0)$\;
	$T' := (T\setminus v)\cup \{u_1,u_2\}$\;
	\For{\Each function $f'\colon [k] \rightarrow T' \cup \ca V$ compatible with $\pi$}{\label{mcl:for-function}
		Let $\{\pi_j \sep j \in [m]\}$ be an $f'$-decomposition of $\pi$ s.t. $m \le d+2$; it exists by Lemma~\ref{lem:virtual-decomposition}.\;\label{mcl:decomposition}
		\For{\Each $j \in[m]$}{$(\sol^j, \nu_j) := \sigma_{i-1}(\pi_j)$\;
		\lIf{$\sol^j = \textsf{null}$}{\textbf{continue} with next $f'$}\label{mcl:continue}}
		$\nu := \sum_{j\in [m]} \nu_j$; $\sol := \bigoplus_{j \in [m]} \sol^j$\;\label{mcl:solution}
		\lIf{$\nu \ge \nu_0$, where $\sigma_i(\pi) := (\sol_0, \nu_0)$}{$\sigma_i(\pi) := (\sol, \nu)$
		}}
	}\label{mcl:end-for}
}
$\textsf{solution} := \textsf{null}$; $\textsf{count} := 0$; $\{u\} := V(G_n)$\;
\For{each graph $H$ with vertex set labeled as $\{h_1, \ldots, h_k\}$}{\label{mcl:last-for}
$(\sol, \nu) := \sigma_n(\{u\}, \{u\}, \emptyset, \emptyset, f, H)$, where $f(a) = u$ for each $a \in [k]$\;\label{mcl:no-virtual}
\lIf{$\nu \ge \textsf{count}$}{$\textsf{solution} := \sol$; $\textsf{count} := \nu$}
}\label{mcl:end-last-for}
\Return $\textsf{count} \ge t$

\caption{Algorithm for \textsc{\logic Model Checking}}\label{alg:model-checking}
\end{algorithm}

Now we can start proving the correctness of Algorithm~\ref{alg:model-checking}.
First, let us show that the algorithm computes an optimal solution for each virtual profile $\pi$
(or it discovers that $\pi$ has no solution).

\begin{lemma}\label{lem:mc-optimal}
For each $i \in [n]$ and each virtual $i$-profile $\pi$,
if the final value of $\sigma_i(\pi)$ computed by Algorithm~\ref{alg:model-checking} is $(\sol, \nu)$, then $\sol$ is an optimal solution of $\pi$ and $\nu$ is the value of $\sol$.
On the other hand, if $\sigma_i(\pi) = (\textsf{null}, 0)$, then $\pi$ has no solution.
\end{lemma}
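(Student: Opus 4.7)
The plan is to prove Lemma~\ref{lem:mc-optimal} by induction on $i$, following the template of Lemma~\ref{lem-ds:optimal-solution}, but now using the virtual $f'$-decompositions (Lemma~\ref{lem:virtual-decomposition}) and the combining operation $\bigoplus$ (Lemma~\ref{lem:oplus}).

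For the base case $i = 1$, since $G_1 = G$ has no red edges and $T$ is red-connected, $T = \{u\}$ for some $u \in V(G)$, hence $|\beta(u)| = 1$ and every candidate tuple is forced: if $f(a) \in \ca V$ then $s_a = f(a)$, and if $f(a) \in D$ then $s_a$ is the unique vertex of $\beta(f(a))$. So $\sol = (f(1), \ldots, f(k))$ is the only possible solution, and line~\ref{mcl:one} either confirms it is an isomorphism to $H$ and records the correct value, or leaves $\sigma_1(\pi) = (\textsf{null}, 0)$.

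For the inductive step, if $v \notin T$ then $T \seq V(G_{i-1})$ with identical bags, so $\pi$ is also a virtual $(i-1)$-profile with $G_\pi$ unchanged, and line~\ref{mcl:easy} gives the result by induction. Suppose $v \in T$. I would prove two directions. \emph{Validity:} any $(\sol, \nu)$ assigned to $\sigma_i(\pi)$ arises from an $f'$-decomposition $\{\pi_j\}_{j \in [m]}$ in which each $\sol^j$ is a valid solution of $\pi_j$ (else line~\ref{mcl:continue} would have skipped the iteration, using induction). Lemma~\ref{lem:oplus} then guarantees $\sol = \bigoplus_j \sol^j$ is a solution of $\pi$ with value $\nu$. \emph{Optimality:} given any solution $\sol^* = (s_1^*, \ldots, s_k^*)$ of $\pi$, define $f'(a) := u_1$ (resp.\ $u_2$) if $s_a^* \in \beta(u_1)$ (resp.\ $\beta(u_2)$), and $f'(a) := f(a)$ otherwise; then $f'$ is compatible with $\pi$. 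For the $f'$-decomposition $\{\pi_j\}$ found on line~\ref{mcl:decomposition}, define the restriction $\sol^{*,j}$ by $s_a^{*,j} := s_a^*$ when $f_j(a) \in D_j \cup \ca V$, and $s_a^{*,j} := h_a$ when $f_j(a) \in \ca V_j \setminus \ca V$. The key claim is that each $\sol^{*,j}$ is a solution of $\pi_j$; by the induction hypothesis, $\sigma_{i-1}(\pi_j) = (\sol^j, \nu_j)$ with $\nu_j$ at least the value of $\sol^{*,j}$, and Lemma~\ref{lem:oplus} together with the fact that $\{\beta(T_j)\}_j$ partitions $\beta(T)$ yields a solution of $\pi$ of value $\sum_j \nu_j$, which is at least the value of $\sol^*$. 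For the null case, the same construction shows that whenever $\pi$ admits a solution, some iteration produces non-null $\sol^j$'s, so $\sigma_i(\pi)$ becomes non-null; contrapositively, $\sigma_i(\pi) = (\textsf{null}, 0)$ implies $\pi$ has no solution.

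The main obstacle is verifying that the restriction $\sol^{*,j}$ is actually a solution of $\pi_j$, i.e.\ that $s_a^{*,j} \mapsto h_a$ is an isomorphism from $G_{\pi_j}[\{s_1^{*,j}, \ldots, s_k^{*,j}\}]$ to $H$ and that $s_a^{*,j} \in \beta(f_j(a))$ or $s_a^{*,j} = f_j(a) \in \ca V_j$, as required by Definition~\ref{def:virtual-profiles}. This requires a case analysis on pairs $(a,b)$ depending on whether $s_a^*$ and $s_b^*$ lie inside the same part $\beta(T_j)$, in different parts, or involve $\ca V$; the three sub-terms of $\ca E_j$ in the definition of an $f'$-decomposition were tailored precisely to reproduce, respectively, intra-part original edges, cross-part black edges lifted from $G_{i-1}$, and edges of $H$ between two virtualised variables, and the absence of cross-part red edges (condition~\ref{decomp-red}) guarantees that any $G$-edge between $\beta(T_j)$ and $\beta(T_\ell)$ corresponds to a black $G_{i-1}$-edge between the relevant pair in $D'$. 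The argument also relies on the fact that the value of $\sol^*$ decomposes additively over $j$, which follows from the same isomorphism transport and from the partition $\beta(T) = \bigsqcup_j \beta(T_j)$, in a manner exactly parallel to the value computation in the proof of Lemma~\ref{lem:oplus}.
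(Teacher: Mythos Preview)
Your proposal is correct and follows essentially the same inductive argument as the paper: same base case, same handling of $v\notin T$, same construction of the compatible $f'$ from an optimal $\sol^*$, same restriction tuples $\sol^{*,j}$, and the same appeal to Lemma~\ref{lem:oplus} to compare $\sum_j \nu_j$ with $\sum_j \nu_j^*$. In fact you are more explicit than the paper about the one genuinely non-trivial step---showing that each $\sol^{*,j}$ is a solution of $\pi_j$---which the paper simply asserts; your outlined case analysis (intra-part, cross-part via black edges and condition~\ref{decomp-red}, and virtual-to-virtual via $H$) is exactly what is needed and matches the design of the three terms defining $\ca E_j$.
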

\begin{proof}
Let $i \in [n]$ and $\pi = (T, D, \ca V, \ca E, H)$ be a virtual $i$-profile.
First, suppose that $i = 1$.
Since $G = G_1$ has no red edges, $T = \{u\}$ for some $u \in V(G)$.
Recall that $\beta(u) = \{u\}$.
Hence, by Definition~\ref{def:virtual-profiles}, $(f(1), \ldots, f(k))$ is the only possible solution of $\pi$, and its value is either 0 or 1, see line~\ref{mcl:one}.
Now suppose that $i > 1$.
Let $v \in V(G_i)$ be the new vertex in $G_i$ and let $u_1, u_2 \in V(G_{i-1})$ be the two vertices contracted into $v$.
If $v \notin T$, then $\pi$ is also an $(i-1)$-profile and, by induction hypothesis, $\sigma_i(\pi) = \sigma_{i-1}(\pi)$ is an optimal solution of $\pi$ (or $(\textsf{null}, 0)$), see line~\ref{mcl:easy}.
Hence, let us assume that $v \in T$ and let $T' = (T\setminus v)\cup \{u_1,u_2\}$.

If $\pi$ has no solution, then for each function $f'$ compatible with $\pi$, the iteration of the loop on lines~\ref{mcl:for-function}-\ref{mcl:end-for} processing $f'$ must be exited on line~\ref{mcl:continue}; otherwise, a solution of $\pi$ would be found on line~\ref{mcl:solution} (by induction hypothesis and Lemma~\ref{lem:oplus}). Hence, $\sigma_i(\pi)$ always equals $(\textsf{null}, 0)$ as required.
Suppose that $\pi$ has a solution.
Let $\sol^* = (s_1^*, \ldots, s_k^*)$ be an optimal solution of $\pi$, let $\nu^*$ be the value of $S^*$, and let $f'\colon [k] \rightarrow T' \cup \ca V$ be such that $f'(a) = s_a^*$ if $s_a^* \in \ca V$, and $s_a^* \in \beta(f'(a))$ otherwise.
Observe that $f'$ is compatible with $\pi$, and let us consider the iteration of the loop on lines~\ref{mcl:for-profile}-\ref{mcl:end-for} processing $\pi$ and the iteration of the loop on lines~\ref{mcl:for-function}-\ref{mcl:end-for} processing $f'$.
Let $\{\pi_j \sep j \in [m]\}$ be the $f'$-decomposition of $\pi$ found on line~\ref{mcl:decomposition}.
For each $j \in [m]$, let $\pi_j = (T_j, D_j, \ca V_j,\ca E_j, H)$, let $\sol_j$ and $\nu_j$ be as in the algorithm, and let $\nu$ and $\sol$ be the values computed on line~\ref{mcl:solution}.
By Lemma~\ref{lem:oplus}, $\nu$ is the value of $\sol$.

For each $j \in [m]$, let $\sol^{j*} = (s_1^{j*}, \ldots, s_k^{j*})$ be the solution of $\pi_j$ such that for each $a \in [k]$, $s_a^{j*} = s_a^*$ if $s_a^* \in \beta(T_j) \cup\ca V$, and $s_a^{j*} = h_a$ otherwise.
Observe that $\sol^* = \bigoplus_{j\in[m]} \sol^{j*}$, and so, by Lemma~\ref{lem:oplus}, $\nu^* = \sum_{j\in[m]} \nu^*_j$, where $\nu^{*}_j$ is the value of $\sol^{j*}$.
Suppose for contradiction that $\nu < \nu^*$.
Since $\beta(T) = \beta(T')$ and $T' = \bigcup_{j \in [m]} T_j$, there is $j \in [m]$ such that $\nu_j <  \nu^{*}_j$, which is a contradiction since $\sol^j$ is an optimal solution of $\pi_j$ by the induction hypothesis.
Hence, $\sol$ is an optimal solution of $\pi$, too.

After the iteration of the loop on lines~\ref{mcl:for-function}-\ref{mcl:end-for} processing $f'$, $\sigma_i(\pi) = (\sol, \nu)$.
Observe that the second component of $\sigma_i(\pi)$ cannot change in any of the following iterations because $\nu$ is always the value of some solution of $\pi$ by Lemma~\ref{lem:oplus}.
Hence, after all functions compatible with $\pi$ have been processed, $\sigma_i(\pi)$ is indeed an optimal solution of $\pi$ and its value, as desired.
\end{proof}

Using Lemma~\ref{lem:mc-optimal}, it is easy to show the correctness of Algorithm~\ref{alg:model-checking}.

\begin{theorem}\label{thm:model-checking}
Algorithm~\ref{alg:model-checking} decides the \textsc{\logic Model Checking using twin-width} problem in time $2^{\ca O(k^2d\log(d))}n$.
\end{theorem}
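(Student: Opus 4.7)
The plan is to mirror the structure of the proof of Theorem~\ref{thm:dominating-set}, relying on the heavy lifting already done in Lemma~\ref{lem:mc-optimal}: it guarantees that for every virtual profile $\pi$, the final value of $\sigma_i(\pi)$ is either an optimal solution of $\pi$ together with its value, or the sentinel $(\textsf{null}, 0)$ when $\pi$ has no solution at all. Given this, only two things remain: argue that the final loop on lines~\ref{mcl:last-for}–\ref{mcl:end-last-for} actually witnesses the maximum of $\sum_{\alpha \in I}\#y\,\psi_\alpha(v_1,\dots,v_k,y)$ over all tuples $(v_1,\dots,v_k)\in V(G)^k$, and bound the running time by combining the counting lemmas.

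For correctness I would proceed as follows. Let $\{u\} = V(G_n)$; note $\beta(u) = V(G)$. For every tuple $\overline{v} = (v_1,\dots,v_k) \in V(G)^k$, let $H_{\overline{v}}$ be the graph on the label set $\{h_1,\dots,h_k\}$ obtained from $G[\{v_1,\dots,v_k\}]$ via $v_a \mapsto h_a$ (identifying labels whenever $v_a = v_b$); then consider the virtual $n$-profile $\pi^* = (\{u\},\{u\},\emptyset,\emptyset,f^*,H_{\overline{v}})$ with $f^*(a) = u$ for every $a \in [k]$. By Definition~\ref{def:virtual-profiles}, the expanded graph $G_{\pi^*}$ coincides with $G$ and $\overline{v}$ is a solution of $\pi^*$ whose value equals $\sum_\alpha \#y\,\psi_\alpha(\overline{v},y)$. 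Conversely, every solution of a profile considered on line~\ref{mcl:no-virtual} is a tuple in $V(G)^k$ whose value is precisely this sum. Hence, as Lemma~\ref{lem:mc-optimal} yields an optimal solution of every $\pi^*$, the quantity $\textsf{count}$ computed at the end of the final loop equals $\max_{\overline{v}\in V(G)^k}\sum_\alpha \#y\,\psi_\alpha(\overline{v},y)$, and the algorithm returns YES iff this maximum is at least $t$, exactly as required.

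For the running time, I would proceed level by level. The first loop on lines~\ref{mcl:for-1-profile}–\ref{mcl:end-first-for} is charged once per virtual $1$-profile, each containing a unique $u \in V(G)$; by Lemma~\ref{lem:number-of-virtual} this contributes $2^{\ca O(k^2 d \log d)} \cdot n$ overall, and testing whether the candidate tuple is a solution and computing its value take time polynomial in $k$ and $|I|$, which is absorbed into the parameter dependency. For $i \in [2,n]$, Lemma~\ref{lem:number-of-virtual} bounds the number of virtual $i$-profiles containing the new vertex $v$ by $2^{\ca O(k^2 d\log d)}$; profiles with $v \notin T$ are handled in constant time on line~\ref{mcl:easy}. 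For each such profile, the number of functions $f'$ compatible with $\pi$ is at most $2^k$ (each of the at most $k$ coordinates with $f(a)=v$ chooses between $u_1$ and $u_2$), and for each $f'$, Lemma~\ref{lem:virtual-decomposition} yields a decomposition of size at most $d+2$ in time $\ca O(k^2 d^2)$, after which combining via $\bigoplus$ costs $\ca O(kd)$. Thus a single step of the outer loop is $2^{\ca O(k^2 d \log d)}$, giving the claimed $2^{\ca O(k^2 d\log d)} n$ total. The final loop adds a further $2^{\ca O(k^2)}$ term (the number of labelled graphs $H$), which is absorbed.

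The main obstacle I expect is the correctness direction — specifically, making sure that ranging only over the profiles $\pi^*$ of the form considered on line~\ref{mcl:no-virtual} truly explores all $|V(G)|^k$ tuples, and that the value computed matches $\sum_\alpha \#y\,\psi_\alpha(\overline{v},y)$ without double-counting or missing cases coming from repeated coordinates ($v_a = v_b$). This is resolved by carefully setting $H = H_{\overline{v}}$ and observing that the isomorphism condition in Definition~\ref{def:virtual-profiles} forces the correct identification pattern among the $s_a$'s, so the bijection between tuples in $V(G)^k$ and solutions of the $\pi^*$'s (for varying $H$) is exact. Once this point is nailed down, everything else is immediate from the preceding lemmas.
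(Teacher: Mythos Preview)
Your proposal is correct and follows essentially the same approach as the paper's proof: reduce correctness to Lemma~\ref{lem:mc-optimal} by observing that every tuple in $V(G)^k$ is a solution of some profile $\pi^*$ considered on line~\ref{mcl:no-virtual} (and conversely), and bound the running time via Lemmas~\ref{lem:number-of-virtual} and~\ref{lem:virtual-decomposition} together with the $2^k$ bound on compatible $f'$. The one point you gloss over that the paper makes explicit is that a single table $\sigma$ must be maintained instead of separate $\sigma_i$'s, so that profiles with $v\notin T$ need not be re-enumerated at step~$i$; without this, ``handled in constant time'' per profile is not enough, since there can be $\Theta(n)\cdot 2^{\ca O(k^2 d\log d)}$ such profiles at each of the $n$ steps, yielding a quadratic dependence on~$n$.
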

\begin{proof}
Suppose that the algorithm is given a graph $G$, a contraction sequence $(G_1 = G, \ldots, G_n)$ of width $d$, and a formula $\phi \equiv \exists x_1 \cdots \exists x_k \sum_{\alpha \in I} \# y\, \psi_\alpha(x_1, \ldots, x_k, y) \ge t$.
Let us say that a graph $H$ with vertex set labeled as $\{h_1, \ldots, h_k\}$ is a \emph{template graph}.
Let us consider an iteration of the loop on lines~\ref{mcl:last-for}-\ref{mcl:end-last-for} processing a template graph $H$, 
let $\pi_H = (\{u\}, \{u\}, \emptyset, \emptyset, f, H)$ be as on line~\ref{mcl:no-virtual}, and let $(\sol_H, \nu_H)$ be the tuple computed on line~\ref{mcl:no-virtual}.
By Lemma~\ref{lem:mc-optimal}, $\sol_H$ is an optimal solution of $\pi_H$ with value $\nu_H$, or $(\sol_H, \nu_H) = (\textsf{null}, 0)$ if $\pi_H$ has no solution.
Since there are no virtual vertices in $\pi_H$, if $\pi_H$ has a solution, then $\sol_H$ is a tuple in $V(G)^k$, and $\nu_H = \sum_{\alpha \in I} |\{u \in V(G) \sep G\models \psi_\alpha(\sol, u)\}|$.

Let $(\sol, \nu)$ the final value of $(\textsf{solution, count)}$ computed by Algorithm~\ref{alg:model-checking}.
If $\sol = \textsf{null}$ and $\nu = 0$, then $\pi_H$ has no solution for any template graph $H$, which means that for any tuple $\sol' \in V(G)^k$, $\sum_{\alpha \in I}|\{u \in V(G) \sep G\models \psi_\alpha(\sol', u)\}| = 0$, and $(G, \phi)$ is a YES-instance if and only if $t = 0$.
Otherwise, $\sol \in V(G)^k$, $\nu = \sum_{\alpha \in I}|\{u \in V(G) \sep G\models \psi_\alpha(\sol, u)\}|$, and $\nu = \max\{\nu_H \sep H$ is a template graph$\}$.
Clearly, for each tuple $(s_1,\ldots, s_k) \in V(G)^k$, there is a template graph $H$ such that $s_a \mapsto h_a$ is an isomorphism from $G[\{s_1,\ldots, s_k\}]$ to $H$, which means that $(G, \phi)$ is a YES-instance if and only if $\nu \ge t$.
This shows the correctness of Algorithm~\ref{alg:model-checking}.

Now we need to argue that Algorithm~\ref{alg:model-checking} achieves the desired running time.
As in the proof of Theorem~\ref{thm:dominating-set}, we maintain a single function $\sigma$ (instead of $\sigma_i$ for each $i\in [n]$).
First, observe that by Lemma~\ref{lem:number-of-virtual}, there are at most $2^{\ca O(k^2d\log(d))}\cdot n$ virtual $1$-profiles. Hence, executing the for loop on lines~\ref{mcl:for-1-profile}-\ref{mcl:end-first-for} takes time at most $2^{\ca O(k^2d\log(d))}\cdot n$.
Second, observe that there are $n-1$ iterations of the loop on lines~\ref{mcl:for-i}-\ref{mcl:end-for}; let us consider the iteration processing $i \in [2, n]$ and let $v$ be the new vertex in $G_i$.
Let $\pi = (T, D, \ca V, \ca E, f, H)$ be a virtual $i$-profile.
Since $\sigma(\pi)$ has already been computed if $v \notin T$, it suffices to consider only profiles such that $v \in T$; these profiles can be enumerated in time $2^{\ca O(k^2d\log(d))}$ by Lemma~\ref{lem:number-of-virtual}.
There are at most $2^k$ functions $f'$ compatible with $\pi$ and for each of them, we can find an $f'$-decomposition of size at most $m \le d+2$ in time $\ca O(k^2d^2)$ by Lemma~\ref{lem:virtual-decomposition}.
Hence, the running time of one iteration of the loop is $2^{\ca O(kd\log(kd))}$.
Finally, since there are $2^{\ca O(k^2)}$ choices of $H$, executing the loop on lines~\ref{mcl:last-for}-\ref{mcl:end-last-for} takes time $2^{\ca O(k^2)}$.
In total, the running time is $2^{\ca O(k^2d\log(d))}\cdot n$ as required.
\end{proof}

\section{Conclusion}

We have demonstrated that several optimization problems requiring
counting can be efficiently solved on graphs with small twin-width.
This encompasses not only well-known problems such as {\sc Partial
Vertex Cover} and {\sc Partial Dominating Set}, but also all
problems expressible by a restricted fragment of first-order counting
logic. It would be desirable to extend this result by providing a more
comprehensive fragment that accommodates additional problems. For
bounded expansion, we can utilize formulas of the form $\exists
x_1\cdots\exists x_k\#y\,\phi(x_1,\ldots,x_k)$, where $\phi$ is an
arbitrary first-order formula rather than a quantifier-free one, and
even slightly more complex formulas~\cite{DreierR2021,MockR2024}. Can
we achieve the same for graphs of bounded twin-width? Moreover, we can
handle even more complicated formulas with nested counting quantifiers
\emph{approximately}~\cite{DreierR2021}. Once again, does the same or a
similar result hold true for bounded twin-width?  It should be noted
that the best result for nowhere-dense classes is basically the same
as our result for twin-width~\cite{DreierMR2023}.

\newpage

\bibliography{arxiv.bib}

\begin{thebibliography}{10}

\bibitem{AlberBFKN2002}
Jochen Alber, Hans~L. Bodlaender, Henning Fernau, Ton Kloks, and Rolf
  Niedermeier.
\newblock Fixed parameter algorithms for {DOMINATING} {SET} and related
  problems on planar graphs.
\newblock {\em Algorithmica}, 33(4):461--493, 2002.
\newblock \href {https://doi.org/10.1007/S00453-001-0116-5}
  {\path{doi:10.1007/S00453-001-0116-5}}.

\bibitem{AttiyaW2004}
Hagit Attiya and Jennifer~L. Welch.
\newblock {\em Distributed computing - fundamentals, simulations, and advanced
  topics {(2.} ed.)}.
\newblock Wiley series on parallel and distributed computing. Wiley, 2004.

\bibitem{DBLP:journals/corr/abs-2504-18218}
Jakub Balab{\'{a}}n, Daniel Mock, and Peter Rossmanith.
\newblock Solving partial dominating set and related problems using twin-width.
\newblock {\em CoRR}, abs/2504.18218, 2025.
\newblock URL: \url{https://doi.org/10.48550/arXiv.2504.18218}, \href
  {https://arxiv.org/abs/2504.18218} {\path{arXiv:2504.18218}}, \href
  {https://doi.org/10.48550/ARXIV.2504.18218}
  {\path{doi:10.48550/ARXIV.2504.18218}}.

\bibitem{BalasubramanianFR1998}
R.~Balasubramanian, Michael~R. Fellows, and Venkatesh Raman.
\newblock An improved fixed-parameter algorithm for vertex cover.
\newblock {\em Inf. Process. Lett.}, 65(3):163--168, 1998.
\newblock \href {https://doi.org/10.1016/S0020-0190(97)00213-5}
  {\path{doi:10.1016/S0020-0190(97)00213-5}}.

\bibitem{tww3}
{\'{E}}douard Bonnet, Colin Geniet, Eun~Jung Kim, St{\'{e}}phan Thomass{\'{e}},
  and R{\'{e}}mi Watrigant.
\newblock Twin-width {III:} max independent set, min dominating set, and
  coloring.
\newblock {\em {SIAM} J. Comput.}, 53(5):1602--1640, 2024.
\newblock \href {https://doi.org/10.1137/21M142188X}
  {\path{doi:10.1137/21M142188X}}.

\bibitem{BonnetKTW2020}
{\'{E}}douard Bonnet, Eun~Jung Kim, St{\'{e}}phan Thomass{\'{e}}, and
  R{\'{e}}mi Watrigant.
\newblock Twin-width {I:} tractable {FO} model checking.
\newblock In Sandy Irani, editor, {\em 61st {IEEE} Annual Symposium on
  Foundations of Computer Science, {FOCS} 2020, Durham, NC, USA, November
  16-19, 2020}, pages 601--612. {IEEE}, 2020.
\newblock \href {https://doi.org/10.1109/FOCS46700.2020.00062}
  {\path{doi:10.1109/FOCS46700.2020.00062}}.

\bibitem{ChandranG2005}
L.~Sunil Chandran and Fabrizio Grandoni.
\newblock Refined memorization for vertex cover.
\newblock {\em Inf. Process. Lett.}, 93(3):123--131, 2005.
\newblock \href {https://doi.org/10.1016/J.IPL.2004.10.003}
  {\path{doi:10.1016/J.IPL.2004.10.003}}.

\bibitem{ChenKJ2001}
Jianer Chen, Iyad~A. Kanj, and Weijia Jia.
\newblock Vertex cover: Further observations and further improvements.
\newblock {\em J. Algorithms}, 41(2):280--301, 2001.
\newblock \href {https://doi.org/10.1006/JAGM.2001.1186}
  {\path{doi:10.1006/JAGM.2001.1186}}.

\bibitem{ChenKX2010}
Jianer Chen, Iyad~A. Kanj, and Ge~Xia.
\newblock Improved upper bounds for vertex cover.
\newblock {\em Theor. Comput. Sci.}, 411(40-42):3736--3756, 2010.
\newblock \href {https://doi.org/10.1016/J.TCS.2010.06.026}
  {\path{doi:10.1016/J.TCS.2010.06.026}}.

\bibitem{ChenLJ2000}
Jianer Chen, Lihua Liu, and Weijia Jia.
\newblock Improvement on vertex cover for low-degree graphs.
\newblock {\em Networks}, 35(4):253--259, 2000.
\newblock \href
  {https://doi.org/10.1002/1097-0037(200007)35:4\%3C253::AID-NET3\%3E3.0.CO;2-K}
  {\path{doi:10.1002/1097-0037(200007)35:4\%3C253::AID-NET3\%3E3.0.CO;2-K}}.

\bibitem{DawarGK2007}
Anuj Dawar, Martin Grohe, and Stephan Kreutzer.
\newblock Locally excluding a minor.
\newblock In {\em 22nd {IEEE} Symposium on Logic in Computer Science {(LICS}
  2007), 10-12 July 2007, Wroclaw, Poland, Proceedings}, pages 270--279. {IEEE}
  Computer Society, 2007.
\newblock \href {https://doi.org/10.1109/LICS.2007.31}
  {\path{doi:10.1109/LICS.2007.31}}.

\bibitem{DawarK2009}
Anuj Dawar and Stephan Kreutzer.
\newblock Domination problems in nowhere-dense classes.
\newblock In Ravi Kannan and K.~Narayan Kumar, editors, {\em {IARCS} Annual
  Conference on Foundations of Software Technology and Theoretical Computer
  Science, {FSTTCS} 2009, December 15-17, 2009, {IIT} Kanpur, India}, volume~4
  of {\em LIPIcs}, pages 157--168. Schloss Dagstuhl - Leibniz-Zentrum f{\"{u}}r
  Informatik, 2009.
\newblock \href {https://doi.org/10.4230/LIPIcs.FSTTCS.2009.2315}
  {\path{doi:10.4230/LIPIcs.FSTTCS.2009.2315}}.

\bibitem{DemaineFHT2005}
Erik~D. Demaine, Fedor~V. Fomin, Mohammad~Taghi Hajiaghayi, and Dimitrios~M.
  Thilikos.
\newblock Subexponential parameterized algorithms on bounded-genus graphs and
  \emph{H}-minor-free graphs.
\newblock {\em J. {ACM}}, 52(6):866--893, 2005.
\newblock \href {https://doi.org/10.1145/1101821.1101823}
  {\path{doi:10.1145/1101821.1101823}}.

\bibitem{DowneyF1992}
Rodney~G. Downey and Michael~R. Fellows.
\newblock Fixed parameter tractability and completeness {III:} some structural
  aspects of the {W} hierarchy.
\newblock In Klaus Ambos{-}Spies, Steven Homer, and Uwe Sch{\"{o}}ning,
  editors, {\em Complexity Theory: Current Research, Dagstuhl Workshop,
  February 2-8, 1992}, pages 191--225. Cambridge University Press, 1992.

\bibitem{DowneyF1999}
Rodney~G. Downey and Michael~R. Fellows.
\newblock {\em Parameterized Complexity}.
\newblock Monographs in Computer Science. Springer, 1999.
\newblock \href {https://doi.org/10.1007/978-1-4612-0515-9}
  {\path{doi:10.1007/978-1-4612-0515-9}}.

\bibitem{DreierEMMPT2024}
Jan Dreier, Ioannis Eleftheriadis, Nikolas M{\"{a}}hlmann, Rose McCarty, Michal
  Pilipczuk, and Szymon Torunczyk.
\newblock First-order model checking on monadically stable graph classes.
\newblock In {\em 65th {IEEE} Annual Symposium on Foundations of Computer
  Science, {FOCS} 2024, Chicago, IL, USA, October 27-30, 2024}, pages 21--30.
  {IEEE}, 2024.
\newblock \href {https://doi.org/10.1109/FOCS61266.2024.00012}
  {\path{doi:10.1109/FOCS61266.2024.00012}}.

\bibitem{DreierMR2023}
Jan Dreier, Daniel Mock, and Peter Rossmanith.
\newblock Evaluating restricted first-order counting properties on nowhere
  dense classes and beyond.
\newblock In Inge~Li G{\o}rtz, Martin Farach{-}Colton, Simon~J. Puglisi, and
  Grzegorz Herman, editors, {\em 31st Annual European Symposium on Algorithms,
  {ESA} 2023, September 4-6, 2023, Amsterdam, The Netherlands}, volume 274 of
  {\em LIPIcs}, pages 43:1--43:17. Schloss Dagstuhl - Leibniz-Zentrum f{\"{u}}r
  Informatik, 2023.
\newblock \href {https://doi.org/10.4230/LIPICS.ESA.2023.43}
  {\path{doi:10.4230/LIPICS.ESA.2023.43}}.

\bibitem{DreierR2021}
Jan Dreier and Peter Rossmanith.
\newblock Approximate evaluation of first-order counting queries.
\newblock In D{\'{a}}niel Marx, editor, {\em Proceedings of the 2021 {ACM-SIAM}
  Symposium on Discrete Algorithms, {SODA} 2021, Virtual Conference, January 10
  - 13, 2021}, pages 1720--1739. {SIAM}, 2021.
\newblock \href {https://doi.org/10.1137/1.9781611976465.104}
  {\path{doi:10.1137/1.9781611976465.104}}.

\bibitem{EllisFF2004}
John~A. Ellis, Hongbing Fan, and Michael~R. Fellows.
\newblock The dominating set problem is fixed parameter tractable for graphs of
  bounded genus.
\newblock {\em J. Algorithms}, 52(2):152--168, 2004.
\newblock \href {https://doi.org/10.1016/J.JALGOR.2004.02.001}
  {\path{doi:10.1016/J.JALGOR.2004.02.001}}.

\bibitem{FlumG2001}
J{\"{o}}rg Flum and Martin Grohe.
\newblock Fixed-parameter tractability, definability, and model-checking.
\newblock {\em {SIAM} J. Comput.}, 31(1):113--145, 2001.
\newblock \href {https://doi.org/10.1137/S0097539799360768}
  {\path{doi:10.1137/S0097539799360768}}.

\bibitem{FominGIK2024}
Fedor~V. Fomin, Petr~A. Golovach, Tanmay Inamdar, and Tomohiro Koana.
\newblock {FPT} approximation and subexponential algorithms for covering few or
  many edges.
\newblock {\em Inf. Process. Lett.}, 185:106471, 2024.
\newblock \href {https://doi.org/10.1016/J.IPL.2024.106471}
  {\path{doi:10.1016/J.IPL.2024.106471}}.

\bibitem{FrickG2001}
Markus Frick and Martin Grohe.
\newblock Deciding first-order properties of locally tree-decomposable
  structures.
\newblock {\em J. {ACM}}, 48(6):1184--1206, 2001.
\newblock \href {https://doi.org/10.1145/504794.504798}
  {\path{doi:10.1145/504794.504798}}.

\bibitem{DBLP:conf/icalp/GajarskyPPT22}
Jakub Gajarsk{\'{y}}, Michal Pilipczuk, Wojciech Przybyszewski, and Szymon
  Torunczyk.
\newblock Twin-width and types.
\newblock In Mikolaj Bojanczyk, Emanuela Merelli, and David~P. Woodruff,
  editors, {\em 49th International Colloquium on Automata, Languages, and
  Programming, {ICALP} 2022, July 4-8, 2022, Paris, France}, volume 229 of {\em
  LIPIcs}, pages 123:1--123:21. Schloss Dagstuhl - Leibniz-Zentrum f{\"{u}}r
  Informatik, 2022.
\newblock \href {https://doi.org/10.4230/LIPICS.ICALP.2022.123}
  {\path{doi:10.4230/LIPICS.ICALP.2022.123}}.

\bibitem{GolovachV2008}
Petr~A. Golovach and Yngve Villanger.
\newblock Parameterized complexity for domination problems on degenerate
  graphs.
\newblock In Hajo Broersma, Thomas Erlebach, Tom Friedetzky, and Dani{\"{e}}l
  Paulusma, editors, {\em Graph-Theoretic Concepts in Computer Science, 34th
  International Workshop, {WG} 2008, Durham, UK, June 30 - July 2, 2008.
  Revised Papers}, volume 5344 of {\em Lecture Notes in Computer Science},
  pages 195--205, 2008.
\newblock \href {https://doi.org/10.1007/978-3-540-92248-3\_18}
  {\path{doi:10.1007/978-3-540-92248-3\_18}}.

\bibitem{GroheKS2014}
Martin Grohe, Stephan Kreutzer, and Sebastian Siebertz.
\newblock Deciding first-order properties of nowhere dense graphs.
\newblock In David~B. Shmoys, editor, {\em Symposium on Theory of Computing,
  {STOC} 2014, New York, NY, USA, May 31 - June 03, 2014}, pages 89--98. {ACM},
  2014.
\newblock \href {https://doi.org/10.1145/2591796.2591851}
  {\path{doi:10.1145/2591796.2591851}}.

\bibitem{JiongNW2007}
Jiong Guo, Rolf Niedermeier, and Sebastian Wernicke.
\newblock Parameterized complexity of vertex cover variants.
\newblock {\em Theory Comput. Syst.}, 41:501--520, 10 2007.
\newblock \href {https://doi.org/10.1007/s00224-007-1309-3}
  {\path{doi:10.1007/s00224-007-1309-3}}.

\bibitem{HotaPP2001}
Mrinmoy Hota, Madhumangal Pal, and Tapan~Kumar Pal.
\newblock An efficient algorithm for finding a maximum weight k-independent set
  on trapezoid graphs.
\newblock {\em Comput. Optim. Appl.}, 18(1):49--62, 2001.
\newblock \href {https://doi.org/10.1023/A:1008791627588}
  {\path{doi:10.1023/A:1008791627588}}.

\bibitem{KaneshMRSS23}
Lawqueen Kanesh, Jayakrishnan Madathil, Sanjukta Roy, Abhishek Sahu, and Saket
  Saurabh.
\newblock Further exploiting \emph{c}-closure for {FPT} algorithms and kernels
  for domination problems.
\newblock {\em {SIAM} J. Discret. Math.}, 37(4):2626--2669, 2023.
\newblock \href {https://doi.org/10.1137/22M1491721}
  {\path{doi:10.1137/22M1491721}}.

\bibitem{KneisMR2007}
Joachim Kneis, Daniel M{\"{o}}lle, and Peter Rossmanith.
\newblock Partial vs. complete domination: $t$-dominating set.
\newblock In Jan van Leeuwen, Giuseppe~F. Italiano, Wiebe van~der Hoek,
  Christoph Meinel, Harald Sack, and Frantisek Plasil, editors, {\em {SOFSEM}
  2007: Theory and Practice of Computer Science, 33rd Conference on Current
  Trends in Theory and Practice of Computer Science, Harrachov, Czech Republic,
  January 20-26, 2007, Proceedings}, volume 4362 of {\em Lecture Notes in
  Computer Science}, pages 367--376. Springer, 2007.
\newblock \href {https://doi.org/10.1007/978-3-540-69507-3\_31}
  {\path{doi:10.1007/978-3-540-69507-3\_31}}.

\bibitem{DBLP:journals/siamdm/KoanaKS22}
Tomohiro Koana, Christian Komusiewicz, and Frank Sommer.
\newblock Exploiting {\textdollar}c{\textdollar}-closure in kernelization
  algorithms for graph problems.
\newblock {\em {SIAM} J. Discret. Math.}, 36(4):2798--2821, 2022.
\newblock URL: \url{https://doi.org/10.1137/21m1449476}, \href
  {https://doi.org/10.1137/21M1449476} {\path{doi:10.1137/21M1449476}}.

\bibitem{KuskeS2017}
Dietrich Kuske and Nicole Schweikardt.
\newblock First-order logic with counting.
\newblock In {\em 32nd Annual {ACM/IEEE} Symposium on Logic in Computer
  Science, {LICS} 2017, Reykjavik, Iceland, June 20-23, 2017}, pages 1--12.
  {IEEE} Computer Society, 2017.
\newblock \href {https://doi.org/10.1109/LICS.2017.8005133}
  {\path{doi:10.1109/LICS.2017.8005133}}.

\bibitem{MkrtchyanP2022}
Vahan Mkrtchyan and Garik Petrosyan.
\newblock On the fixed-parameter tractability of the partial vertex cover
  problem with a matching constraint in edge-weighted bipartite graphs.
\newblock {\em J. Graph Algorithms Appl.}, 26(1):91--110, 2022.
\newblock \href {https://doi.org/10.7155/JGAA.00584}
  {\path{doi:10.7155/JGAA.00584}}.

\bibitem{MkrtchyanPSW2024}
Vahan Mkrtchyan, Garik Petrosyan, K.~Subramani, and Piotr Wojciechowski.
\newblock On the partial vertex cover problem in bipartite graphs - a
  parameterized perspective.
\newblock {\em Theory Comput. Syst.}, 68(1):122--143, 2024.
\newblock \href {https://doi.org/10.1007/S00224-023-10152-W}
  {\path{doi:10.1007/S00224-023-10152-W}}.

\bibitem{MockR2024}
Daniel Mock and Peter Rossmanith.
\newblock Solving a family of multivariate optimization and decision problems
  on classes of bounded expansion.
\newblock In Hans~L. Bodlaender, editor, {\em 19th Scandinavian Symposium and
  Workshops on Algorithm Theory, {SWAT} 2024, June 12-14, 2024, Helsinki,
  Finland}, volume 294 of {\em LIPIcs}, pages 35:1--35:18. Schloss Dagstuhl -
  Leibniz-Zentrum f{\"{u}}r Informatik, 2024.
\newblock \href {https://doi.org/10.4230/LIPICS.SWAT.2024.35}
  {\path{doi:10.4230/LIPICS.SWAT.2024.35}}.

\bibitem{NesetrilM2012}
Jaroslav Nesetril and Patrice~Ossona de~Mendez.
\newblock {\em Sparsity - Graphs, Structures, and Algorithms}, volume~28 of
  {\em Algorithms and combinatorics}.
\newblock Springer, 2012.
\newblock \href {https://doi.org/10.1007/978-3-642-27875-4}
  {\path{doi:10.1007/978-3-642-27875-4}}.

\bibitem{NiedermeierR1999}
Rolf Niedermeier and Peter Rossmanith.
\newblock Upper bounds for vertex cover further improved.
\newblock In Christoph Meinel and Sophie Tison, editors, {\em {STACS} 99, 16th
  Annual Symposium on Theoretical Aspects of Computer Science, Trier, Germany,
  March 4-6, 1999, Proceedings}, volume 1563 of {\em Lecture Notes in Computer
  Science}, pages 561--570. Springer, 1999.
\newblock \href {https://doi.org/10.1007/3-540-49116-3\_53}
  {\path{doi:10.1007/3-540-49116-3\_53}}.

\bibitem{NiedermeierR2003}
Rolf Niedermeier and Peter Rossmanith.
\newblock On efficient fixed-parameter algorithms for weighted vertex cover.
\newblock {\em J. Algorithms}, 47(2):63--77, 2003.
\newblock \href {https://doi.org/10.1016/S0196-6774(03)00005-1}
  {\path{doi:10.1016/S0196-6774(03)00005-1}}.

\bibitem{PhilipRS2009}
Geevarghese Philip, Venkatesh Raman, and Somnath Sikdar.
\newblock Solving dominating set in larger classes of graphs: {FPT} algorithms
  and polynomial kernels.
\newblock In Amos Fiat and Peter Sanders, editors, {\em Algorithms - {ESA}
  2009, 17th Annual European Symposium, Copenhagen, Denmark, September 7-9,
  2009. Proceedings}, volume 5757 of {\em Lecture Notes in Computer Science},
  pages 694--705. Springer, 2009.
\newblock \href {https://doi.org/10.1007/978-3-642-04128-0\_62}
  {\path{doi:10.1007/978-3-642-04128-0\_62}}.

\bibitem{Seese1996}
Detlef Seese.
\newblock Linear time computable problems and first-order descriptions.
\newblock {\em Math. Struct. Comput. Sci.}, 6(6):505--526, 1996.
\newblock \href {https://doi.org/10.1017/s0960129500070079}
  {\path{doi:10.1017/s0960129500070079}}.

\bibitem{StegeRHH2002}
Ulrike Stege, Iris van Rooij, Alexander Hertel, and Philipp Hertel.
\newblock An o(pn + 1.151\({}^{\mbox{p}}\))-algorithm for p-profit cover and
  its practical implications for vertex cover.
\newblock In Prosenjit Bose and Pat Morin, editors, {\em Algorithms and
  Computation, 13th International Symposium, {ISAAC} 2002 Vancouver, BC,
  Canada, November 21-23, 2002, Proceedings}, volume 2518 of {\em Lecture Notes
  in Computer Science}, pages 249--261. Springer, 2002.
\newblock \href {https://doi.org/10.1007/3-540-36136-7\_23}
  {\path{doi:10.1007/3-540-36136-7\_23}}.

\end{thebibliography}

\end{document}